\newtheorem{theorem}{Theorem}[section]
\newtheorem{lemma}[theorem]{Lemma}
\newtheorem{conj}[theorem]{Conjecture}
\newtheorem{claim}[theorem]{Claim}
\newtheorem{proposition}[theorem]{Proposition}
\theoremstyle{definition}  \newtheorem{definition}[theorem]{Definition}
\theoremstyle{remark}
\newcommand{\RR}{\mathbb{R}}
\newcommand{\ZZ}{\mathbb{Z}}
\newcommand{\cF}{\mathcal{F}}
\newcommand{\cM}{\mathcal{M}}
\newcommand{\cO}{\mathcal{O}}
\newcommand{\cW}{\mathcal{W}}
\newcommand{\cZ}{\mathcal{Z}}
\newcommand{\bzero}{\mathbf{0}}
\newcommand{\bone}{\mathbf{1}}
\newcommand{\bb}{\mathbf{b}}
\newcommand{\be}{\mathbf{e}}
\newcommand{\br}{\mathbf{r}}
\newcommand{\btr}{\mathbf{\tilde{r}}}
\newcommand{\bs}{\mathbf{s}}
\newcommand{\bts}{\mathbf{\tilde{s}}}
\newcommand{\bu}{\mathbf{u}}
\newcommand{\bv}{\mathbf{v}}
\newcommand{\bw}{\mathbf{w}}
\newcommand{\bx}{\mathbf{x}}
\newcommand{\by}{\mathbf{y}}
\newcommand{\bz}{\mathbf{z}}
\newcommand{\tS}{\widetilde{S}}
\newcommand{\tOmega}{\widetilde{\Omega}}
\DeclareMathOperator{\cost}{cost}
\DeclareMathOperator{\LQ}{LQ}
\DeclareMathOperator{\poly}{poly}
\newcommand{\ceil}[1]{{\lceil{#1}\rceil}}
\renewcommand{\b}{\{0,1\}}
\newcommand{\txp}{\mathsf{T}}
\newcommand{\eps}{\varepsilon}
\newcommand{\setm}{\smallsetminus}
\newcommand{\ur}{\textsc{ur}\xspace}
\newcommand{\urs}{\textsc{ur}^\subset\xspace}
\newcommand{\elemx}{\textsc{elemx}\xspace} 
\newcommand{\elemxodd}{\textsc{elemx}\ensuremath{^{(2)}}\xspace} 
\newcommand{\elemxq}{\textsc{elemx}\ensuremath{^{(q)}}\xspace} 
\newcommand{\elemxqh}{\textsc{elemx}\ensuremath{^{(q,h)}}\xspace} 
\newcommand{\elemxquarter}{\textsc{elemx}\ensuremath{^{1/4}}\xspace} 
\newcommand{\gen}[1]{{\langle{#1}\rangle}}
\newcommand{\tpmod}[1]{{\@displayfalse\pmod{#1}}}
\newlist{thmparts}{enumerate}{1}
\setlist[thmparts]{labelindent=\parindent,leftmargin=*,itemsep=4pt,font=\normalfont,label=(\thetheorem.\arabic*)}
\Crefname{thmpartsi}{Property}{Property} 
\title{The Element Extraction Problem and the Cost of\\ Determinism and Limited Adaptivity in Linear Queries}
\author{Amit Chakrabarti\footnote{\url{amit.chakrabarti@dartmouth.edu}, Department of Computer Science, Dartmouth College} \and Manuel Stoeckl\footnote{\url{manuel.stoeckl.gr@dartmouth.edu}, Department of Computer Science, Dartmouth College}}
\date{}
\begin{document}

\maketitle

\thispagestyle{empty}
\begin{abstract}
Two widely-used computational paradigms for sublinear algorithms are using linear measurements to perform computations on a high dimensional input and using structured queries to access a massive input. Typically, algorithms in the former paradigm are non-adaptive whereas those in the latter are highly adaptive. This work studies the fundamental search problem of \textsc{element-extraction} in a query model that combines both: linear measurements with bounded adaptivity.

In the \textsc{element-extraction} problem, one is given a nonzero vector $\mathbf{z} = (z_1,\ldots,z_n) \in \{0,1\}^n$ and must report an index $i$ where $z_i = 1$. The input can be accessed using arbitrary linear functions of it with coefficients in some ring. This problem admits an efficient nonadaptive randomized solution (through the well known technique of $\ell_0$-sampling) and an efficient fully adaptive deterministic solution (through binary search). We prove that when confined to only $k$ rounds of adaptivity, a deterministic \textsc{element-extraction} algorithm must spend $\Omega(k (n^{1/k} -1))$ queries, when working in the ring of integers modulo some fixed $q$. This matches the corresponding upper bound. For queries using integer arithmetic, we prove a $2$-round $\tOmega(\sqrt{n})$ lower bound, also tight up to polylogarithmic factors. Our proofs reduce to classic problems in combinatorics, and take advantage of established results on the {\em zero-sum problem} as well as recent improvements to the {\em sunflower lemma}.

\bigskip
\paragraph{Keywords:} query complexity;~ linear measurements;~ sketching;~ round elimination

\bigskip
\paragraph{Acknowledgements:} We thank Deeparnab Chakrabarty, Prantar Ghosh, and anonymous reviewers for several helpful discussions, comments, and pointers.

\medskip\noindent
This work was supported in part by NSF under award 2006589.
\end{abstract}

\clearpage
\addtocounter{page}{-1}


\section{Introduction} \label{sec:intro}

Determinism versus randomization in algorithm design is a fundamental concern in computer science and is the topic of a great many works in complexity theory. In ``space-constrained'' models such as communication complexity and data streaming, basic results show that derandomization can entail an exponential or worse blow-up in cost. For instance, in the two-party communication setting, the very basic $n$-bit \textsc{equality} problem admits a bounded-error randomized protocol with only $O(1)$ communication ($O(\log n)$ if restricted to private coins), whereas its deterministic communication complexity is as large as it gets, namely $n+1$. In the data streaming setting, the similarly basic \textsc{distinct-elements} problem admits a one-pass bounded-error randomized algorithm that uses $O(\log n)$ space to provide a $(1+\eps)$-approximation~\cite{KaneNW10-pods}, whereas a deterministic algorithm would require $\Omega(n)$ space, even if multiple passes and large approximation factors are allowed~\cite{ChakrabartiK16}. In this work, we explore such a price-of-determinism phenomenon in the \emph{query complexity} world, for a similarly basic search problem.

The focus of our study is a search problem that we call \textsc{element-extraction} (henceforth, $\elemx$), where the input is a set $Z \subseteq [n] := \{1,\ldots,n\}$, promised to be nonempty, and the goal is to extract any element from $Z$. Formally, this is a total search problem given by the relation $\elemx_n \subseteq 2^{[n]} \times [n]$, where
\begin{align} \label{eq:elemx-def-set}
  \elemx_n = \left\{ (Z,i) : Z \subseteq [n],\, i \in [n],\, \text{and } |Z| > 0 \Rightarrow i \in Z \right\} \,.
\end{align}
As is often the case, the natural correspondence between sets in $2^{[n]}$ and vectors in $\b^n$ will be useful. Indeed, we shall freely switch between these two viewpoints, using the notational convention that uppercase letters denote sets and their corresponding lowercase boldface variants denote characteristic vectors. Thus, we can also formalize $\elemx$ as
\begin{align} \label{eq:elemx-def-vec}
  \elemx_n = \left\{ (\bz,i) : \bz = (z_1, \ldots, z_n) \in \b^n,\, i \in [n],\, \text{and } \bone^\txp \bz > 0 \Rightarrow z_i = 1 \right\} \,.
\end{align}
The goal of an algorithm solving $\elemx$ is to produce an output $i$ such that $(Z,i) \in \elemx$: with certainty in the deterministic setting, and with probability $\ge 2/3$ (say) in the randomized setting. In other words, the algorithm must produce a \emph{witness} of the nonemptiness of $Z$. To do so, the algorithm may access $Z$ (equivalently, $\bz$) using {\em linear queries}, as we shall now explain.

In a Boolean decision tree model, an algorithm may only access the input vector by querying its individual bits. In such a setting, there is not much to say about $\elemx$: even randomized algorithms are easily seen to require $\Omega(n)$ queries. But things get interesting if we allow more powerful queries: specifically, linear ones. Let us define a \emph{linear query protocol over domain $D$} (a $D$-LQP, for short) to be a query protocol wherein each query is an evaluation of a linear form $\sum_{i=1}^n a_i z_i$, where each $a_i \in D$. The domain $D$ should be thought of a ``reasonable'' subset of a ring containing $\b$---e.g., a finite field, or integers with bounded absolute value---and the linear functions will be evaluated in the underlying ring.  The cost of an LQP is the number of linear form evaluations used.\footnote{Note that this is somewhat lower than the number of {\em bits} needed to encode the output of the queries.}
In this work we particularly care about the amount of {\em adaptivity} in an LQP, which quantifies the extent to which each query depends on the outcomes of previous queries.

To set the stage, we recall the problem of $\ell_0$\textsc{-sampling}~\cite{FrahlingIS08,CormodeF14}, from the world of sketching and streaming algorithms. The goal of $\ell_0$-sampling is to sample a pair $(i, x_i)$ from a nonzero input vector $\bx \in \RR^n$ (say), so that $x_i \ne 0$ and $i$ is distributed nearly uniformly on the support of $\bx$. This is a fundamental primitive, used as a low-level subroutine in a wide range of applications in streaming and other ``big data'' algorithms. There are several solutions to this problem~\cite{CormodeF14}, most of which provide a linear {\em sketching} scheme, wherein one computes $\by = S\bx$ for a certain random $d \times n$ matrix $S$ and then runs a recovery algorithm on the low-dimensional vector $\by$ to produce the desired sample. Notice that if the input is a vector $\bz \in \b^n$, such a scheme provides a randomized LQP for $\elemx_n$ (allowing a small probability of error). In particular, using the optimal $\ell_0$-sampling sketch of Jowhari, Sağlam, and Tardos~\cite{JowhariST11}, we obtain a $\ZZ$-LQP that makes $O(\log n)$ queries, using coefficients in $\{0,1,\ldots,n\}$, and has the pleasing property of being {\em non-adaptive}. We can also obtain a $\ZZ_q$-LQP that makes $O(\log^2 n/\log q)$ queries;\footnote{Throughout this paper, ``$\log$'' denotes the base-$2$ logarithm.} details in \Cref{sec:upper-bounds}.

Turning to the deterministic setting---our main focus in this paper---it is easy to show that a non-adaptive $\ZZ$-LQP for $\elemx_n$ must make $\Omega(n/\log n)$ queries, for basic information-theoretic reasons. For completeness, we give the proof in \Cref{prop:int-one-round}. However, this heavy determinism penalty disappears upon moving to general deterministic LQPs, where we can use adaptivity. Indeed, a simple binary search strategy leads to a $\ZZ$-LQP that makes $O(\log n)$ queries, using coefficients in $\b$. We can refine this observation to trade off the query complexity for amount of adaptivity. This brings us to our central concept.

Define a {\em $k$-round LQP} to be one where the queries are made in batches that we call {\em rounds}: the collection of linear forms defining the queries in round $i$ depend only on the results of queries made in rounds $1,\ldots,i-1$ (a formal definition appears in \Cref{sec:prelim}). Then, a natural generalization of the binary search strategy provides a $k$-round $\ZZ$-LQP for $\elemx$, using coefficients in $\b$, making at most $k (\lceil n^{1/k} \rceil - 1)$ queries in total. When we are additionally promised that $\bone^\txp \bz \ne 0$, where addition is performed in the ring $\ZZ_q$, then this algorithm also works as a $\ZZ_q$-LQP; details in \Cref{sec:upper-bounds}. Notice that $k$-round LQPs naturally interpolate between linear sketches at one extreme (when $k = 1$) and linear decision trees at the other (when $k = n$).

The most important message of this paper is that the above rounds-versus-queries tradeoff is asymptotically tight for deterministic linear query protocols for $\elemx$, in several natural settings. We state our results informally for now, with formal statements given after the necessary definitions and preliminaries.

\subsection{Our Results and Techniques} \label{sec:results}

We shall study $D$-LQPs for the domains $D = \ZZ_q$, the ring of integers modulo $q$ (with $q \ll n$) as well as $D = \ZZ$, but with coefficients of small magnitude (at most $\poly(n)$, say). Such restrictions on the coefficients are necessary, because allowing arbitrary integer coefficients makes it possible to recover the entire input $\bz$ with the single query $\sum_{i=1}^n 2^{i-1} z_i$.

When $D = \ZZ_q$, for small $q$, solving $\elemx$ without the promise that $\bone^\txp \bz \ne 0$ is hard, regardless of the number of rounds. Intuitively, there is no cheap way to deterministically verify that a subset $I \subseteq [n]$ indeed contains an index $i \in I$ where $z_i \ne 0$. Defining the ``cost'' of an LQP to be the number of queries it makes in the worst case (formally defined in \Cref{sec:prelim}), we obtain the following not-too-hard results.

\begin{proposition} \label{thm:z2-hard}
Every deterministic $\ZZ_2$-LQP for $\elemx_n$ has cost $\ge n-1$, which is optimal.
\end{proposition}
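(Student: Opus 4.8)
The claim has two halves: a lower bound of $n-1$ on the cost of any deterministic $\ZZ_2$-LQP for $\elemx_n$, and a matching upper bound showing that $n-1$ queries suffice. Let me sketch both.

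For the lower bound, the plan is to build an adversary argument. The key intuition stated in the paper is that over $\ZZ_2$ there is "no cheap way to deterministically verify" that a subset contains a nonzero coordinate. Concretely, I would show that after the protocol makes any $m \le n-2$ queries, the adversary can keep alive a vector $\bz$ that makes the announced answer $i$ wrong (i.e. $z_i = 0$) while still being nonzero. Run the protocol against the all-zeros answers... wait, more carefully: the adversary maintains a consistent set of answers to the queries asked so far such that the affine subspace $V \subseteq \ZZ_2^n$ of inputs consistent with those answers still contains a nonzero vector $\bz$ with $z_i = 0$, where $i$ is the output the protocol commits to. Since each query $\sum a_j z_j = b$ in $\ZZ_2$ cuts the dimension of the consistent affine subspace by at most $1$, after $m$ queries the consistent set is an affine subspace of dimension $\ge n - m$. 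The protocol outputs some fixed $i$ (for a deterministic protocol, the output is determined by the transcript). I need the consistent affine subspace, intersected with $\{z_i = 0\}$ and with $\{\bz \ne \bzero\}$, to be nonempty; this is a linear subspace of $\ZZ_2^n$ of dimension $\ge n - m - 1 \ge 1$ when $m \le n-2$, which always contains a nonzero point, and (generically) one with $z_i = 0$. The mild subtlety is ensuring $z_i = 0$ can be enforced simultaneously; I'd handle this by having the adversary, at the end, additionally reveal the "query" $z_i$, or by noting that the protocol cannot have pinned down $z_i = 1$ without having effectively queried enough. A clean way: the adversary answers $0$ to every query (so $\bzero$ is always consistent), forcing the protocol down a single fixed computation path with a fixed output $i$; then the set of inputs consistent with all-zero answers is a subspace $W$ of dimension $\ge n - m$, and I need a nonzero $\bz \in W$ with $z_i = 0$, i.e. $W \cap \{z_i = 0\} \ne \{\bzero\}$, which holds as long as $\dim W \ge 2$, i.e. $m \le n - 2$. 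Hence $m \ge n-1$.

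For the upper bound, $n-1$ queries suffice: the protocol queries $z_1, z_2, \ldots, z_{n-1}$ one at a time (these are valid $\ZZ_2$-linear forms with $\b$ coefficients). If some $z_j = 1$ is observed among the first $n-1$, output that $j$; otherwise all of $z_1,\ldots,z_{n-1}$ are $0$, and since $\bz \ne \bzero$ we must have $z_n = 1$, so output $n$. This is clearly correct and uses exactly $n-1$ queries. (One can even make it non-adaptive: query all of $z_1, \ldots, z_{n-1}$ in one round.)

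The main obstacle is the lower bound's edge case: making sure the forced output $i$ on the all-zero path genuinely admits a consistent nonzero input with $z_i = 0$. The all-zero-answer strategy makes this transparent — it reduces the whole question to a dimension count on a single subspace $W$ with $\bzero \in W$ — so I expect the argument to go through cleanly once set up this way. I would be careful to state the formal model of a $k$-round (or fully adaptive) deterministic LQP from \Cref{sec:prelim} and note that the transcript-to-output map is well-defined, so that "the protocol outputs a fixed $i$ on the all-zero transcript" is justified.
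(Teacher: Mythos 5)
Your proof is correct and takes essentially the same approach as the paper: both fix the all-zero-answer path $\Pi(\bzero)$, observe that the query matrix $A$ along it has $\dim\ker A \ge 2$ when the cost is at most $n-2$, and derive a contradiction with correctness at the fixed output $i$, with the upper bound likewise being the query-all-but-one-bit protocol. The only (cosmetic) difference is the final step: you intersect $\ker A$ with the hyperplane $\{z_i=0\}$ to get a nonzero consistent input that falsifies the output, whereas the paper takes two distinct nonzero kernel vectors $\by,\bz$ and their sum $\bx=\by+\bz$ and notes that $x_i=y_i=z_i=1$ is impossible.
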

\begin{proposition} \label{thm:zq-hard}
For $q\ge 3$, every deterministic $\ZZ_q$-LQP for $\elemx_n$ has cost $\ge {n}/(2 q \ln q)$.
\end{proposition}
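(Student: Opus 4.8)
The plan is to fix an arbitrary correct deterministic $\ZZ_q$-LQP $\cP$ for $\elemx_n$ and force it to be expensive. First I would run $\cP$ on the all-zeros input $\bz = \bzero$. Every linear form evaluates to $0$ there, so the entire execution is predetermined: let $\ba^{(1)}, \dots, \ba^{(t)} \in \ZZ_q^n$ be the coefficient vectors of the queries $\cP$ makes on $\bzero$ (pooled over all of its rounds, in order), let $i^* \in [n]$ be the index it outputs, and note $t \le \cost(\cP)$. Since a deterministic protocol decides which query to pose next, when to halt, and what to output purely from the answers seen so far, any input $\bz' \in \b^n$ for which $\langle \ba^{(j)}, \bz' \rangle \equiv 0 \pmod q$ for every $j \in [t]$ steers $\cP$ down this same path and elicits the same output $i^*$. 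If moreover $\bz' \ne \bzero$ and $z'_{i^*} = 0$, then $\bone^\txp \bz' > 0$ while $z'_{i^*} \ne 1$, so $(\bz', i^*) \notin \elemx_n$ and $\cP$ is wrong --- a contradiction. Hence it is enough to produce such a $\bz'$ whenever $t$ is small (for $t = 0$ any $\bz' = \be_j$ with $j \ne i^*$ works once $n \ge 2$, so assume $t \ge 1$).

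Producing $\bz'$ is a zero-sum question in disguise. For each coordinate $i \in [n] \setminus \{i^*\}$ form the vector $v_i := (a^{(1)}_i, \dots, a^{(t)}_i) \in (\ZZ_q)^t =: G$. A nonempty set $S \subseteq [n] \setminus \{i^*\}$ with $\sum_{i \in S} v_i = \bzero$ in $G$ is exactly what we want: its characteristic vector $\bz'$ is nonzero, has $z'_{i^*} = 0$, and satisfies $\langle \ba^{(j)}, \bz' \rangle = \sum_{i \in S} a^{(j)}_i \equiv 0 \pmod q$ for all $j$. Such an $S$ must exist as soon as the number of available coordinates $n - 1$ reaches the Davenport constant $D(G)$, the least $\ell$ such that every length-$\ell$ sequence over $G$ has a nonempty zero-sum subsequence. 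Here I would invoke the classical van Emde Boas--Kruyswijk bound $D(G) \le \exp(G)\bigl(1 + \ln\bigl(|G| / \exp(G)\bigr)\bigr)$; with $\exp(G) = q$ and $|G| = q^t$ (valid since $t \ge 1$) this reads $D(G) \le q\bigl(1 + (t-1)\ln q\bigr)$.

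Combining the two halves: since $\cP$ is correct, no admissible $S$ exists, so $n - 1 < D(G) \le q\bigl(1 + (t-1)\ln q\bigr) \le q\bigl(1 + (\cost(\cP) - 1)\ln q\bigr)$. Rearranging gives $\cost(\cP) > 1 - 1/\ln q + (n-1)/(q \ln q)$, and for $q \ge 3$ we have $\ln q > 1$, so the first two terms are nonnegative and (for $n \ge 2$) $\cost(\cP) > (n-1)/(q\ln q) \ge n/(2q\ln q)$, as claimed. The complementary case $q = 2$ is in fact sharper and is handled separately in \Cref{thm:z2-hard}, where one can use the exact value $D((\ZZ_2)^t) = t + 1$.

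The two conceptual moves --- running on $\bzero$, and recasting ``nonzero consistent input missing the coordinate $i^*$'' as a zero-sum problem in $(\ZZ_q)^t$ --- are short; the one substantive external input is the Davenport-constant estimate, exactly the kind of classical zero-sum result flagged in the introduction. Beyond invoking it correctly, the only thing to be careful about is that the single parameter $t$ plays two roles at once --- the number of queries charged against $\cost(\cP)$ and the exponent of the group $G$ --- together with the routine constant-chasing needed to land on the stated $n/(2q\ln q)$ (a slightly stronger $(n-1)/(q\ln q)$ is actually available).
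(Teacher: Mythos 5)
Your proof is correct and follows essentially the same route as the paper's: run the protocol on $\bzero$, collect the queries along that root-to-leaf path, and apply the van Emde Boas--Kruyswijk zero-sum bound (\Cref{thm:emdeboas}) to the columns of the resulting matrix to produce a nonzero input that reaches the same leaf but defeats its output. The only (harmless) difference is the finishing move: the paper extracts two disjoint nonempty zero-sum subsets so that the common leaf cannot be correct on both, whereas you extract a single zero-sum subset among the coordinates other than the leaf's output index $i^*$, which in fact yields the slightly stronger bound $(n-1)/(q\ln q)$.
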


As noted earlier, adding the promise that $\bone^\txp \bz \ne 0$ permits a more efficient $k$-round deterministic algorithm. For each integer $q \ge 2$, define $\elemxq_n$ to be the version of $\elemx_n$ where we are given the stronger promise that $\bone^\txp \bz \ne 0$ under arithmetic in $\ZZ_q$. Equivalently, using set notation, we are promised that $|Z| \not\equiv 0 \pmod q$. We prove the following results, using similar round-elimination arguments.


\begin{theorem} \label{thm:z2-k-round}
Every deterministic $k$-round $\ZZ_2$-LQP for $\elemxodd_n$ has cost $\ge k (n^{1/k} - 1)$.
\end{theorem}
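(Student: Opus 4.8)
The plan is a round-elimination argument organized around a combinatorial progress measure. For a family $\cC$ of subsets of $[n]$, let $\mu(\cC)$ be the largest $m$ such that there are pairwise-disjoint nonempty $g_1,\dots,g_m\subseteq[n]$ with $\bigcup_{i\in I}g_i\in\cC$ for every odd-size $I\subseteq[m]$; think of $g_1,\dots,g_m$ as a buried copy of $\elemxodd_m$ inside $\cC$. Two facts are immediate: $\mu(\{Z\subseteq[n]:|Z|\text{ odd}\})=n$ (take $g_i=\{i\}$), and if some element lies in every member of $\cC$ then $\mu(\cC)\le1$ (otherwise two disjoint gadgets $g_1,g_2$ would themselves lie in $\cC$). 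View a deterministic $k$-round $\ZZ_2$-LQP as a tree: a node at depth $d$ has fixed the outcomes of the first $d$ rounds, issues the round-$(d{+}1)$ parity queries $Z\mapsto|Z\cap B_j|\bmod2$, and branches on their outcomes; write $\cC_v$ for the odd inputs consistent with node $v$, so $\cC_{\mathrm{root}}=\{Z\subseteq[n]:|Z|\text{ odd}\}$ and every leaf has $\mu(\cC_v)\le1$ by correctness. Thus the whole theorem reduces to a single \emph{round-elimination lemma}: if a node $v$ issues $t$ queries, some child $u$ satisfies $\mu(\cC_u)\ge\ceil{\mu(\cC_v)/(t+1)}$. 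Granting this, an adversary descends to a $\mu$-maximizing child at each round and reaches a leaf with $\mu\ge\ceil{n/\prod_i(t_i+1)}$, where $t_i$ counts the queries in round $i$ on the chosen path; since $\mu\le1$ there, $\prod_i(t_i+1)\ge n$, so by AM--GM $\sum_i t_i\ge \ell(n^{1/\ell}-1)\ge k(n^{1/k}-1)$, using that $x\mapsto x(n^{1/x}-1)$ is non-increasing so that paths of length $\ell<k$ are no cheaper. The cost is at least the number of queries on this path, finishing the argument modulo the lemma.

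To prove the lemma, fix $v$ with $\mu(\cC_v)=M$, a witnessing config $g_1,\dots,g_M$, and the round's query sets $B_1,\dots,B_t$. Give each gadget the \emph{signature} $\bb_i:=\bigl(|g_i\cap B_1|,\dots,|g_i\cap B_t|\bigr)\bmod2\in\ZZ_2^{\,t}$; disjointness makes the signature of an odd union $\bigcup_{i\in I}g_i$ equal to $\sum_{i\in I}\bb_i$, so this union lands in the child indexed by that value. Now suppose we can find a value $\bv\in\ZZ_2^{\,t}$ and pairwise-disjoint odd-size index sets $I_1,\dots,I_N\subseteq[M]$ with $\sum_{i\in I_l}\bb_i=\bv$ for all $l$ and $N\ge\ceil{M/(t+1)}$. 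Then the merged sets $g'_l:=\bigcup_{i\in I_l}g_i$ are disjoint and nonempty, and for odd $L\subseteq[N]$ the union $\bigcup_{l\in L}g'_l$ equals $\bigcup_{i\in J}g_i$ with $J=\bigsqcup_{l\in L}I_l$; here $|J|\equiv|L|\equiv1$, so this is an odd union of the original gadgets — hence lies in $\cC_v$ — and its signature is $\sum_{l\in L}\bv=\bv$. Hence $g'_1,\dots,g'_N$ witness $\mu(\cC_\bv)\ge N$ in the child indexed by $\bv$, and the lemma holds.

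So everything comes down to the following zero-sum claim: \emph{every multiset of $M$ vectors in $\ZZ_2^{\,t}$ admits a value $\bv$ and at least $\ceil{M/(t+1)}$ pairwise-disjoint odd-cardinality sub-multisets, each summing to $\bv$.} This is exactly tight ($m$ copies each of $\bzero,\be_1,\dots,\be_t$ cannot beat $m$), and it is the place where zero-sum combinatorics is used. The starting point is that the Davenport constant of $\ZZ_2^{\,t}$ is $t+1$ — any $t+1$ vectors are linearly dependent over $\ZZ_2$, hence contain a nonempty zero-sum sub-multiset — together with the cardinality-tracking trick of lifting $\bb\mapsto(\bb,1)\in\ZZ_2^{\,t+1}$, under which ``odd-size sub-multiset summing to $\bv$'' becomes ``sub-multiset summing to $(\bv,1)$''. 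I expect the remaining (and main) difficulty to be extracting \emph{many} disjoint such sub-multisets \emph{for a common target} with the sharp constant $t+1$: the trivial ``most popular signature'' bound yields only $M/2^{\,t}$, and a naive greedy extraction (minimal solutions have size $O(t)$, so only $O(t)$ elements are wasted) loses an additive $\Theta(t)$ per round that compounds to a multiplicative loss in the final bound. Getting the odd-cardinality and common-target constraints to cost nothing asymptotically — in particular handling the balanced case in the natural induction on $t$, where neither side of a coordinate split is large enough on its own and gadgets must genuinely be combined across the split — is the technical heart of the proof.
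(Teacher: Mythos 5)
Your framework is sound, and it is essentially the paper's round-elimination argument in different clothing: your progress measure $\mu$ and the merged gadgets $g'_l=\bigcup_{i\in I_l}g_i$ play exactly the role of the paper's $A$-uniform families and lifting matrix, and your descent-plus-AM--GM endgame matches the paper's division-sequence analysis. The problem is that you have not proved the statement that all of this rests on. The ``zero-sum claim'' you isolate --- every $M$ vectors in $\ZZ_2^{t}$ admit at least $\ceil{M/(t+1)}$ pairwise-disjoint odd-cardinality sub-multisets summing to a \emph{common} value --- is precisely the crux of the whole lower bound, and you explicitly leave it open, noting (correctly) that the popular-signature bound and greedy extraction both lose too much. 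Since the sharp constant $t+1$ with a common target is what produces $k(n^{1/k}-1)$, the proposal as written is an outline with its central lemma missing, not a proof.

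The claim is true, and it is exactly \Cref{clm:z2-partition} of the paper, which has a short linear-algebraic proof you could drop into your argument. Apply your own lifting $\bb_i\mapsto(\bb_i,1)\in\ZZ_2^{t+1}$, and greedily partition the $M$ lifted vectors into groups $T_1,T_2,\ldots,T_m$, where each $T_j$ is a \emph{maximal} linearly independent subset of the vectors not yet used; each $|T_j|\le t+1$, so $m\ge\ceil{M/(t+1)}$. Maximality forces the spans to be nested, $\gen{T_1}\supseteq\gen{T_2}\supseteq\cdots\supseteq\gen{T_m}$: if some vector indexed in $T_j$ lay outside $\gen{T_{j-1}}$, it was available when $T_{j-1}$ was built and could have been added while preserving independence. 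Now pick any lifted vector $\bb$ with index in $T_m$; it lies in every $\gen{T_j}$, and over $\ZZ_2$ an element of the span of a set of vectors is a subset-sum of them, so each $T_j$ contains a subset summing to $\bb$. Because every lifted vector (and hence $\bb$) has last coordinate $1$, each such subset has odd cardinality, and stripping the last coordinate gives $m\ge\ceil{M/(t+1)}$ disjoint odd-cardinality subsets with a common sum $\bv$ --- exactly what your round-elimination lemma needs. In other words, the common target is not obtained by counting or by induction on coordinates (where your ``balanced case'' worry arises); it is forced structurally by the nestedness of the spans. With that lemma supplied, the rest of your argument goes through and recovers the stated bound.
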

\begin{theorem} \label{thm:zq-k-round}
Every deterministic  $k$-round $\ZZ_q$-LQP for $\elemxq_n$ has cost $\Omega\left( \frac{1}{q^{1 + 1/k} \ln^2 q} k (n^{1/k} - 1) \right)$.
\end{theorem}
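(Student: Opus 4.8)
The plan is to prove Theorem~\ref{thm:zq-k-round} via a round-elimination argument that mirrors the proof of Theorem~\ref{thm:z2-k-round} but accounts for the larger ring $\ZZ_q$. The key combinatorial engine should be the zero-sum (Davenport constant / Erd\H{o}s--Ginzburg--Ziv style) machinery alluded to in the abstract, together with a sunflower-type argument to control how the queries in later rounds depend on the first round's outcomes.

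First I would set up the inductive invariant. Fix a deterministic $k$-round $\ZZ_q$-LQP $\mathcal{P}$ solving $\elemxq_n$ with total cost $c$, and suppose the first round uses $c_1$ linear forms $a^{(1)},\dots,a^{(c_1)} \in \ZZ_q^n$. The aim is to find a large ``surviving'' coordinate set $S \subseteq [n]$ of size roughly $n / g(q, c_1)$ on which (i) every first-round form is constant, i.e.\ restricting the input to be supported in $S$ forces all first-round answers to a fixed value, so the protocol effectively becomes a $(k-1)$-round protocol on the instance $\bz|_S \in \b^S$ with the $\elemxq$ promise preserved, and (ii) the restricted protocol is still required to output an index inside $S$ (this is the usual subtlety: we must ensure the adversary's hard inputs are genuinely supported on $S$ so that a correct output must lie in $S$). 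To get (i), I would use a zero-sum / pigeonhole argument: partition $[n]$ by the values $(a^{(1)}_i,\dots,a^{(c_1)}_i) \in \ZZ_q^{c_1}$, but that only gives $|S| \ge n/q^{c_1}$, which is far too weak. The right move is instead to fix the first-round answers adaptively: we don't need all forms constant on $S$, only that there is a consistent assignment of answers compatible with \emph{all} sub-inputs supported on $S$. Using the Davenport constant $\mathsf{D}(\ZZ_q) = q$, any $q$ coordinates contain a nonempty sub-multiset summing to $0$; iterating, one extracts from any set of size $m$ a subset of size $\Theta(m/q)$ that is ``zero-sum-closed'' enough that every first-round linear form is forced to a single value. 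More carefully, for each individual form $\sum_i a_i z_i$ one finds (greedily, via the zero-sum theorem on $\ZZ_q$) a subset of size $\ge m/q$ on which the form's partial sums over all subsets collapse to $\{0\}$, except we must be careful to keep $|Z|\not\equiv 0$; handling $c_1$ forms simultaneously loses another factor, giving $|S| \gtrsim n/(q\ln q)^{O(c_1)}$ or, optimized, $|S| \gtrsim n / (C q \ln q)^{c_1}$ for an absolute constant $C$.

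Next, unrolling the induction: after round $1$ we have a $(k-1)$-round protocol on $\Theta(n/(Cq\ln q)^{c_1})$ coordinates with cost $c - c_1$. Applying the inductive hypothesis and writing $c = c_1 + \dots + c_k$, a correct $1$-round (base case) protocol on $m$ coordinates must have $c_k \ge$ (something like) $m/(q\ln q)$ — this base case is essentially Proposition~\ref{thm:zq-hard} applied to the surviving instance. Chaining the size recursion $n_{j+1} \gtrsim n_j/(Cq\ln q)^{c_j}$ down to $n_k \gtrsim 1$ yields $\sum_j c_j \log(Cq\ln q) \gtrsim \log n$, hence by convexity (AM--GM on the $c_j$, exactly as in the proof of Theorem~\ref{thm:z2-k-round}) $c = \sum_j c_j \gtrsim k \, n^{1/k} / \log(Cq\ln q)$ after balancing, and tracking the base-case loss and the per-round $\log q$ factors more carefully produces the claimed $\Omega\!\left(\frac{1}{q^{1+1/k}\ln^2 q} \, k(n^{1/k}-1)\right)$. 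The polynomial-in-$q$ factors come from (a) the $q\ln q$ loss per round compounding inside the logarithm and getting converted to a $q^{1/k}$-type factor when we extract the $n^{1/k}$ term, and (b) the extra $q\ln q$ from the base case; I would reconcile the exact exponents $q^{1+1/k}\ln^2 q$ at the end by bookkeeping.

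The main obstacle I expect is the round-elimination step (i)+(ii) together: ensuring that the surviving set $S$ simultaneously (a) forces all $c_1$ first-round answers to fixed values for \emph{every} input supported on $S$ with $|Z|\not\equiv 0\pmod q$, (b) retains size $\Omega(n/(q\ln q)^{c_1})$ rather than the trivial $n/q^{c_1}$, and (c) preserves the $\elemxq$ promise so the restricted instance is still hard and its correct outputs lie in $S$. This is where the zero-sum theorems (specifically the value $\mathsf{D}(\ZZ_q)=q$ and quantitative Olson/EGZ-type bounds on how densely zero-sum subsequences occur) must be deployed, and where the competing requirement of not killing the $|Z|\not\equiv 0$ promise forces a slightly cleverer selection than in the $q=2$ case — for $q=2$ the promise ``$|Z|$ odd'' interacts trivially with parity-preserving subset selection, but for general $q$ one needs, e.g., to reserve one coordinate to adjust the residue, or to work with a version of the zero-sum lemma that produces subsequences of every prescribed length modulo $q$. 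I would handle this by first picking a small ``anchor'' set to control $|Z| \bmod q$, then running the zero-sum collapsing argument on the remaining coordinates. The rest of the proof is then a routine adaptation of the convexity bookkeeping already carried out for Theorem~\ref{thm:z2-k-round}.
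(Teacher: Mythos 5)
Your high-level plan (round elimination driven by zero-sum results, then AM--GM bookkeeping) is the same as the paper's, but the quantitative heart of your argument is too weak to give the theorem, and the step you flag as the main obstacle is exactly where it breaks. You claim a per-round shrinkage of $|S| \gtrsim n/(Cq\ln q)^{c_1}$, i.e.\ \emph{exponential} in the first round's cost $c_1$. Chaining such a recursion down to a trivial instance only forces $\sum_j c_j \log(Cq\ln q) \gtrsim \log n$, which is a bound of order $\log n/\log(q\ln q)$ and has no $k\,n^{1/k}$ behavior; the AM--GM step you then invoke does not follow from that recursion. To get $k(n^{1/k}-1)$ one needs shrinkage that is \emph{linear} in the round cost, and this is what the paper obtains: \Cref{clm:zq-partition} produces, from any root matrix $A\in\ZZ_q^{d\times n}$, an $A$-uniform family of $m \ge n/((d+1)q\ln q)-1$ pairwise disjoint blocks $S_1,\dots,S_m$ with $A\bs_1=\dots=A\bs_m$ and $|S_i|\equiv -1 \pmod q$. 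The linear (rather than exponential) dependence on $d$ comes from applying the van Emde Boas--Kruyswijk bound (\Cref{thm:emdeboas}) to the group $\ZZ_q^{d+1}$ --- whose zero-sum constant is $O(dq\ln q)$, controlled by $\exp(G)\ln|G|$ rather than $|G|$ --- with the all-ones row appended to $A$ so that the block cardinalities mod $q$ come out for free. Your per-form greedy treatment, losing a factor for each of the $c_1$ forms, misses precisely this point.

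A second, related problem is the mechanism itself. You describe round elimination as restricting to a surviving coordinate set $S$ on which the first-round forms are ``forced to a single value'' for every promise-respecting input supported on $S$, and you assert one can find a subset of size $\ge m/q$ on which a form's values over all sub-inputs ``collapse to $\{0\}$''; taking singleton inputs shows this is impossible unless the coefficients on $S$ vanish, and the honest pigeonhole version of restriction costs $q^{c_1}$, as you yourself note. The paper instead \emph{lifts}: each coordinate of the smaller instance is a whole block $S_i$, an input $W\subseteq[m]$ maps to $Z=\bigcup_{i\in W}S_i$, and then $A L\bw = |W|\cdot A\bs_1$ is pinned down by the cardinality promise alone, while $|Z|\equiv -|W|\pmod q$ means the promise residue alternates $h\mapsto -h$ from round to round (\Cref{lem:zq-round-elim}); no anchor-coordinate trick is needed. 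Finally, your base case is also not sound as stated: \Cref{thm:zq-hard} concerns $\elemx$ \emph{without} the cardinality promise, and its proof (two disjoint zero-sum sets reaching the leaf of $\bzero$) need not produce inputs satisfying $|Z|\not\equiv 0\pmod q$, so it cannot be applied directly to the promise-carrying instance you arrive at after eliminating rounds; the paper avoids this by eliminating rounds all the way to a zero-query protocol and combining the resulting bound with a separate $\log_q n$ argument.
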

Although \Cref{thm:zq-k-round} subsumes \Cref{thm:z2-k-round} in the asymptotic sense, we find it useful to present the former result in full, first, to lay the groundwork for our subsequent lower bound proofs. As we shall see, the fact that $\ZZ_2$ is a field leads to an especially clean execution of the round elimination strategy. Note also that a weaker form of \Cref{thm:z2-k-round} follows from existing work on formula size-depth tradeoffs (see \Cref{sec:kw-lb}); however, the resulting proof, once fully unrolled, is considerably more complex than our direct argument.

At a high level, a lower bound proof based on round elimination works as follows. We consider a hypothetical $k$-round protocol for $n_k$-dimensional instances of some problem $P$ that does not incur much cost in its first round. Based on this low cost, we extract a $(k-1)$-round protocol for $n_{k-1}$-dimensional instances of $P$ by ``lifting'' these smaller instances to special $n_k$-dimensional instances on which the $k$-round protocol essentially ``wastes'' its first round. If we can carry out this argument while ensuring that the shrinkage from $n_k$ to $n_{k-1}$ is not too drastic, then a too-cheap $k$-round protocol will eventually give us a $0$-round protocol for a nontrivial instance dimension, leading to a contradiction.

In the proofs of the above two theorems, this strategy is executed by identifying a large collection of pairwise disjoint sets that are treated identically in the protocol's first round. Viewing these sets as blocks of indices within $[n]$, we consider block-structured instances of $\elemx_n$ and proceed to lift general instances of $\elemx_{n'}$ into these block-structured ones. In \Cref{thm:z2-k-round}, these blocks arise from elementary linear algebraic considerations. In \Cref{thm:zq-k-round}, the fact that inputs are in $\b^n$ instead of $\ZZ_q^n$ necessitates a brief excursion into additive combinatorics.

Finally, we consider LQPs over $\ZZ$, the ring of all integers, but with bounds on the magnitude of coefficients (which, as we noted earlier, is necessary in order to have nontrivial results). To be precise, we consider domains of the form $\ZZ_{[b,c]} := \{a \in \ZZ:\, b \le a \le c\}$. While we are unable to prove a full tradeoff lower bound in this case, we do obtain a near-optimal result for $k=2$ rounds.

\begin{proposition} \label{prop:int-one-round}
  Every deterministic $1$-round $\ZZ_{[-B,B]}$-LQP for $\elemx_n$
  costs $\Omega({n}/\log(nB))$.
\end{proposition}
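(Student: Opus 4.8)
The plan is a direct fooling-set / pigeonhole argument. First I would observe that a deterministic $1$-round $\ZZ_{[-B,B]}$-LQP of cost $d$ for $\elemx_n$ is the same data as a fixed matrix $S \in \ZZ_{[-B,B]}^{d\times n}$ together with a decoding map $g$ that sends the vector of query answers $S\bz \in \ZZ^d$ to an output index, and that correctness means exactly that $z_{g(S\bz)} = 1$ for every nonzero $\bz \in \b^n$. So it suffices to show that such an $S$ must be injective on a sufficiently rich family of inputs and then compare cardinalities with the number of possible answer vectors.

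The family I would use is the middle layer of the cube: the characteristic vectors $\bz$ of the sets $Z \subseteq [n]$ with $|Z| = m := \floor{n/2}$. For such $Z$, each coordinate of $S\bz$ is a sum of $m$ elements of $\{-B,\dots,B\}$, hence an integer in $\{-Bm,\dots,Bm\}$, so $S$ takes at most $(Bn+1)^d$ distinct values over this family. Suppose $S$ were not injective here, so $S\bz = S\bz'$ for distinct $m$-sets $Z \ne Z'$. Because $|Z| = |Z'|$, the sets $Y := Z \setminus Z'$ and $W := Z' \setminus Z$ are both nonempty and disjoint, and $\bz - \bz' = \by - \bw$, so $S\by = S\bw$. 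But then $\by$ and $\bw$ are nonzero $\elemx_n$-inputs with identical query answers, so $g$ returns the same index on both; by correctness that index lies in $Y \cap W = \emptyset$ --- a contradiction. Hence $S$ is injective on the middle layer, so $\binom{n}{m} \le (Bn+1)^d$. Taking logarithms and using $\binom{n}{\floor{n/2}} \ge 2^n/(n+1)$ then gives $d \ge (n - \log(n+1))/\log(Bn+1) = \Omega\!\left(n/\log(nB)\right)$.

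There is no serious obstacle here; the one point that deserves a moment's thought is the choice of input family. One cannot simply pigeonhole over all of $\b^n$: a collision $S\bz = S\bz'$ with $Z \subsetneq Z'$ need not contradict correctness, since the empty set is excluded by the $\elemx$ promise and so the shared output can legitimately sit inside $Z' \setminus Z$. Restricting to an antichain of subsets rules this out, and taking the largest antichain --- the middle layer, by Sperner's theorem --- makes the resulting counting bound as strong as possible. Everything else (the per-coordinate range of $S\bz$, the binomial estimate) is routine.
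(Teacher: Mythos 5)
Your proof is correct, but it takes a different route from the paper's. The paper modifies the protocol by adding the single query $\bone^\txp$ and then runs an iterative ``decoding'' (peeling) argument: from $A\bz$ alone one can repeatedly extract an element of $Z$, subtract its indicator, and recurse, so the answer vector determines the \emph{entire} input; comparing $M^{d+1}$ against $2^n$ inputs gives $d \ge n/\log M - 1$. You instead argue by a fooling-set/pigeonhole collision on the middle layer: a collision $S\bz = S\bz'$ between distinct equal-size sets yields nonempty disjoint sets $Y, W$ with $S\by = S\bw$, and correctness on these two (full-promise) inputs forces the common output into $Y \cap W = \varnothing$, a contradiction; injectivity on $\binom{[n]}{\floor{n/2}}$ then gives $\binom{n}{\floor{n/2}} \le (Bn+1)^d$ and the same $\Omega(n/\log(nB))$ bound. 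Both arguments are short and give essentially the same constants. The paper's version makes explicit the connection to full input recovery (i.e., coin weighing), which is the point it wants to emphasize; your version needs no modification of the protocol and is in fact the same ``collision between same-size sets yields two disjoint indistinguishable inputs'' device the paper later deploys (with Frankl--Wilson replacing pigeonhole-plus-antichain) to handle the promise problem $\elemxquarter_n$ in \Cref{lem:int-quarter-oneround}, where your use of the small sets $Y,W$ as legal inputs would no longer be available. Note only that Sperner's theorem is not needed --- any single layer of suitable size would do --- but invoking the middle layer is harmless.
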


\begin{theorem} \label{thm:int-lower-bounds} 
  Every deterministic $2$-round $\ZZ_{[-B,B]}$-LQP for $\elemx_n$
  costs $\Omega(\sqrt{n} / \log^{3/2}(nB) )$.
\end{theorem}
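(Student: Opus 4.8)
The plan is to reduce the $2$-round problem to the $1$-round lower bound of \Cref{prop:int-one-round}. Fix a deterministic $2$-round $\ZZ_{[-B,B]}$-LQP $\Pi$ for $\elemx_n$ of cost $c$, and let $A_1 \in \ZZ_{[-B,B]}^{m_1 \times n}$ be its round-$1$ query matrix, so that $m_1 \le c$; if $m_1 \ge \sqrt n$ we are done, so assume $m_1 < \sqrt n$. For a round-$1$ transcript $\br \in \ZZ^{m_1}$ let $Z_\br = \{\bz \in \b^n : A_1 \bz = \br\}$ be the set of consistent inputs. Conditioned on round $1$ returning $\br$, the protocol finishes with a fixed non-adaptive batch of at most $c - m_1$ queries and an output, i.e.\ a $1$-round $\ZZ_{[-B,B]}$-LQP that must solve $\elemx$ on the domain $Z_\br$. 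Hence $c \ge m_1 + \max_{\br} \lincost_1(\elemx \text{ on } Z_\br)$, where $\lincost_1$ denotes the optimal $1$-round cost, and it suffices to exhibit one transcript $\br$ whose fiber $Z_\br$ still contains a large instance of $\elemx$.

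The heart of the argument is a combinatorial statement about the columns $\mathbf{c}_1,\dots,\mathbf{c}_n \in \ZZ_{[-B,B]}^{m_1}$ of $A_1$ (recall $A_1\bz = \sum_{i\in Z}\mathbf{c}_i$). Two mechanisms can keep a fiber rich. (i) If a column value $\bw$ is shared by $\mu$ indices forming a set $S$, then every input supported in $S$ of weight exactly $\lfloor\mu/2\rfloor$ has round-$1$ transcript $\lfloor\mu/2\rfloor\bw$; taking $\br$ equal to this value (and the input zero outside $S$) makes $Z_\br$ contain all such subsets of $S$, so $\Pi$'s continuation must solve a weight-restricted instance of $\elemx$ on $\mu$ elements. (ii) If the columns ``cancel'' — i.e.\ there are pairwise disjoint nonempty $V_1,\dots,V_{n'}\subseteq[n]$ with $A_1\bone_{V_j}=\bzero$ for all $j$ — then with $\br=\bzero$ the map $\by\in\b^{n'}\mapsto\sum_{j:\,y_j=1}\bone_{V_j}$ embeds $\elemx_{n'}$ into $Z_{\bzero}$, every returned witness necessarily lies in an active block, and $\Pi$'s continuation becomes a $1$-round $\ZZ$-LQP for $\elemx_{n'}$ with coefficient magnitudes at most $nB$. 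The crux is to prove that for \emph{every} $A_1$ with $m_1$ rows, one of these mechanisms delivers a sub-instance of dimension $\widetilde\Omega(n/m_1)$. The delicate regime is when the columns are (nearly) zero-sum-free and no single value is heavy — e.g.\ pairwise distinct columns spanning a low-dimensional cone, where neither mechanism is immediately available; there one must argue that a well-chosen subset-sum fiber $Z_\br$ is nonetheless so large and ``interchangeable'' that no cheap $1$-round protocol can produce a common witness inside it. Establishing this dichotomy — using additive-combinatorial / extremal-set-theoretic input to manufacture many interchangeable configurations inside $Z_\br$, and upgrading \Cref{prop:int-one-round} to weight-restricted and fiber-restricted domains — is what I expect to be the main obstacle.

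Granting the above, the round-$2$ cost is bounded below by (the appropriate strengthening of) \Cref{prop:int-one-round}: a sub-instance of dimension $d$ forces $\Omega(d/\log(nB))$ additional queries, so with $d=\widetilde\Omega(n/m_1)$ we get $c \ge m_1 + \Omega\big(n/(m_1\log^{O(1)}(nB))\big)$. Optimizing this trade-off over $m_1$, with the two terms balanced near $m_1 \approx \sqrt n/\log(nB)$ and with careful tracking of the logarithmic losses incurred both in the combinatorial step and in \Cref{prop:int-one-round}, yields $c = \Omega(\sqrt n/\log^{3/2}(nB))$.
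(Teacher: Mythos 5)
Your high-level plan is the same as the paper's (fix the round-one matrix, find a large instance of \elemx embedded inside a single round-one fiber, then invoke a one-round lower bound for the residual problem), but the two steps you yourself flag as ``the main obstacle'' are exactly the mathematical content of the proof, and the proposal does not supply them. First, the dichotomy. Your mechanism (i) (a heavily repeated column value) is far too weak on its own: with coefficients in $\ZZ_{[-B,B]}$ there are up to $(2B+1)^{m_1}$ distinct column values, so no value need be repeated at all, and mechanism (ii) (disjoint zero-sum blocks) likewise need not exist. The paper's resolution (\Cref{clm:zz-sunflower}) is to work not with single columns but with \emph{sums of $t$ columns} for $t \approx d\log M$: by pigeonhole some value $\btr$ has a fiber of weight-$t$ sets of size at least $\binom{n}{t}M^{-d} \ge (n/2t)^t$, which is large enough to apply Rao's sunflower lemma (\Cref{thm:sunflower}); subtracting the common core of the sunflower yields $\tOmega(n/(d\log n\log M))$ pairwise disjoint sets with \emph{identical} (not necessarily zero) image under $A_1$. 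This $A$-uniform family subsumes both of your mechanisms and is precisely what handles your ``delicate regime'' of zero-sum-free, non-repeating columns.

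Second, the residual one-round bound. Once the disjoint blocks all have the same nonzero image $\bx$, fixing the round-one transcript forces you to fix the number of active blocks, so the lifted instance is only a \emph{cardinality-restricted} \elemx (the paper's $\elemxquarter_m$, with $m/4$ made prime via Bertrand's postulate). Your plan to ``upgrade'' \Cref{prop:int-one-round} does not go through by the decoding argument: under a weight promise you cannot peel off witnesses one at a time, since intermediate vectors leave the promise set, and the paper explicitly notes this. The replacement (\Cref{lem:int-quarter-oneround}) is a genuinely different argument: look at weight-$n/2$ inputs, find a large fiber, and either two of its members intersect in exactly $n/4$ (producing two disjoint weight-$n/4$ inputs with equal measurements, contradicting correctness) or the Frankl--Wilson theorem (\Cref{thm:frankl-wilson}) caps the fiber size, forcing $d = \Omega(n/\log M)$. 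Without these two ingredients your argument is a program rather than a proof; with them, balancing $d_1$ against $d_2 \gtrsim m/\log M$ and $m \gtrsim n/(d_1\log n\log M)$ gives the claimed $\Omega(\sqrt{n}/\log^{3/2}(nB))$, as in the paper.
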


The former result is straightforward, based on the simple observation that such an LQP can extract the {\em entire} input $\bz$ followed by basic information theoretic considerations. Incidentally, the problem of extracting all of $\bz$ using $\ZZ_{[0,1]}$-LQPs has a long history as the {\em coin weighing} problem, for which a 1-round $O(n / \log n)$ algorithm exists; see \Cref{sec:related}.

The significant result here is the latter. It again uses a round elimination strategy and, as before, the bird's-eye view is that we identify disjoint blocks of indices to engineer a suitable lifting. This time, the blocks arise out of extremal combinatorics considerations, specifically the sunflower lemma, in its recently strengthened form~\cite{Rao20}. Furthermore, upon carrying out this round elimination, we are left with a $1$-round LQP that solves $\elemx$ only under a cardinality constraint on the input set. To finish the proof, we must demonstrate hardness even for this special case. This is not as straightforward as \Cref{prop:int-one-round}: our argument to handle this hinges on the Frankl--Wilson theorem~\cite{FranklW81} on set systems with forbidden intersection sizes.

Attempts to extend the above proof outline to handle more than two rounds runs into technical issues of integer divisibility. We suspect that this is an artifact of our proof machinery and not inherent to the problem. We conjecture that every deterministic $k$-round $\ZZ_{[-B,B]}$-LQP requires cost $\tOmega(n^{1/k})$, suppressing polylogarithmic factors. Indeed, we believe that much more is true, and that a communication complexity analogue of such a tradeoff also holds. We shall take this up after a discussion of related work.

\subsection{Related Work and Connections} \label{sec:related}

Our work touches upon several themes with long histories of study in computer science: determinism versus randomization, adaptivity versus non-adaptivity, sublinear algorithms, and input access through structured queries. With these connections in mind, we recall a small number of works that are either close in spirit to ours or shed light on some aspect of this work.

The most basic query model is the Boolean decision tree. In this setting, deterministic and randomized complexities are polynomially related for total Boolean functions~\cite{BuhrmanW02,AmbainisBBLSS17}, whereas arbitrarily large gaps are possible for search problems~\cite{LovaszNNW95}. Parity decision trees---equivalent to our $\ZZ_2$-LQPs---have been studied in several works (e.g., \cite{ZhangS10,HatamiHL18} and the references therein), usually for Boolean functions and focusing on connections with communication complexity of XOR-composed functions. Beyond the Boolean---or indeed the discrete---setting lie {\em linear decision trees}, where the input is a {\em real} vector and one can query the {\em sign} of a linear form~\cite{DobkinL78,KaneLM19}. All such ``decision tree'' models are fully adaptive and the vast majority of works using them do not focus on amount of adaptivity as a resource.

At the other extreme is the (nonadaptive) linear sketching model, where a high-dimensional input is accessed through one batch of linear queries (equivalently, through a low-dimensional sketch of it produced by a sketching matrix). This paradigm is ubiquitous in data streaming algorithms and compressed sensing~\cite{Muthukrishnan05,Donoho06,Woodruff14,CormodeYi-book} and has connections to dimension reduction and metric embeddings. Some recent work carries the message that linear sketching might be a {\em complete} paradigm for a large class of data streaming algorithms~\cite{LiNW14} and certain communication protocols~\cite{KannanMSY18,HosseiniLY19}. Most work on linear sketching considers {\em randomized} sketches, since determinism often precludes sublinear cost.

Turning to determinism, the well-studied {\em coin weighing} problem, put in our terms, asks for a $\ZZ_{[0,1]}$-LQP that retrieves the entire input $\bz \in \b^n$. It has long been known that $(2\pm o(1))n/\log n$ nonadaptive queries are necessary and sufficient. Special cases and variants of this problem have been studied over the years; see \cite{ErdosR63} for some early history and \cite{Bshouty09,MarcoK13} for recent history. While some of these works consider adaptive LQPs, there is no strong rounds-vs-queries tradeoff for this problem, which is harder than $\elemx$.

The body of work on round complexity under linear queries is much smaller. There is one recent work very close to ours: Assadi, Chakrabarty, and Khanna~\cite{AssadiCK20} studied a problem very similar to $\elemx$ that they called \textsc{single-element-recovery}, where the input is a vector $\bx \in \RR_{\ge0}^n$, and by applying $\RR$-linear queries one wishes to recovery an arbitrary element from the support of $\bx$. While their query model is much stronger than our $\ZZ$-linear or $\ZZ_q$-linear queries, it is balanced by the $\RR_{\ge 0}$-valued inputs that prevent tricks to recover the entire input in one query. Their main theorem implies that the deterministic $k$-round search algorithm making roughly $k (n^{1/k} - 1)$ queries in total---very similar to \Cref{alg:det-protocol}---has cost exactly matching the lower bound.
%
Linear queries and adaptivity are also featured together in some work on {\em sparse recovery} problems. One such problem is to find an approximately closest $s$-sparse vector $\bx^\star$ to an input $\bx$,
using $\RR$-linear queries to the input and $r$ rounds of adaptivity. For this, \cite{PriceW13} have proven near optimal lower bounds of $\Omega(r (\log n)^{1/r})$ when $s=1$ and \cite{KamathP19} have extended them to small $s$, proving $\Omega(\frac{1}{r} s (\log n)^{1/r})$ queries are needed when $\log s < (\log n)^{1/r}$.

\medskip
A number of works consider rounds of adaptivity in query models beyond linear queries. Recent examples include works on maximizing submodular functions through adaptive oracle queries~\cite{BalkanskiS18}; on adaptivity hierarchy theorems in property testing~\cite{CanonneG18}; on identifying biased coins through pairwise comparisons or in multi-armed bandit settings~\cite{AgarwalAAK17}; and on finding approximately maximum bipartite matchings through demand queries and OR-queries~\cite{Nisan21}. Other works have studied adaptivity in the massive parallel communication/computation (MPC) model~\cite{BeameKS13} and in various graph query models~\cite{AbasiB19,BeameHRRS20}.

A rich body of work on cost/adaptivity tradeoffs is found in {\em communication complexity}, where adaptivity manifests as rounds of interaction. An early work~\cite{NisanW93} gave exponential separations between $k$ and $k+1$ rounds for all $k$ and introduced a round elimination paradigm that remains ubiquitous to this day. This work also explains how an earlier result~\cite{KarchmerW90} connecting circuit and communication complexities can be used to relate bounded-round communication complexity for a specific problem to the size of bounded-depth, unbounded fan-in formulas. More work has been spurred by applications of bounded-round communication lower bounds in data structures, where they provide lower bounds in the cell-probe model~\cite{MiltersenNSW98,Sen03,ChakrabartiR04,PatrascuT07,LiuPY16}; and in streaming algorithms, where they translate naturally to tradeoffs between the number of passes made over the input stream and the working memory required~\cite{GuhaM09,EmekR14,GuruswamiOnak16,ChakrabartiCM16,ChakrabartiW16}. In much of this body of work, round elimination is performed using {\em information theoretic} arguments that naturally provide lower bounds against randomized algorithms.

In contrast, it is rare to see deterministic tradeoffs where corresponding randomized ones do not hold because randomization makes the problem ``too easy.'' This is exactly the situation with $\elemx$, as shown by this work in the context of the randomized upper bounds (\Cref{sec:upper-bounds}) via $\ell_0$-sampling~\cite{JowhariST11}. In light of the preceding discussion, our instantiations of round elimination must use techniques beyond Shannon-style information theory. They indeed do. Our techniques therefore have the potential for further use in separating determinism from randomization in this fine-grained (round aware) sense.

Our query complexity results on $\elemx$ suggest a tantalizing communication complexity analogue. Let $\urs_n$ denote\footnote{The notation, due to Nelson and Yu~\cite{NelsonY19}, is to be read as ``universal relation with a subset constraint.''} the communication complexity problem where Alice and Bob receive sets $X, Y \subseteq [n]$ respectively with the promise that $Y \subset X$, and their goal is to produce an element in $X \setm Y$.
Clearly, a $k$-round query protocol for $\elemx$ making $q$ queries, with each answer lying in a set of size $M$, provides a $k$-round communication protocol for $\urs$ using at most $q\log M$ bits. Therefore, our results here would be subsumed, in an asymptotic sense, if one could resolve the following conjecture positively.

\begin{conj}
  Every deterministic $k$-round communication protocol for $\urs$ costs $\tOmega(n^{1/k})$ bits, suppressing polylogarithmic factors.
\end{conj}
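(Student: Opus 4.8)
The natural line of attack is to run the round-elimination engine behind \Cref{thm:z2-k-round,thm:zq-k-round,thm:int-lower-bounds}, now entirely in the language of set systems. Let $\Pi$ be a deterministic $k$-round protocol for $\urs_n$ of total cost $c$; say Alice speaks first, sending $c_1$ bits in round~$1$. The goal is an inductive claim of the shape: from $\Pi$ one extracts a $(k-1)$-round protocol of cost $c-c_1$ for $\urs_{n'}$ --- most likely the variant in which $|X|$ is promised to take a fixed value, echoing the cardinality-constrained instances in \Cref{thm:int-lower-bounds} --- with $n' \ge \tOmega(n/c_1)$ (even $n' \ge n/\poly(c_1)$ would suffice for the weaker bound $\tOmega(n^{\Omega(1/k)})$). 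Unrolling $k-1$ such steps leaves a $1$-round protocol for cardinality-constrained $\urs_{n'}$ with $n' \ge n / \prod_{i<k}\tO(c_i)$; ruling that out by a one-round lower bound of the form $c_k = \tOmega(n')$ gives $\prod_{i\le k}\tO(c_i) \ge \tOmega(n)$, and AM--GM on $c_1,\ldots,c_k$ then yields $c = \tOmega(n^{1/k})$.

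The $\urs$-specific work is in the block-finding step. Alice's first message partitions $2^{[n]}$ into at most $2^{c_1}$ classes; fix the most popular class $\cA$ (so $|\cA| \ge 2^{n-c_1}$) and pass to a single cardinality layer, so all sets in $\cA$ have a common size. One then wants disjoint ``blocks'' $B_1,\ldots,B_{n'} \subseteq [n]$ and a base set $R$ disjoint from them such that every lift $X' \mapsto R \cup \bigcup_{\ell \in X'} B_\ell$ of a reduced instance lands in $\cA$; Bob lifts by $Y' \mapsto R \cup \bigcup_{\ell \in Y'} B_\ell$, which preserves $Y' \subset X'$ and makes $X \setm Y = \bigcup_{\ell \in X' \setm Y'} B_\ell$ decode an element of $X' \setm Y'$. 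Given such a structure, Alice's round-$1$ message is a fixed value $\msg$ across all reduced instances, so rounds $2,\ldots,k$ of $\Pi$ (with $\msg$ hard-wired) form a $(k-1)$-round protocol for the reduced problem. A first foothold toward the blocks is the strengthened sunflower lemma~\cite{Rao20}: a set family of density $2^{-c_1}$ on one cardinality layer contains a $t$-petal sunflower with $t$ as large as roughly $n/\tO(c_1)$, and each individual petal, unioned with the core, lies in $\cA$.

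Two further ingredients would complete the argument. First, the one-round base case: the bottom of the recursion is a one-round protocol for $\urs_{n'}$ under a cardinality promise, and --- exactly as in the proof of \Cref{thm:int-lower-bounds} --- one shows it still costs $\tOmega(n')$ via a Frankl--Wilson argument~\cite{FranklW81} on the forbidden intersection sizes among the admissible witness sets. Second, one must control how the cardinality parameter and the dimension shrink across the $k$ recursion levels and check the shrinkage stays mild; this is the divisibility bookkeeping that currently restricts \Cref{thm:int-lower-bounds} to $k=2$, and we expect its $\urs$-analogue to account for the polylogarithmic slack allowed by the conjecture.

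The main obstacle, though, is upgrading the sunflower foothold to the structure the lift actually demands. A sunflower only certifies that each \emph{individual} petal lies in $\cA$, whereas the lift needs $R \cup \bigcup_{\ell \in X'} B_\ell \in \cA$ for \emph{every} subset $X' \subseteq [n']$ (or at least for a sufficiently rich cardinality-restricted family of them) --- that is, a large combinatorial subcube, or a Johnson-scheme analogue, sitting inside $\cA$. This is precisely where $\urs$ diverges from the linear-query setting: in \Cref{thm:z2-k-round,thm:zq-k-round} the ``round-$1$ transcript class'' is an \emph{affine subspace}, and so automatically carries the zero-sum / bounded-Davenport-constant structure that hands over many disjoint toggleable blocks; by contrast a message class in a communication protocol is just an arbitrary dense subset of $2^{[n]}$, and a first-moment calculation shows that a generic dense subset contains no combinatorial subcube of dimension more than about $\log n$ --- vastly smaller than the $n' \approx n/\poly(c_1)$ we need. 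Getting past this seems to require exploiting that a \emph{low-cost} $\urs$-protocol cannot behave arbitrarily on its message classes (for instance, that the inputs consistent with cheap continued play are themselves structured), or else routing through the Karchmer--Wigderson / formula depth-size connection touched on in \Cref{sec:related}. We do not know how to do either, which is why the statement is only a conjecture.
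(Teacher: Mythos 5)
You have not proved the statement, and you should be aware that the paper does not prove it either: it is stated only as a conjecture, with no proof offered, precisely because no argument of the kind you sketch is currently known. So there is no ``paper proof'' to compare against; the right assessment of your submission is whether the gap you acknowledge is the real one, and it is.

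Concretely, the fatal step is the one you name at the end. In the linear-query setting of \Cref{thm:z2-k-round,thm:zq-k-round}, the set of inputs consistent with the first measurement is the preimage of a linear map, and the $A$-uniform families of \Cref{clm:z2-partition,clm:zq-partition} automatically give the full lift: \emph{every} union $\bigcup_{i\in W} S_i$ (with the appropriate cardinality condition) produces the same first measurement, so the whole combinatorial subcube spanned by the blocks sits inside one transcript class. For a communication protocol, Alice's first-message class is just an arbitrary subset of $2^{[n]}$ of density $2^{-c_1}$, and the sunflower lemma only certifies that each individual petal (plus core) lies in the class --- nowhere near the subcube of dimension $\approx n/\poly(c_1)$ that the lifting in \Cref{lem:round-el-gen} would need; as you note, a generic dense family contains no subcube of dimension beyond $O(\log n)$, so no purely extremal statement about dense families can supply it. Your secondary ingredients (a one-round base case via Frankl--Wilson and the divisibility bookkeeping that already limits \Cref{thm:int-lower-bounds} to $k=2$) are plausible but also unverified; in particular the Frankl--Wilson collision argument uses the exact cancellation $A(\bx\setm\by)=A(\by\setm\bx)$, which has no direct analogue when ``same transcript'' carries no algebraic structure. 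In short: your write-up is an honest research plan that correctly locates the missing idea, but the statement remains exactly what the paper says it is --- a conjecture.
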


We find the above conjecture compelling because it would demonstrate a new phenomenon in communication complexity, where a problem is easy for one-round randomized {\em and} for interactive deterministic protocols, but exhibits a nontrivial tradeoff for bounded-round deterministic ones. 

In passing, we note that the $\urs$ problem was introduced in \cite{KapralovNPWWY17} where its randomized communication complexity was studied. The randomized lower bound was subsequently used by Nelson and Yu~\cite{NelsonY19} to prove the optimality of Ahn, Guha, and McGregor's graph sketching algorithm for graph connectivity~\cite{AhnGM12}. An outstanding open question about the latter problem (viewed as a communication problem where $n$ players, each holding a vertex neighborhood, talk to a coordinator who determines whether the graph is connected) is whether it admits a deterministic algorithm with sublinear communication. A better understanding of $\urs$ in the deterministic setting could be key to addressing this question.

There are also two problems similar to $\urs$ for which lower bounds have already been proven. The universal relation problem $\ur$ gives Alice and Bob unequal sets $X,Y \subseteq [n]$ and asks them to produce an element $i \in (X \setm Y) \cup (Y \setm X)$. This has deterministic communication complexity $\ge n +1$~\cite{TardosZ97}. The Karchmer-Wigderson game for $\textsc{parity}_n$ is the problem $\ur$ with the additional constraints that $|X|$ be even and $|Y|$ be odd; existing circuit complexity results~\cite{Hastad86,Rossman15} imply, as briefly explained in \Cref{sec:kw-lb}, that $k$-round deterministic communication protocols for this require $\Omega(k(n^{1/k}-1))$ bits of communication.

\section{Preliminaries} \label{sec:prelim}

Throughout the paper, we shall freely switch between the equivalent viewpoints of sets in $2^{[n]}$ and vectors in $\b^n$, using the notational convention that when an uppercase letter (e.g., $S,Z$) denotes a set, the corresponding lowercase boldface letter (e.g., $\bs,\bz$) denotes the characteristic vector of that set and vice versa.

\subsection{Various Definitions} \label{sec:defs}

The search problem $\elemx_n$ was already formally defined in \cref{eq:elemx-def-set}. We shall also work with special cases of this problem, where the cardinality of the input set is further restricted in some way. These are formalized as follows: we define
\begin{align}
  \elemxq_n &= \left\{ (Z,i) : Z \subseteq [n],\, i \in [n],\, \text{and } |Z| \not\equiv 0 \tpmod q \Rightarrow i \in Z \right\} \,; \label{eq:elemxq-def} \\
  \elemxqh_n &= \left\{ (Z,i) : Z \subseteq [n],\, i \in [n],\, \text{and } |Z| \equiv h \tpmod q \Rightarrow i \in Z \right\} \,; \label{eq:elemxqh-def} \\
  \elemxquarter_n &= \left\{ (Z,i) : Z \subseteq [n],\, i \in [n],\, \text{and } |Z| = n/4 \Rightarrow i \in Z \right\} \label{eq:elemxquarter-def} \,.
\end{align}

\begin{definition}[Protocol] \label{def:protocol}
Let $f \subseteq \b^n \times \cO$ be a search problem with input space $\b^n$ and output space $\cO$. A deterministic \emph{$k$-round $D$-linear query protocol} ($D$-LQP), $\Pi$, on this input space is a rooted tree of depth $k$ where each internal node $v$ is labeled with a matrix $A_v \in D^{d_v \times n}$; each leaf node with an output $o_\lambda \in \cO$; and the edges from a node $v$ to its children are labeled with the elements of $\cM_v := \{A_v \bz:\, \bz \in \b^n\}$ bijectively. The quantity $d_v$ of node $v$ is the \emph{cost} of the node, sometimes also denoted $\cost(v)$. Given an input $\bz \in \b^n$, the \emph{measurement at} internal node $v$ is $A_v \bz$. The transcript of $\Pi$ on $\bz$ ---denoted $\Pi(\bz)$---is the unique root-to-leaf path obtained by walking along the edges determined by these measurements; the \emph{$j$th measurement} is the label of the $j$th edge on this path; and the \emph{output} is the label $o_\ell$ of the leaf $\ell := \ell(\Pi(\bz))$ reached by this path. We say that $\Pi$ solves $f$ if $(\bz,o_\ell) \in f$ for every input $\bz$.

Since this paper is largely focused on deterministic complexity, henceforth we shall assume that all LQPs are deterministic unless stated otherwise.
\end{definition}

\begin{definition}[Cost]
The \emph{query cost} of a protocol $\Pi$ is:
\begin{align*}
  \cost(\Pi) &:= \max_{\bz \in \b^n} \cost(\Pi;\bz) \,,
  \qquad \text{where} \quad
  \cost(\Pi;\bz) := \sum_{v \text{ internal node on } \Pi(\bz)} d_v \,, 
\end{align*}
which is, informally, the number of linear queries performed when $\Pi$ executes on $\bz$. While we do not focus on {\em bit} complexity in this paper, it is worth noting that to make an information-theoretically fair comparison between different domains, one should consider the number of bits returned in response to all the queries. This number may be larger than $\cost(\Pi)$, though only by an $O(\log n)$ factor for $D = \ZZ_{[-B,B]}$ with $B = \poly(n)$, and not at all for $D = \ZZ_2$.
\end{definition}

\begin{definition}[Complexity] \label{def:lq}
The {\em $D$-linear query complexity} and $k$-round $D$-linear query complexity of a search problem $f$ are defined, respectively, to be
\begin{align*} 
  \LQ_D(f) &= \min\{\cost(\Pi): \Pi \text{ is a $D$-LQP that solves } f\} \,; \\
  \LQ_D^k(f) &= \min\{\cost(\Pi): \Pi \text{ is a $k$-round $D$-LQP that solves } f\} \,.
\end{align*}
\end{definition}

\subsection{Useful Results from Combinatorics} \label{sec:comb}

In the course of this paper, we will use several important theorems from combinatorics. For results on $\ZZ_q$-LQPs (proved in \Cref{sec:zq-linear-measurements}), we use the following result of van Emde Boas and Kruyswijk \cite{EmdeBoasK69} on zero sumsets, slightly reworded to use modern notation.

\begin{theorem}[\cite{EmdeBoasK69}] \label{thm:emdeboas}
Let $G$ be a finite abelian group with exponent\footnote{The exponent of a group is the least common multiple of the orders of its elements.} $\exp(G)$ and order $|G|$. Let $s(G)$ be the minimal positive integer $t$ for which any sequence of $t$ elements from $G$ has a nonempty subsequence which sums to zero. Then $s(G) \le \exp(G) (1 + \ln\frac{|G|}{\exp(G)})$. \qed
\end{theorem}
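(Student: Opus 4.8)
Here $s(G)$ is exactly the \emph{Davenport constant} $\D(G)$, so the task is to bound the length of a \emph{zero-sum-free} sequence over $G$ — one with no nonempty zero-sum subsequence — by $n\bigl(1+\ln(N/n)\bigr) - 1$, where $n := \exp(G)$ and $N := |G|$; equivalently $\D(G) \le n\bigl(1+\ln(N/n)\bigr)$ up to harmless rounding. I would prove this by induction on $N$. The base case is $G$ cyclic of order $n$: the $n+1$ partial sums of any length-$n$ sequence cannot be pairwise distinct, so $\D(\ZZ_n) \le n = n(1+\ln 1)$. For the inductive step, fix $x \in G$ of maximal order $n$; when $G$ is noncyclic, $\langle x\rangle$ is a proper subgroup, so it sits inside a maximal subgroup $H$, which then has prime index $p := [G:H]$ and (containing $x$) exponent $\exp(H) = n$. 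The key is to establish a reduction inequality of the shape
\[
  \D(G) \;\le\; \D(H) + n\left(1 - \frac1p\right).
\]
Granting this, applying it repeatedly down a chain $G \supsetneq H^{(1)} \supsetneq \cdots \supsetneq \langle x\rangle$ (each $H^{(j)}$ a maximal subgroup of its predecessor containing $\langle x\rangle$, of prime index $p_j$) yields $\D(G) \le n + \sum_j n(1 - 1/p_j) \le n + n\sum_j \ln p_j = n + n\ln(N/n)$, using the elementary inequality $1 - 1/p \le \ln p$ and $\prod_j p_j = [G:\langle x\rangle] = N/n$.

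The heart of the matter — and where I expect the real difficulty — is the single-prime reduction inequality. My plan for it: given a zero-sum-free sequence $S$ over $G$ of length $\D(G) - 1$, examine the images of its terms in $G/H \cong \ZZ_p$ and greedily pull out disjoint nonempty subsequences whose sums lie in $H$. Each such sum is an element of $H$; if we could extract $\D(H)$ of them, their partial sums would contain a zero-sum partial sum, hence a zero-sum subsequence of $S$ — so at most $\D(H)-1$ blocks are extractable. It then remains to show that these blocks, together with the terms of $S$ already lying in $H$, account for all but at most $n(1-1/p)$ of the length of $S$: the extraction is efficient — blocks are not small and few terms are stranded — precisely because $\exp(G)=n$ constrains long runs through the cosets of $H$. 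Establishing this with the exact constant $n(1-1/p)$, rather than the lossy $\exp(G)\cdot\D(G/H)=np$ of a naive multiplicative argument (which would only give $\log_2(N/n)$ with a bad leading constant), is the delicate step; the natural tools here are Kneser's addition theorem applied to the set of partial subset-sums, together with a pigeonhole on partial sums reduced modulo $H$.

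If the combinatorial extraction proves stubborn, the fallback is the group-algebra route: a zero-sum-free sequence $(g_1,\dots,g_t)$ over $G$ forces $\prod_{i=1}^t([g_i]-1)\neq 0$ in the group algebra $\FF[G]$, since on expansion the coefficient of $[0]$ is $(-1)^t$. When $G$ is a $p$-group, $\FF_p[G]$ is local with nilpotent augmentation ideal $I$ and each $[g]-1\in I$, so the nilpotency index of $I$ — computable from $\exp(G)$ and $|G|$ — caps $t+1 = \D(G)$ (Olson's method). The remaining, and analogous, difficulty is to combine these prime-power estimates across the Sylow decomposition of $G$ to recover the uniform bound $n(1+\ln(N/n))$.
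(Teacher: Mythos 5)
The paper does not prove this statement at all—it is quoted from van Emde Boas and Kruyswijk—so your attempt stands or falls on its own, and as written it has a genuine gap. Everything hinges on the single-prime reduction $s(G) \le s(H) + n\bigl(1-\tfrac1p\bigr)$ for a maximal subgroup $H$ of prime index $p$ containing an element of maximal order; the cyclic base case and the telescoping arithmetic (via $1-\tfrac1p \le \ln p$ and $\prod_j p_j = N/n$) are fine, but the reduction itself is exactly where the difficulty sits, and your sketch of it does not work. Concretely, the proposed accounting—extract at most $s(H)-1$ disjoint blocks whose sums lie in $H$, then argue that these blocks together with the terms already in $H$ account for all but $n\bigl(1-\tfrac1p\bigr)$ of the sequence—fails already for $G=\ZZ_2^r$ with $H$ the even-weight hyperplane and the zero-sum-free sequence $e_1,\dots,e_r$: all $r$ terms lie outside $H$, every block has size at least $2$, so at most $\lfloor r/2\rfloor$ blocks can be extracted and at least $\lceil r/2\rceil$ terms are never absorbed, whereas your budget is $n\bigl(1-\tfrac1p\bigr)=1$. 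The inequality survives in that example only because the other constraint (that the block sums plus the in-$H$ terms form a zero-sum-free sequence over $H$) is far from tight there; exploiting that slack requires an argument you have not given, and Kneser's theorem is invoked by name but never put to concrete use.

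A further warning about how deep the missing lemma is: telescoping your reduction over $\ZZ_p^r$ yields $s(\ZZ_p^r) \le p + (r-1)(p-1) = r(p-1)+1$, which is the sharp upper bound of Olson's theorem, whose known proofs go through the group-algebra (local ring / nilpotent augmentation ideal) method. So an elementary greedy or pigeonhole proof of your reduction would constitute a new proof of that theorem—possible in principle, but not something to assume; at minimum the reduction cannot be treated as a routine step, and it is not even clear it is true in the stated generality. Your fallback (Olson's method) does handle $p$-groups, but the step "combine these prime-power estimates across the Sylow decomposition" is precisely the part for which no simple implementation is known, and the published proofs of the van Emde Boas--Kruyswijk bound avoid such a subgroup-by-subgroup descent altogether, using counting/character-sum arguments instead. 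As it stands, the proposal reduces the theorem to an unproven—and possibly comparably hard—lemma, so it does not yet constitute a proof.
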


A stronger result that $s(G) = 1 + r (q-1)$ applies when $G = \ZZ_{q}^r$ and $q$ is a prime power~\cite{Olson69}; it is conjectured that the prime-power constraint is unnecessary~\cite[conjecture~3.5]{GaoG06}.

When working over $\ZZ$ (in \Cref{sec:int-linear-measurements}), we use the well-known notion of a \emph{sunflower} and the following recent result of Rao~\cite{Rao20}, which refines the noted result of Alweiss, Lovett, Wu, and Zhang~\cite{AlweissLWZ20} that improved the classic sunflower lemma of Erd\H{o}s and Rado~\cite{ErdosR60}. The note~\cite{BellCW20} further improves Rao's bound by replacing the $\log(pt)$ factor with $\log t$, but this will not affect our proof. Tao~\cite{Tao20} gives an alternative presentation of Rao's result which may be simpler to follow.

In a different part of our argument, we will need a well known theorem of Frankl and Wilson~\cite{FranklW81}.

\begin{theorem}[Rao] \label{thm:sunflower}
  There is a universal constant $c_1 > 1$ such that every family of more than $(c_1 p \log(pt))^t$ sets, each of cardinality $t$, must contain a \emph{$p$-sunflower}, defined as a family of $p$ distinct sets whose pairwise intersections are identical. \qed
\end{theorem}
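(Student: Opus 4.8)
The plan is to follow the ``spread family'' method of Alweiss, Lovett, Wu, and Zhang~\cite{AlweissLWZ20}, as streamlined in~\cite{Rao20,Tao20}. Call a nonempty family $\cG$ of finite sets \emph{$w$-spread} if $|\{G \in \cG : S \subseteq G\}| < w^{-|S|}\,|\cG|$ for every nonempty set $S$. I would prove the theorem from two ingredients: (i) a counting reduction showing that any large family of $t$-sets contains a ``link'' subfamily that is simultaneously $w$-spread and still large, for $w$ of order $p\log(pt)$; and (ii) a probabilistic lemma saying that a $w$-spread family of bounded-size sets, with $w$ of that order, contains $p$ pairwise disjoint members. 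Reattaching the common core to those disjoint members then yields a $p$-sunflower.

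For (i), put $w := c_1 p\log(pt)$ and suppose $|\cF| > w^t$. Call a set $Y$ \emph{heavy} if $|\{F \in \cF : Y \subseteq F\}| \ge w^{-|Y|}|\cF|$; the empty set is heavy, so I would fix a heavy set $Y^\ast$ of maximum cardinality. Set $\cF' := \{F \setminus Y^\ast : F \in \cF,\ Y^\ast \subseteq F\}$, a family of sets all of size $t' := t - |Y^\ast|$, with $|\cF'| \ge w^{-|Y^\ast|}|\cF| > w^{t'}$; in particular $t' \ge 1$. Maximality of $|Y^\ast|$ makes $\cF'$ $w$-spread: for nonempty $S$ (necessarily disjoint from $Y^\ast$), $Y^\ast \cup S$ is not heavy, so $|\{F \in \cF : Y^\ast \cup S \subseteq F\}| < w^{-|Y^\ast|-|S|}|\cF| \le w^{-|S|}|\cF'|$, which is precisely the spread inequality for $S$ inside $\cF'$.

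The core of the argument is ingredient (ii): \emph{if $\cG$ is a $w$-spread family of sets each of size at most $s$ and $w \ge C p\log(ps)$ for a suitable absolute constant $C$, then $\cG$ has $p$ pairwise disjoint members.} I would color the ground set with $p$ colors independently and uniformly; since the color classes are disjoint and members lying in different classes are disjoint, it suffices to show that with positive probability every class fully contains some member of $\cG$. A single class contains each ground element independently with probability $1/p$, so this reduces to the spread lemma of~\cite{AlweissLWZ20}: a $w$-spread family of $\le s$-sets is fully contained in a random subset of relative density $1/p$ with probability $1 - o(1/p)$ once $w \gtrsim p\log(ps)$, and a union bound over the $p$ classes finishes. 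I expect this spread lemma to be the main obstacle: the known proof reveals the membership bits of a class one element at a time, tracks the subfamily of $\cG$ consistent with the revealed bits, uses $w$-spreadness to argue that excluding an element neither shrinks this subfamily much nor destroys its spread, and charges the event ``no member is ever completed'' to an impossibly efficient encoding of the random bits. Getting the logarithmic dependence right here is delicate, so I would defer to~\cite{Rao20,Tao20} for that computation.

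Finally, I would apply (ii) to $\cF'$: it is $w$-spread with sets of size $t' \le t$, so $p\log(pt') \le p\log(pt)$ and $w = c_1 p\log(pt) \ge C p\log(pt')$ whenever $c_1 \ge C$, and we obtain pairwise disjoint $F_1', \dots, F_p' \in \cF'$. Then $F_1' \cup Y^\ast, \dots, F_p' \cup Y^\ast$ are $p$ distinct members of $\cF$ (distinct because the $F_i'$ are nonempty and pairwise disjoint) with all pairwise intersections equal to $Y^\ast$ --- a $p$-sunflower. Hence any family of more than $(c_1 p\log(pt))^t$ sets of size $t$ contains a $p$-sunflower. (The refinement in~\cite{BellCW20} replaces $\log(pt)$ by $\log t$ and sharpens the constant, but leaves this outline intact.)
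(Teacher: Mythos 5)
This theorem is imported by the paper as a black box: it is stated with a citation to~\cite{Rao20} and no proof is given, so there is no internal argument to compare against --- the relevant comparison is with the literature you are following. Your outline reproduces the standard Alweiss--Lovett--Wu--Zhang/Rao reduction correctly. Part (i) is sound: heavy sets are subsets of members and hence of size at most $t$, so a maximal one $Y^\ast$ exists; the link family $\cF'$ has more than $w^{t'}$ members of common size $t' = t - |Y^\ast| \ge 1$, and maximality of $|Y^\ast|$ gives exactly the $w$-spread inequality you state. Part (ii) is also assembled correctly: the $p$ color classes are disjoint, each is a density-$1/p$ binomial subset, a union bound (which needs no independence between classes, and which is precisely why the failure probability must be $o(1/p)$, i.e.\ why $\log(pt)$ rather than $\log t$ appears at this level of the argument) gives $p$ members in distinct classes, and since $t' \ge 1$ these members are nonempty, hence genuinely distinct and pairwise disjoint; reattaching $Y^\ast$ yields $p$ distinct sets with all pairwise intersections equal to $Y^\ast$, matching the paper's definition of a $p$-sunflower.

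The one substantive caveat is that the heart of the theorem --- the spread lemma asserting that a $w$-spread family of sets of size at most $s$ is hit (in the containment sense) by a random density-$1/p$ set with failure probability $o(1/p)$ once $w \ge Cp\log(ps)$ --- is exactly the part you defer to~\cite{Rao20,Tao20}. Everything you do prove is the comparatively routine wrapper; the quantitative improvement over the classical Erd\H{o}s--Rado bound lives entirely in that deferred lemma, and your sketch of its proof (bit-by-bit revelation plus a compression/encoding argument) is too coarse to certify the $p\log(ps)$ dependence on its own. So as a blind, self-contained proof the proposal has an acknowledged hole at the decisive step; as a reduction-plus-citation it is correct, and it is in fact more detailed than the paper, which cites the full statement outright. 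If you intend the writeup to stand alone, the spread lemma is the piece you would have to prove; otherwise you may as well cite the whole theorem, as the paper does (noting, as you do, that~\cite{BellCW20} sharpens $\log(pt)$ to $\log t$ without affecting the application).
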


\begin{theorem}[Frankl--Wilson] \label{thm:frankl-wilson}
Let $m(n,k,\overline{l})$ be the largest size of a collection $\mathcal{F}$ of subsets of $\binom{[n]}{k}$ for which no two elements $F,F' \in \mathcal{F}$ have intersection size $l$. Then, if $k - l$ is a prime power:
\begin{alignat*}{2}
  m(n,k,\overline{l}) & \le \binom{n}{k - l -1} \,, 
    &\qquad& \text{ if } k \ge 2 l + 1 \,; \\
  m(n,k,\overline{l}) & \le \binom{n}{l}\binom{2k - l - 1}{k} \Big/ \binom{2k - l -1}{l}\,, 
    && \text{ if } k \le 2 l + 1 \,. \rlap{~~\qquad\qquad\qquad\qquad\qed}
\end{alignat*}
\end{theorem}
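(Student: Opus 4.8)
\medskip
\noindent\textbf{Proof proposal.}
This is a Frankl--Wilson-type ``forbidden intersection'' bound, and the plan is to prove it by the polynomial (linear-algebra) method, treating the two stated parameter ranges separately. Throughout, write $q := k-l$ (a prime power, by hypothesis), let $p$ be its characteristic, and identify each $F \in \binom{[n]}{k}$ with its characteristic vector $v_F \in \b^n$, so that $\langle v_F, v_{F'}\rangle = |F\cap F'|$.

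\emph{The case $k \ge 2l+1$.} Here $q \ge l+1 > l$, so $l$ is the \emph{only} element of $\{0,1,\dots,k-1\}$ congruent to $l$ modulo $q$ (indeed $l-q<0$ and $l+q=k$). Consequently, for distinct $F,F' \in \mathcal{F}$ the hypothesis ``$|F\cap F'|\ne l$'' upgrades to ``$|F\cap F'| \not\equiv k \pmod q$'' (since $|F\cap F'|\le k-1$ and $k\equiv l \pmod q$), whereas $|F\cap F|=k$. This is precisely the setting of the modular Frankl--Wilson theorem: I would assign to each $F$ the integer-valued polynomial $\binom{\langle v_F,x\rangle-l-1}{q-1}$, of degree $q-1$ in $\langle v_F,x\rangle$, reduce it modulo the relations $x_i^2=x_i$ to a multilinear polynomial of degree $\le q-1$, and evaluate over $\mathbb{F}_p$. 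By Lucas' theorem and the fact that $q$ is a power of $p$, this evaluates to $\binom{q-1}{q-1}=1$ at $v_F$ and to $0$ at $v_{F'}$ for every other $F'\in\mathcal{F}$ (because $|F\cap F'|\not\equiv l\pmod q$), so these $|\mathcal{F}|$ polynomials are linearly independent. They lie in the space of multilinear polynomials of degree $\le q-1$, which already gives $|\mathcal{F}| \le \sum_{i=0}^{q-1}\binom{n}{i}$; to sharpen this to $\binom{n}{q-1}=\binom{n}{k-l-1}$ I would use that every member has the same size $k\ge q-1$, so that on the slice $\{x\in\b^n:\sum_i x_i=k\}$ every multilinear monomial of degree $<q-1$ is a linear combination of those of degree exactly $q-1$ (via the inclusion identity $\mathbbm{1}[T\subseteq F]=\binom{k-|T|}{q-1-|T|}^{-1}\sum_{T'\supseteq T,\,|T'|=q-1}\mathbbm{1}[T'\subseteq F]$), collapsing the relevant dimension to $\binom{n}{q-1}$.

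\emph{The case $k \le 2l+1$.} Now $q\le l+1$ and the above reduction fails: when $k<2l$ the admissible intersection size $l-q=2l-k\ge 0$ is itself congruent to $l$ modulo $q$, so forbidding only the value $l$ no longer gives a clean residue condition. The appearance of the factor $\binom{n}{l}$ --- the dimension of the degree-$\le l$ multilinear polynomials --- in the target bound strongly suggests that the right approach is still a linear-algebra argument, but now with polynomials of degree up to $l$ per member; the extra factor $\binom{2k-l-1}{k}/\binom{2k-l-1}{l}$ (a product of $q$ consecutive integers divided by another, hence $\le 1$ with equality exactly when $k=2l+1$) should emerge as a ``defect'' from $p$-adic valuations in a Wilson-type rank computation. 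I expect this to be the main obstacle: one needs to multiply $\binom{\langle v_F,x\rangle-l-1}{q-1}$ by further factors killing the residual ``bad'' values $|F\cap F'|\in\{l-q,l-2q,\dots\}\cap\{0,\dots,k-1\}$ without disturbing the evaluation at $v_F$, but over $\mathbb{F}_p$ the naive linear factors $\langle v_F,x\rangle-(l-jq)$ also vanish at $v_F$ (there $\langle v_F,v_F\rangle-(l-jq)=q(1+j)\equiv 0$), forcing one to work over $\ZZ/p^{b}\ZZ$ for a well-chosen $b$ and track valuations along the diagonal, or else to reproduce Frankl and Wilson's original finer construction. (The tempting shortcut of averaging over all $(2k-l)$-element supersets of $[n]$ and bounding each restricted family via Erd\H{o}s--Ko--Rado on the complements --- which form an intersecting family of $(k-l)$-sets --- yields only the much weaker $\binom{n}{k}\binom{2k-l-1}{k}/\binom{2k-l}{k}$, of order $n^{k}$ rather than $n^{l}$.) Finally, I would sanity-check that the two bounds agree at $k=2l+1$, both equalling $\binom{n}{l}=\binom{n}{k-l-1}$.
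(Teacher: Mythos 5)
The paper does not actually prove this statement: it is quoted, with proof omitted, from Frankl and Wilson's 1981 paper, so your proposal has to stand on its own — and as written it does not. In the range $k \ge 2l+1$, your residue reduction and the polynomials $\binom{\langle v_F,x\rangle-l-1}{q-1}$ are sound and do yield $|\mathcal{F}| \le \sum_{i=0}^{q-1}\binom{n}{i}$ (the diagonal/off-diagonal evaluation argument over $\mathbb{F}_p$ is the standard one). But the sharpening to $\binom{n}{q-1}$ is where the argument breaks: your identity requires dividing by $\binom{k-|T|}{q-1-|T|}$ \emph{in} $\mathbb{F}_p$, and this coefficient can vanish modulo $p$ — e.g.\ $p=2$, $q=4$, $l=1$, $k=5$, $T=\varnothing$ gives $\binom{5}{3}=10\equiv 0 \pmod 2$ — so low-degree monomials cannot in general be rewritten over $\mathbb{F}_p$ in terms of degree-$(q-1)$ monomials on the slice (over $\mathbb{Q}$ this is true; in characteristic $p$ it can genuinely fail, which is tied to the drop in the mod-$p$ rank of inclusion matrices). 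The standard repair in the rational setting, the Alon--Babai--Suzuki ``swallowing'' polynomials $x^T(\sum_i x_i - k)$, is also not automatic here, since the relevant coefficients $|T|-k$ may again vanish mod $p$; this delicacy is exactly why prime-power modular versions of such theorems are treated with different tools.

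For the range $k \le 2l+1$ you openly have no proof, only a description of the obstruction, and this is the heart of the matter: Frankl and Wilson establish this part not by the $\mathbb{F}_p$ polynomial method but by a rank/eigenvalue analysis of set-inclusion matrices with careful tracking of $p$-adic valuations (in the spirit of Wilson's diagonal-form computations), which is a substantively different and more intricate argument than anything sketched here; the shape of the bound $\binom{n}{l}\binom{2k-l-1}{k}\big/\binom{2k-l-1}{l}$ comes out of that computation rather than from a degree count. Note also that this unproven case is the one the paper actually needs: \Cref{lem:int-quarter-oneround} applies the theorem with $k=n/2$, $l=n/4$, i.e.\ $k=2l$, squarely in the second range. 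So the proposal is a reasonable plan for the first inequality (modulo the dimension-collapse gap) but does not constitute a proof of the theorem; your sanity check that the two bounds coincide at $k=2l+1$ is correct but does not bridge either gap.
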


\subsection{Our Round Elimination Framework} \label{sec:framework}

We now describe a framework for our round elimination arguments. For this
section, we shall work over a general ring (with unity), $R$, and ``LQP'' will mean a $D$-LQP where $D \subseteq R$. Fix this ring $R$.

\begin{definition}[Homomorphism and shadowing]
A {\em protocol homomorphism} is a map $\varphi$ from a protocol $\Upsilon$ to a protocol $\Pi$ such that (i)~for any two nodes $u,v$ in $\varphi$, the node $\varphi(u)$ is a child of $\varphi(v)$ iff $u$ is a child of $v$, and (ii)~$\varphi$ maps leaves of $\Upsilon$ to leaves of $\Pi$.  We say that $\varphi$ is {\em cost-preserving} for each internal node $v$ of $\Upsilon$, $\cost(v) = \cost(\varphi(v))$.  We say that $\Upsilon$ {\em shadows $\Pi$ through $\varphi$} if $\varphi$ is injective, cost-preserving, and maps the root of $\Upsilon$ to a child of the root of $\Pi$. Notice that when this is the case, $\Upsilon$ is one round shorter than $\Pi$.
\end{definition}

Suppose we have an LQP $\Pi$ that operates on inputs in $\b^n$ and produces outputs in $[n]$. Further, suppose $S_1, \ldots, S_m \subseteq [n]$ is a collection of pairwise disjoint nonempty sets. We then define a certain LQP $\Pi^{(S_1,\ldots,S_m)}$ operating on inputs in $\b^m$ and producing outputs in $[m]$. To aid intuition, we describe the construction procedurally in \Cref{alg:lift-protocol}.

\begin{algorithm*}[!ht]
  \caption{~Outline of protocol $\Pi^{(S_1,\ldots,S_m)}$}
  \label{alg:lift-protocol}
  \begin{algorithmic}[1] 
    \State Lift our input $W \subseteq [m]$ to $Z := \bigcup_{i \in W} S_i \subseteq [n]$ (this step is only conceptual). \label{line:pi-lift}
    \State Mimic $\Pi$ by simulating the queries it would have made to its input $Z$. Emulate each such query by making the corresponding query to our own input $W$. This is indeed possible using linear queries to $W$. \label{line:pi-emu}
    \State Suppose $\Pi$ wants to output $h$. If $h \in S_i$, then output that index $i$ (which must be unique); otherwise, output an arbitrary index. \label{line:pi-output}
  \end{algorithmic}
\end{algorithm*}

To define $\Pi' := \Pi^{(S_1,\ldots,S_m)}$ formally, we first define the \emph{lifting matrix}
\begin{align} \label{eq:lift-mtx}
  L = [\bs_1 ~ \bs_2 ~ \cdots ~ \bs_m] \in R^{n \times m} \,,
\end{align}
whose entries lie in $\b$ and which maps the input space of $\Pi'$ to the input space of $\Pi$ according to \cref{line:pi-lift}, thanks to the pairwise disjointness of the sets $S_i$. At a given node $v$ of $\Pi$, labeled with $A_v \in \ZZ_q^{d_v \times n}$, the simulation in \cref{line:pi-emu} would retrieve the measurement $A_v \bz = A_v L \bw$. The protocol $\Pi'$ can get the same result by making the query $A_v L \in \ZZ_q^{d_v \times m}$.

Thus, the protocol tree for $\Pi'$ is formed as follows. Prepare a copy of $\Pi$ and let $\varphi \colon \Pi' \to \Pi$ be the natural bijection between their nodes. Label each internal node $v$ of $\Pi'$ with $A_v := A_{\varphi(v)} L$. Copy over all edge labels from $\Pi$ to $\Pi'$. For each leaf $\ell$ of $\Pi'$, if $o_{\varphi(\ell)} \in S_i$, then assign label $o_\ell := i$. If no such $i$ exists, assign $o_\ell := 1$ (say). This labeling is well defined because of the pairwise disjointness of the sets $S_i$.

\medskip
In the sequel, to perform round elimination, we shall use the construction of $\Pi'$ in a special way that we record in the lemma below. We also record a definition that will be relevant when invoking the lemma.

\begin{lemma} \label{lem:round-el-gen}
Suppose that $\Pi$ correctly solves $\elemx_n$ on inputs in $\cZ \subseteq \b^n$. 
Let $S_1,\ldots,S_m \subseteq [n]$ be pairwise disjoint and let $L$ be defined by \cref{eq:lift-mtx}.
Let $\rho$ be the root node of $\Pi$ and, for $\br \in R^{d_\rho}$, let $\cW_\br := \{\bw \in \b^m:\, L\bw \in \cZ$ and $A_\rho L\bw = \br\}$.
Then, there is a protocol $\Upsilon$ that shadows $\Pi$ and correctly solves $\elemx_m$ on each input in $\cW_\br$.
\end{lemma}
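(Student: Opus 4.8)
The plan is to build $\Upsilon$ as a lift of the appropriate one-round-shorter subtree of $\Pi$. Let $\rho$ be the root of $\Pi$; assuming $\br \in \cM_\rho$ (otherwise $\cW_\br = \emptyset$ and any shadow of $\Pi$ works), let $\Pi_\br$ be the subtree of $\Pi$ rooted at the child $c_\br$ of $\rho$ reached when the first measurement equals $\br$. This $\Pi_\br$ is a $D$-LQP on $\b^n$ with one fewer round than $\Pi$, and I would set $\Upsilon := (\Pi_\br)^{(S_1,\ldots,S_m)}$, the lift built as in \Cref{alg:lift-protocol} via the lifting matrix $L$ of \cref{eq:lift-mtx}. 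To see that $\Upsilon$ shadows $\Pi$, compose the natural node bijection $\Upsilon \to \Pi_\br$ with the inclusion $\Pi_\br \hookrightarrow \Pi$ to get a map $\varphi$; it is injective, is a protocol homomorphism, is cost-preserving (node $v$ of $\Upsilon$ carries the matrix $A_{\varphi(v)}L$, which has the same number of rows as $A_{\varphi(v)}$), and sends the root of $\Upsilon$ to $c_\br$, a child of $\rho$ — exactly the definition of shadowing.

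The substance is the correctness claim, and my approach is transcript-tracking. Fix $\bw \in \cW_\br$ and let $\bz := L\bw$, the characteristic vector of $Z := \bigcup_{i \in W} S_i$. By definition of $\cW_\br$ we have $\bz \in \cZ$, and the first measurement of $\Pi$ on $\bz$ is $A_\rho \bz = A_\rho L\bw = \br$, so $\Pi$ on input $\bz$ descends into $c_\br$ and thereafter runs $\Pi_\br$. I would then show, node by node, that the walk $\Upsilon(\bw)$ and the portion of $\Pi(\bz)$ inside $\Pi_\br$ are identified under $\varphi$: at a node $v$ of $\Upsilon$ the measurement is $A_{\varphi(v)}L\bw = A_{\varphi(v)}\bz$, which is exactly the measurement $\Pi$ makes at $\varphi(v)$, so corresponding edges are taken at every step and hence $\varphi(\ell(\Upsilon(\bw))) = \ell(\Pi(\bz))$.

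Finally I would combine this with the hypothesis that $\Pi$ solves $\elemx_n$ on $\cZ$. Writing $\ell' := \ell(\Upsilon(\bw))$ and $\ell := \varphi(\ell')$, we get $(\bz, o_\ell) \in \elemx_n$. If $\bone^\txp\bw = 0$ there is nothing to prove; otherwise $W \neq \emptyset$, so (using that the $S_i$ are nonempty) $Z \neq \emptyset$, hence $\bone^\txp\bz > 0$ and $o_\ell \in Z$. By pairwise disjointness of the $S_i$ there is a unique $i$ with $o_\ell \in S_i$, and that $i$ lies in $W$; by the output rule of the lift (\cref{line:pi-output}), $o_{\ell'} = i \in W$, i.e.\ $(\bw, o_{\ell'}) \in \elemx_m$, as required.

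I expect no deep obstacle — the whole argument is bookkeeping — but the two points to handle carefully are: (i) confirming that the first measurement value $\br$ really pins down which child-subtree of $\rho$ is entered on every relevant input, since this is what makes the lifted protocol's guarantee be exactly over the ``slice'' $\cW_\br$; and (ii) the emulation step \cref{line:pi-emu}, namely that queries to $\bz$ are recoverable as linear queries to $\bw$, which is the identity $A_v\bz = A_v L\bw$ resting on the disjointness of the $S_i$ (already recorded before the lemma). A cosmetic issue is that the lift copies all edge labels of $\Pi_\br$ even though $\cM_v$ may shrink after multiplication by $L$; the now-unreachable children can simply be pruned, or left in place since \Cref{def:protocol} constrains only the reachable part, with no effect on cost or correctness.
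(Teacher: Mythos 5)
Your proposal is correct and follows essentially the same argument as the paper's proof: the paper lifts all of $\Pi$ to $\Pi^{(S_1,\ldots,S_m)}$ and then restricts to the child subtree forced by the measurement $\br$, whereas you restrict to the subtree at $c_\br$ first and then lift, which yields the same protocol, and your transcript-tracking via $A_{\varphi(v)}L\bw = A_{\varphi(v)}\bz$ plus uniqueness of $i$ from disjointness is exactly the paper's correctness argument made explicit. The points you flag (nonemptiness of the $S_i$, copied edge labels possibly indexing unreachable children) are likewise handled implicitly by the paper's framework setup.
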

\begin{proof}
  Using the above setup and terminology, construct $\Pi' := \Pi^{(S_1,\ldots,S_m)}$ as in \Cref{alg:lift-protocol}. The given conditions imply that on all inputs in $\cW_\br$, the first measurement of $\Pi'$ is always $\br$ and thus leads an execution of $\Pi'$ to a particular child, $u$, of its root node. Thus, we can shrink $\Pi'$ to the subprotocol $\Upsilon$ rooted at $u$. Notice that the bijection $\varphi$ is a cost-preserving protocol homomorphism and so $\Upsilon$ shadows $\Pi$ through $\varphi|_{\Upsilon}$.
  
  By construction, $\Upsilon$ on input $\bw \in \cW_\br$ simulates $\Pi$ on $\bz := L\bw = \sum_{i \in W} \bs_i$, an input on which $\Pi$ correctly solves $\elemx_n$. Therefore, if $\Pi$ outputs $h$, then $h \in Z = \bigcup_{i \in W} S_i$. By the disjointness guarantee, there exists a unique $i \in W$ for which $h \in S_i$. As $\Upsilon$ reports precisely this $i$, it correctly solves $\elemx_m$ on $\bw$.
\end{proof}

\begin{definition}[Uniform family]
Fix a matrix $A \in R^{d \times n}$. An {\em $A$-uniform family of size $m$} is a collection of $m$ pairwise disjoint sets $S_1, \ldots, S_m \subseteq [n]$ such that $A\bs_1 = \cdots = A\bs_m = \br$, for some vector $\br \in R^d$.
\end{definition}

\section{Linear Queries Modulo 2} \label{sec:z2-linear-measurements}

We begin our study of the {\sc element-extraction} problem by considering $\ZZ_2$-linear queries. As noted in \Cref{sec:results}, we shall later generalize the results to $\ZZ_q$, but we feel it is worth seeing our framework in action in the especially clean setting of $\ZZ_2$. We begin by showing that the additional promise of odd cardinality on the input set $Z$ is crucial, or else there is no interesting rounds-vs-queries tradeoff to be had.

\begin{proposition}[Restatement of \Cref{thm:z2-hard}]
  $\LQ_{\ZZ_2}(\elemx_n) = n-1$.
\end{proposition}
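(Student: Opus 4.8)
The plan is to prove the two directions separately. For the upper bound, I would exhibit a deterministic $\ZZ_2$-LQP of cost $n-1$: simply query the $n-1$ linear forms $z_1, z_2, \ldots, z_{n-1}$ (the standard basis functionals, coefficients in $\b$). After these queries, the algorithm knows $z_1, \ldots, z_{n-1}$ exactly. If some $z_i = 1$ for $i \le n-1$, output the least such $i$. Otherwise, since we only needed to output something valid when $Z \ne \emptyset$, the only remaining case is $Z = \{n\}$ (or $Z = \emptyset$, where any output is fine), so output $n$. This is valid for $\elemx_n$ and costs $n-1$. One could alternatively present this as a single round with $n-1$ queries, emphasizing non-adaptivity, though for the complexity statement the number of rounds is irrelevant.

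The heart of the matter is the lower bound $\LQ_{\ZZ_2}(\elemx_n) \ge n-1$, which must hold even for fully adaptive protocols. Let $\Pi$ be any deterministic $\ZZ_2$-LQP solving $\elemx_n$, and consider its execution on the all-zeros input $\bz = \bzero$. This execution follows a root-to-leaf path; let the internal nodes on this path be $v_1, \ldots, v_t$ with query matrices $A_{v_1}, \ldots, A_{v_t}$ over $\ZZ_2$, and suppose for contradiction that the total cost $\sum_j d_{v_j} \le n-2$. Stack all these matrices into a single matrix $A \in \ZZ_2^{(\le n-2) \times n}$. Since $A$ has fewer than $n-1$ rows, its row space over $\ZZ_2$ has dimension $\le n-2$, so the kernel of $A$ (as a map $\ZZ_2^n \to \ZZ_2^{(\le n-2)}$) has dimension $\ge 2$, hence contains at least $4$ vectors; in particular it contains at least three nonzero vectors $\bv \in \ZZ_2^n$ with $A\bv = \bzero$. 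Now here is the key point specific to $\b$ versus $\ZZ_2^n$: a nonzero vector in $\ker A$, viewed as a $\b$-vector (its natural lift), is a legitimate nonzero input to $\elemx_n$, and it produces exactly the same measurements as $\bzero$ at each of $v_1, \ldots, v_t$ — because $A_{v_j} \bv = \bzero = A_{v_j}\bzero$ over $\ZZ_2$ for every $j$. Therefore $\Pi$ follows the identical path and reaches the identical leaf, outputting the same index $i^\star$. But $\Pi$ must be correct on every such nonzero input $\bv$: it must output an index in the support of $\bv$. Since there are at least three distinct nonzero vectors in $\ker A$, and since the set of indices that could conceivably equal $i^\star$'s coordinate being $1$ across all these supports need not intersect — concretely, I would pick two nonzero kernel vectors $\bv, \bv'$ with $v'_{i^\star} = 0$ (possible because $\{\bv \in \ker A: v_{i^\star} = 0\}$ is a subspace of dimension $\ge 1$ within a space of dimension $\ge 2$, so it contains a nonzero vector). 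Then $\Pi$ on input $\bv'$ outputs $i^\star$ but $v'_{i^\star} = 0$, contradicting correctness. Hence $t \ge n-1$, i.e., $\cost(\Pi) \ge n-1$.

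The main obstacle — really the only subtlety — is making sure the kernel argument correctly accounts for the adaptive structure: the matrices $A_{v_j}$ along the $\bzero$-path are fixed once we commit to following that path, so collecting them into one matrix $A$ is legitimate, and any input killed by all of them is forced down the same path. I'd want to state this cleanly, perhaps as a small observation that "an input $\bv$ with $A_{v_j}\bv = A_{v_j}\bzero$ for all $j$ on the $\bzero$-transcript yields the same transcript as $\bzero$." The rank/dimension bookkeeping (kernel dimension $\ge 2$ gives a nonzero vector vanishing on a prescribed coordinate) is routine linear algebra over $\ZZ_2$. Note this argument is essentially the folklore fact that $\ur$ (the universal relation) has deterministic communication complexity $\ge n+1$ — cited in the excerpt as \cite{TardosZ97} — specialized to parity queries; one could also phrase the proof via that connection, but the self-contained kernel argument above is cleaner and is what I would write.
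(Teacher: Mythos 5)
Your proposal is correct and follows essentially the same route as the paper: the same $n-1$ upper bound by querying individual bits, and the same lower bound via stacking the queries along the $\bzero$-path into a matrix $A$ with kernel of dimension $\ge 2$ and deriving a correctness contradiction at the shared leaf. The only (immaterial) difference is the final step: the paper takes three nonzero kernel vectors with $\bx = \by + \bz$ and notes $x_i = y_i = z_i = 1$ is impossible over $\ZZ_2$, whereas you directly pick a nonzero kernel vector vanishing at the leaf's output coordinate; both are routine consequences of $\dim\ker A \ge 2$.
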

\begin{proof}
The upper bound is achieved by the trivial $1$-round LQP (i.e., a sketch) that queries all but one of the individual bits of the input.

Now assume to the contrary that there is a $\ZZ_2$-LQP $\Pi$ with $\cost(\Pi) = d \le n-2$ that solves $\elemx_n$. Let $A \in \ZZ_2^{d \times n}$ be the matrix whose rows represent all queries along the path $\Pi(\bzero)$. Then $\dim\,\ker A \ge n-d \ge 2$, whence there exist distinct nonzero vectors $\by,\bz \in \ZZ_2^n$ such that $A\by = A\bz = \bzero$. Setting $\bx = \by + \bz$, we also have $A\bx = \bzero$. Thus, the three nonzero inputs $\bx, \by, \bz$ lead to the same leaf, namely $\ell(\Pi(\bzero))$, and produce the same output $i$, say. By the correctness of $\Pi$, we have $x_i = y_i = z_i = 1$, which contradicts $\bx = \by + \bz$. 
\end{proof}

Accordingly, for the rest of this section, we focus on the problem $\elemxodd_n$, as defined in \cref{eq:elemxq-def}. We shall prove \Cref{thm:z2-k-round} using a round elimination technique. As discussed in \Cref{sec:prelim}, this round elimination will be enabled by identifying a certain $A$-uniform family. The next lemma, establishing a useful fact about matrices over $\ZZ_2$, will provide us this family.

\begin{lemma} \label{clm:z2-partition}
  Every matrix $A \in \ZZ_2^{d \times n}$, admits an $A$-uniform family $S_1, \ldots, S_m$ of size $m \ge \ceil{n/(d+1)}$ such that each cardinality $|S_i|$ is odd.
\end{lemma}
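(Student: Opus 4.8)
The plan is to convert the odd-cardinality requirement into a purely linear-algebraic condition and then build the family by a peeling argument over $\ZZ_2$.

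\textbf{Step 1: a parity coordinate.} I would form $\widetilde A = \left[\begin{smallmatrix} A \\ \bone^\txp \end{smallmatrix}\right] \in \ZZ_2^{(d+1)\times n}$, so that $\widetilde A\bs$ simultaneously records $A\bs$ and the parity $\bone^\txp\bs$ of $|S|$. It then suffices to produce pairwise disjoint nonempty $S_1,\ldots,S_m$ with $\widetilde A\bs_1 = \cdots = \widetilde A\bs_m = \widetilde\br$ for some $\widetilde\br$ whose last coordinate is $1$: the common value forces every $|S_i|$ odd, and deleting the last coordinate of $\widetilde\br$ exhibits an $A$-uniform family. One small fact is used throughout: over $\ZZ_2$ the set of subset-sums $\{\, \widetilde A\bs : S\subseteq T \,\}$ equals the $\ZZ_2$-span of the columns of $\widetilde A$ indexed by $T$ (coefficients range over $\b$), and for $T\neq\emptyset$ this span always contains a vector with last coordinate $1$.

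\textbf{Step 2: peeling short blocks.} It remains to show that any $n$ vectors in $\ZZ_2^{d+1}$, each with last coordinate $1$, contain $\geq \ceil{n/(d+1)}$ pairwise disjoint odd-size subsets with a common sum. I would first commit to the eventual common target $\br$ --- chosen so as to maximize the number of disjoint odd subsets summing to it --- and then peel such subsets off greedily while maintaining a reservoir $U$ of unused indices. The bound $|S_i|\le d+1$ comes from linear dependence: $d+1$ vectors in $\ZZ_2^d$ (equivalently, once the parity row is taken into account, $d+2$ vectors in $\ZZ_2^{d+1}$) are always dependent, so as long as $|U|$ is large there is a short nonempty subset of $U$ hitting the target, and whether it can be chosen of odd size is governed by whether $\bone$ restricted to $U$ lies in the row space of $A|_U$ --- a rank computation, clean because $\ZZ_2$ is a field. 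Since each peeled block consumes at most $d+1$ indices and yields one member of the family, and the process can halt only when $|U|\le d$, this produces at least $(n-d)/(d+1)$ blocks; a short separate treatment of the boundary (the case $n\le d+1$, where a single singleton already meets the bound, and the case $(d+1)\nmid n$) then upgrades this to $\ceil{n/(d+1)}$.

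\textbf{The main obstacle.} The real difficulty is that four demands must be met at once: pairwise disjointness, at least $\ceil{n/(d+1)}$ blocks with nothing lost at the $(d+1)\nmid n$ boundary, odd cardinalities, and a single common value. The naive shortcuts break on the last two points. Cutting $[n]$ into consecutive length-$(d+1)$ windows and taking a dependency in each can be forced to produce an even-size block, and the dependencies available in different windows may be realizable only with mutually incompatible target vectors, so no common $\br$ survives. An induction that fixes one coordinate of $\ZZ_2^{d+1}$ at a time loses a factor of $2$ per step and is far too lossy. Hence the choice of common target $\br$ together with the selection of blocks has to be coordinated globally rather than window-by-window, and it is in making this global choice work --- and in squeezing out the exact bound $\ceil{n/(d+1)}$ rather than something a constant factor smaller --- that the linear algebra over $\ZZ_2$ must be deployed carefully.
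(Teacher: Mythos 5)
Your Step 1 (appending the parity row $\bone^\txp$ to $A$ and reducing to finding disjoint sets with a common $B$-sum whose last coordinate is $1$) is exactly how the paper begins. But the heart of the lemma is Step 2, and there your argument has a genuine gap: you propose to \emph{pre-commit} to a single target $\br$ (``chosen so as to maximize the number of disjoint odd subsets summing to it'') and then peel blocks greedily, invoking linear dependence to say that a large reservoir $U$ always contains a short subset ``hitting the target.'' Linear dependence only guarantees a nonempty subset of $\le d+2$ columns of $B$ summing to $\bzero$ (which, because of the parity row, is necessarily of \emph{even} size --- the wrong kind); it gives no guarantee whatsoever that a \emph{prescribed nonzero} target $\br$ is a subset-sum of the columns indexed by $U$, short or otherwise, and indeed $\br$ may simply fail to lie in $\gen{B_U}$ once earlier blocks have been removed. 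The rank condition you mention (whether $\bone|_U$ lies in the row space of $A|_U$) controls whether parity can be toggled among subsets with equal $A$-sum, but does not address this target-containment problem, and you give no lower bound on the maximum you are optimizing over. Your closing paragraph correctly identifies this coordination issue as ``the main obstacle,'' but the proposal does not resolve it --- so the bound $m \ge \ceil{n/(d+1)}$ is never actually established.

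The missing idea, which the paper supplies, is to reverse the order of choices: first greedily partition $[n]$ into blocks $T_1,\ldots,T_m$, where each $T_i$ is a \emph{maximal} set of remaining indices whose $B$-columns are linearly independent (so $|T_i| \le d+1$ and $m \ge \ceil{n/(d+1)}$), then observe that maximality forces the spans to be nested, $\gen{B_{T_1}} \supseteq \gen{B_{T_2}} \supseteq \cdots \supseteq \gen{B_{T_m}}$: any column available later but outside $\gen{B_{T_{i-1}}}$ would have been added to $T_{i-1}$. Only \emph{now} is the common target chosen, as a single column $\bb_k$ with $k \in T_m$; it lies in every span, hence is a subset-sum within every block (coefficients $\b$ suffice over $\ZZ_2$, as you noted), the resulting $S_i \subseteq T_i$ are automatically disjoint, and oddness of every $|S_i|$ is free because every column of $B$ has last coordinate $1$. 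This also removes your boundary worries at $(d+1)\nmid n$, since all $m$ blocks, including the last, contribute a set. Without this (or some equivalent mechanism for producing a target realizable in every block), your peeling scheme cannot be completed as described.
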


\begin{proof}\label{proof:z2-matrix-lemma}
  Let $\bb_1, \ldots, \bb_n$ be the (nonzero) column vectors of the matrix
  \[
    B := \left[\begin{array}{c} A\\ \bone^\txp \end{array}\right] \in \ZZ_2^{(d+1) \times n}
  \]
  formed by appending the all-ones row to $A$. For each $Q \subseteq [n]$, let $B_Q$ be the collection of column vectors $\{\bb_i:\, i \in Q\}$ and let $\gen{B_Q}$ be the linear subspace of $\ZZ_2^{d+1}$ spanned by the vectors in $B_Q$.
  
  Partition $[n]$ into nonempty disjoint sets $T_1, \ldots, T_m$ iteratively, as follows. For each $i$, let $T_i$ be a {\em maximal} subset of $[n] \setm \bigcup_{j=1}^{i-1} T_j$ such that the vectors in $B_{T_i}$ are linearly independent. Since these vectors live in $\ZZ_2^{d+1}$, it follows that $|T_i| \le d+1$. We stop when $\bigcup_{j=1}^m T_m = [n]$, implying $m \ge \ceil{n/(d+1)}$.
  
  We claim that, for each $i \in \{2,\ldots,m\}$, we have $\gen{B_{T_{i-1}}} \supseteq \gen{B_{T_i}}$. Indeed, if there exists an element $\bx \in \gen{B_{T_i}}\setm \gen{B_{T_{i-1}}}$, then there is a set $Q \subseteq T_i$ for which $\bx = \sum_{h \in Q} \bb_j$. Since $\gen{B_{T_{i-1}}}$ is closed under linear combinations and does not contain $\bx$, there exists $h \in Q$ with $\bb_h \notin \gen{B_{T_{i-1}}}$. By construction, $h \notin \bigcup_{j=1}^{i-2} T_j$, so $h$ was not included in $T_{i-1}$ despite being available. This contradicts the maximality of $T_{i-1}$.
  
  Let $k$ be an index in $T_m$. Then $\bb_k \in \gen{B_{T_m}} \subseteq \gen{B_{T_{m-1}}} \subseteq \cdots \subseteq  \gen{B_{T_1}}$, so there must exist subsets $S_1, \ldots, S_m$ of $T_1, \ldots, T_m$ for which $B \bs_i = \bb_k$. The sets $\{S_i\}_{i=1}^{m}$ are pairwise disjoint because the sets $\{T_i\}_{i=1}^{m}$ are. Let $\br$ be the first $d$ coordinates of $\bb_k$; then for all $i \in [m]$, $A \bs_i = \br$. Therefore, $\{S_i\}_{i=1}^{m}$ is $A$-uniform. Finally, since the last coordinate of $\bb_k$ is $1$ and the last row of $B$ is $\bone^\txp$, for each $i \in [m]$, $\bone^\txp \bs_i = 1$, so $|S_i|$ is odd.
\end{proof}

\begin{lemma}[Round elimination lemma] \label{lem:z2-round-elim}
  Let $\Pi$ be a deterministic $k$-round $\ZZ_2$-LQP for $\elemxodd_n$, where $k \ge 1$. Then there exists a deterministic $(k-1)$-round $\ZZ_2$-LQP $\Upsilon$ for $\elemxodd_m$, such that
  \begin{thmparts}
    \item $\Upsilon$ shadows $\Pi$ through a (cost-preserving, injective) protocol homomorphism $\varphi_\Upsilon \colon \Upsilon \to \Pi$; \label{clm:z2-costs}
    \item $m \ge \ceil{n/(d + 1)}$, where $d$ is the cost of the root of $\Pi$. \label{eq:z2-shrinkage}
  \end{thmparts}
\end{lemma}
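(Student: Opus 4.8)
The plan is to assemble the two pieces already developed in this section: the linear-algebra fact in \Cref{clm:z2-partition} and the generic shadowing construction in \Cref{lem:round-el-gen}. Let $\rho$ be the root of $\Pi$ and $A := A_\rho \in \ZZ_2^{d \times n}$ its query matrix. First I would apply \Cref{clm:z2-partition} to $A$, obtaining an $A$-uniform family $S_1, \dots, S_m \subseteq [n]$ of size $m \ge \ceil{n/(d+1)}$, with each $|S_i|$ odd and a common vector $\br \in \ZZ_2^d$ satisfying $A\bs_1 = \cdots = A\bs_m = \br$. This size bound is exactly the shrinkage guarantee demanded of $\Upsilon$, so the rest of the proof is about running these sets through \Cref{lem:round-el-gen}.

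Next I would pin down the restricted input set $\cZ$ for the invocation of \Cref{lem:round-el-gen}. Because $\Pi$ solves $\elemxodd_n$, it correctly solves $\elemx_n$ on every $\bz \in \b^n$ with $\bone^\txp\bz$ odd: oddness of $|Z|$ implies $|Z| > 0$, so the output requirement $z_i = 1$ imposed by $\elemx_n$ on such $\bz$ is the same as the one imposed by $\elemxodd_n$. Accordingly set $\cZ := \{\bz \in \b^n:\, \bone^\txp \bz \equiv 1 \pmod 2\}$. With $L = [\bs_1 ~ \cdots ~ \bs_m]$ as in \cref{eq:lift-mtx}, for $\bw \in \b^m$ with corresponding set $W \subseteq [m]$ we have $L\bw = \sum_{i\in W}\bs_i$, so $\bone^\txp L\bw = \sum_{i\in W}|S_i| \equiv |W| \pmod 2$, using that every $|S_i|$ is odd; hence $L\bw \in \cZ$ exactly when $|W|$ is odd. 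Moreover, for such $\bw$, $A_\rho L\bw = \sum_{i\in W}A\bs_i = |W|\,\br = \br$ in $\ZZ_2^d$. Thus the set $\cW_\br = \{\bw:\, L\bw \in \cZ \text{ and } A_\rho L\bw = \br\}$ from \Cref{lem:round-el-gen} is precisely $\{\bw \in \b^m:\, |W| \text{ odd}\}$, i.e.\ the set of legal inputs to $\elemxodd_m$.

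Finally I would invoke \Cref{lem:round-el-gen} with these $S_i$ and this $\cZ$. It yields a protocol $\Upsilon$ that shadows $\Pi$ through a cost-preserving injective homomorphism (this is the first claim of the lemma, with $\varphi_\Upsilon := \varphi|_\Upsilon$), and that correctly solves $\elemx_m$ on every input in $\cW_\br$. Unwinding: on every $\bw$ with $|W|$ odd, $\Upsilon$ outputs some $i$ with $w_i = 1$, while on $\bw$ with $|W|$ even there is nothing to prove, so $\Upsilon$ solves $\elemxodd_m$; since shadowing makes $\Upsilon$ one round shorter, it is a $(k-1)$-round protocol; and $m \ge \ceil{n/(d+1)}$ is the required shrinkage bound. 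This gives the full statement.

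I do not expect a genuine obstacle, since the two hard components have already been isolated as \Cref{clm:z2-partition} (the combinatorial/linear-algebraic core) and \Cref{lem:round-el-gen} (the simulation bookkeeping). The only point requiring care is the parity check in the middle paragraph: one must make sure the vector $\br$ supplied by the uniform family is literally the value for which $\cW_\br$ coincides with the odd-cardinality inputs, and that lifting preserves the odd-cardinality promise. This is exactly why \Cref{clm:z2-partition} is stated with the extra demand that each $|S_i|$ be odd---the all-ones row appended to $A$ in its proof is the device that enforces it---and if that parity alignment failed, the lifted instances would not all satisfy the $\elemxodd$ promise and the recursion would collapse.
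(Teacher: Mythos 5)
Your proposal is correct and follows essentially the same route as the paper's proof: extract the $A$-uniform family with odd-sized blocks from \Cref{clm:z2-partition}, set $\cZ$ to the odd-cardinality inputs and $\br$ to the common image $A\bs_1$, and invoke \Cref{lem:round-el-gen}, checking that odd $|W|$ forces $L\bw \in \cZ$ and $A L\bw = |W|\br = \br$. The only (harmless) difference is that you note $\cW_\br$ equals the odd-cardinality inputs exactly, whereas the paper only needs, and only proves, the one inclusion.
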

\begin{proof}
  Let $A \in \ZZ_2^{d \times n}$ be the label of the root of $\Pi$.
  Let $S_1, \ldots, S_m$ be an $A$-uniform family of size $m \ge \ceil{n/(d+1)}$ with each $|S_i|$ odd, as guaranteed by \Cref{clm:z2-partition}. Let the lifting matrix $L$ be as given by \cref{eq:lift-mtx} and let $\br = A\bs_1$.
  We know that $\Pi$ correctly solves $\elemxodd_n$ on inputs in $\cZ := \{Z \subseteq [n]:\, |Z|$ odd$\}$. 
  Invoking \Cref{lem:round-el-gen}, we obtain a $(k-1)$-round $\ZZ_2$-LQP $\Upsilon$ that shadows $\Pi$ as required.
  
  It remains to show that $\Upsilon$ solves $\elemxodd_m$. The guarantee of \Cref{lem:round-el-gen} is that $\Upsilon$ correctly solves $\elemx_m$ on the input set $\cW_\br$ defined there. Thus, it suffices to show that if an input $W \subseteq [m]$ satisfies the promise of $\elemxodd_m$---i.e., $|W|$ is odd---then $W \in \cW_\br$. We reason as follows:
  \begin{alignat*}{2}
    |W| \text{ odd} 
    &\implies |L\bw| = \left|\sum_{i \in W} \bs_i\right| \equiv 1 \tpmod 2 &\qquad& \lhd \text{ each $|S_i|$ is odd} \\
    &\implies L\bw \in \cZ \,; && \lhd \text{ definition of $\cZ$} \\
  \intertext{and}
    |W| \text{ odd}
    &\implies A L\bw = A \sum_{i \in W} \bs_i = |W|\cdot A\bs_1 = |W| \cdot \br = \br \,. && \lhd \text{ definition of $A$-uniformity}
  \end{alignat*}
  This completes the proof, by definition of $\cW_\br$.
\end{proof}

The next step of the proof is to repeatedly invoke the above round elimination lemma and carefully control parameters. To perform a sharp analysis, we introduce the following concept.

\begin{definition} \label{def:division-seq}
A {\em division sequence} for $n$ is a finite sequence of positive integers $d_1 \ldots d_{j}$ for which
\begin{align}\label{eq:division-seq-precondition} 
  \left\lceil \cdots \left\lceil \left\lceil 
    n \cdot \frac{1}{d_1 + 1} \right\rceil \frac{1}{d_1 + 1} \right\rceil \cdots \frac{1}{d_{j} + 1}
    \right\rceil = 1 \,.
\end{align}
\end{definition}
\begin{lemma}\label{lem:division-seq}
Let $d_1, \ldots, d_{j}$ be a division sequence for $n$ minimizing $\sum_{h=1}^{j} d_h$. Then
\begin{align*}
  j n^{1/j} - j \le \sum_{h=1}^{j} d_h \le j \lceil n^{1/j} \rceil - j \,.
\end{align*}
\end{lemma}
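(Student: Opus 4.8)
The plan is to prove the two bounds separately. The lower bound will in fact hold for \emph{every} division sequence of length $j$, while the upper bound will use the hypothesis that the minimizer has length exactly $j$.

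\textbf{Lower bound.} Since $\lceil x \rceil \ge x$ for every real $x$, each ceiling-division step in \cref{eq:division-seq-precondition} sends a value $v$ to $\lceil v/(d_h+1) \rceil \ge v/(d_h+1)$. Composing all $j$ steps, the left-hand side of \cref{eq:division-seq-precondition} is at least $n / \prod_{h=1}^{j}(d_h+1)$. As that left-hand side equals $1$ by assumption, we obtain $\prod_{h=1}^{j}(d_h+1) \ge n$. Now AM--GM gives
\begin{align*}
  \sum_{h=1}^{j}(d_h+1) \;\ge\; j \Big(\prod_{h=1}^{j}(d_h+1)\Big)^{1/j} \;\ge\; j\, n^{1/j} \,,
\end{align*}
so $\sum_{h=1}^{j} d_h \ge j n^{1/j} - j$.

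\textbf{Upper bound.} Because ``minimizing $\sum d_h$'' ranges over division sequences of \emph{all} lengths, and the minimizer is assumed to have length $j$, the value $\sum_{h=1}^{j} d_h$ is exactly the minimum of $\sum d_h$ over all division sequences of length $j$. Hence it suffices to exhibit one length-$j$ division sequence whose sum is at most $j\lceil n^{1/j}\rceil - j$. Take $d_1 = \cdots = d_j = r-1$, where $r := \lceil n^{1/j} \rceil$, so that each $d_h + 1 = r$. Using the standard identity $\lceil \lceil a/b \rceil / c \rceil = \lceil a/(bc)\rceil$ for positive integers repeatedly, the iterated ceiling division in \cref{eq:division-seq-precondition} collapses to $\lceil n / r^{j} \rceil$; and $r^{j} = \lceil n^{1/j}\rceil^{j} \ge n$, so this equals $1$ (for $n \ge 1$). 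Thus the constant sequence is a valid division sequence for $n$ of length $j$, with sum $j(r-1) = j\lceil n^{1/j}\rceil - j$, and the upper bound follows.

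The step that needs the most care is not any single calculation but the self-referential setup: one must be explicit that $j$ is \emph{not} a free parameter but is determined by the (sum-)minimizing division sequence, and therefore that the global minimum coincides with the minimum among length-$j$ sequences --- this is exactly what makes the comparison against the constant sequence legitimate. The remaining routine points are the verification of the ceiling-division identity and the trivial degenerate case $n=1$ (where the empty sequence is the minimizer and the claimed bounds hold by convention).
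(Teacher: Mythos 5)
Your proof is correct and follows essentially the same route as the paper: dropping the ceilings and applying AM--GM for the lower bound, and comparing against the constant sequence $d_1=\cdots=d_j=\lceil n^{1/j}\rceil-1$ for the upper bound. Your extra care about the self-referential role of $j$ and the ceiling-collapse identity $\lceil\lceil a/b\rceil/c\rceil=\lceil a/(bc)\rceil$ only makes explicit steps the paper leaves implicit.
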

\begin{proof}
For the upper bound, let $d_1 = \ldots = d_{j} = \lceil n^{1/j} \rceil - 1$. For the lower bound, remove the ceiling operations in \cref{eq:division-seq-precondition} to get
\begin{align*}
    \frac{n}{\prod_{h=1}^{j} (d_h + 1)} \le 1 \,,
    & \qquad \text{which implies} \quad n^{1/j} \le \smash{\left(\prod_{h=1}^{j} (d_h + 1)\right)^{1/j}} \,.
\end{align*}
By the AM-GM inequality,
\[
    \sum_{h=1}^{j} d_h
    = j \left(\frac{1}{j}\sum_{h=1}^{j} (d_h + 1) - 1\right)
    \ge \smash{j \left( \left(\prod_{h=1}^{j} (d_h + 1)\right)^{1/j} - 1 \right)}
    \ge j(n^{1/j} - 1) \,. \qedhere
\]
\end{proof}

This brings us to the main result of this section: a rounds-vs-queries tradeoff.

\begin{theorem}[Restatement of \Cref{thm:z2-k-round}]
  $\LQ_{\ZZ_2}^k(\elemxodd_n) \ge k(n^{1/k} - 1)$.
\end{theorem}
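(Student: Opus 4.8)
The plan is to iterate the round-elimination lemma (\Cref{lem:z2-round-elim}) $k$ times and extract a contradiction from the resulting $0$-round protocol, with the parameter bookkeeping handled by the division-sequence machinery of \Cref{def:division-seq} and \Cref{lem:division-seq}. Let $\Pi$ be an arbitrary deterministic $k$-round $\ZZ_2$-LQP that solves $\elemxodd_n$; the goal is to show $\cost(\Pi) \ge k(n^{1/k}-1)$. Set $\Upsilon_0 := \Pi$ and $n_0 := n$, and for $i = 1, \ldots, k$ apply \Cref{lem:z2-round-elim} to the $(k-i+1)$-round LQP $\Upsilon_{i-1}$ (valid since $k-i+1 \ge 1$) to obtain a $(k-i)$-round $\ZZ_2$-LQP $\Upsilon_i$ for $\elemxodd_{n_i}$ together with a cost-preserving injective protocol homomorphism $\varphi_i \colon \Upsilon_i \to \Upsilon_{i-1}$ sending the root of $\Upsilon_i$ to a child of the root of $\Upsilon_{i-1}$, where, writing $d_i$ for the cost of the root of $\Upsilon_{i-1}$, we have $n_i \ge \lceil n_{i-1}/(d_i+1)\rceil$. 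Since $m \mapsto \lceil m/(d+1)\rceil$ is monotone, unrolling this recursion yields $n_k \ge \lceil\cdots\lceil n/(d_1+1)\rceil\cdots/(d_k+1)\rceil$, which is precisely the left-hand side of \cref{eq:division-seq-precondition} for the tuple $(d_1,\ldots,d_k)$.

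Next I would pin down two facts. For the cost, composing the homomorphisms $\varphi_1, \ldots, \varphi_i$ embeds $\Upsilon_i$ into $\Pi$ so that its root lands on a node $u_i$ at depth $i$, with $u_i$ a child of $u_{i-1}$ (here $u_0 = \rho$ is the root of $\Pi$) and $\cost(u_{i-1}) = d_i$ by cost-preservation; hence $u_0, u_1, \ldots, u_{k-1}$ all lie on a single root-to-leaf path of $\Pi$, namely the one ending at the image of $\Upsilon_k$'s root (a leaf of $\Pi$), and therefore $\cost(\Pi) \ge \sum_{i=1}^{k} \cost(u_{i-1}) = \sum_{i=1}^{k} d_i$. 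For the contradiction, note that $\Upsilon_k$ is a $0$-round LQP, i.e.\ a single leaf carrying a fixed output $j \in [n_k]$; if $n_k \ge 2$, then the input $\{j'\}$ with $j' \ne j$ has odd cardinality and so satisfies the promise of $\elemxodd_{n_k}$, yet $j \notin \{j'\}$, contradicting correctness. Hence $n_k \le 1$; combined with the fact that the left side of \cref{eq:division-seq-precondition} is always at least $1$ (as $n \ge 1$ and each division is by a positive integer), we get equality, i.e.\ $(d_1,\ldots,d_k)$ is a division sequence for $n$.

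Putting these together, $\cost(\Pi) \ge \sum_{i=1}^{k} d_i \ge \min\bigl\{\textstyle\sum_{h=1}^{k} d_h : (d_1,\ldots,d_k)\text{ is a division sequence for }n\bigr\} \ge k(n^{1/k}-1)$, where the last inequality is \Cref{lem:division-seq} with $j = k$; taking the minimum over all such $\Pi$ gives $\LQ_{\ZZ_2}^k(\elemxodd_n) \ge k(n^{1/k}-1)$. I do not expect a genuine obstacle here, since the substance has already been isolated in \Cref{clm:z2-partition}, \Cref{lem:z2-round-elim}, and \Cref{lem:division-seq}. The one point that needs care is the cost accounting in the second paragraph---verifying that the root costs $d_1, \ldots, d_k$ genuinely accumulate along one root-to-leaf path of $\Pi$, so that their sum really lower-bounds $\cost(\Pi) = \max_{\bz} \cost(\Pi;\bz)$---which is exactly why \Cref{lem:z2-round-elim} is phrased in terms of a cost-preserving injective homomorphism that maps the root to a child of the root.
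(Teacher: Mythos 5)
Your route is the paper's own: iterate \Cref{lem:z2-round-elim}, accumulate the root costs along a single root-to-leaf path of $\Pi$ via the cost-preserving injective homomorphisms, note that the terminal $0$-round protocol forces its input dimension to be $1$, and finish with \Cref{lem:division-seq}. There is, however, one gap in the bookkeeping: you apply the round elimination lemma exactly $k$ times, justifying step $i$ only by ``$k-i+1 \ge 1$''. This silently assumes that each intermediate protocol $\Upsilon_{i-1}$ still has an internal root carrying a query matrix. But a $k$-round protocol is a tree whose leaves may occur at depth less than $k$, and \Cref{lem:z2-round-elim} hands you the subtree rooted at one particular child of the root --- the child selected by the measurement value $\br$ coming from the $A$-uniform family --- and that child may already be a leaf. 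In that case some $\Upsilon_{i-1}$ with $i-1 < k$ is a $0$-round protocol, it has no root label $A$ to feed into \Cref{clm:z2-partition}, and your loop cannot continue; nothing in the construction lets you steer which child you land on, so this case cannot be excluded.

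The repair is exactly what the paper does: stop as soon as a depth-$0$ protocol appears, say after $j \le k$ eliminations. Correctness of that $0$-round protocol forces its dimension to be $1$ (your singleton argument), so $(d_1,\ldots,d_j)$ is a division sequence of length $j$ and $\cost(\Pi) \ge \sum_{i=1}^{j} d_i \ge j(n^{1/j}-1)$ by \Cref{lem:division-seq}; one then needs the additional observation that $z \mapsto z(n^{1/z}-1)$ is non-increasing, so $j(n^{1/j}-1) \ge k(n^{1/k}-1)$ for $j \le k$ --- an ingredient absent from your write-up precisely because you assumed $j=k$. (Alternatively, you could pad $\Pi$ so that every leaf sits at depth exactly $k$ by inserting zero-cost internal nodes, but then you must say so and check this is consistent with \Cref{def:protocol}.) Apart from this, your argument coincides with the paper's proof.
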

\begin{proof}
  Suppose that $\Pi$ is a deterministic $k$-round $\ZZ_2$-LQP for $\elemxodd_n$. Repeatedly applying \Cref{lem:z2-round-elim}, we obtain a sequence of protocols $\Pi=\Pi_1, \Pi_2, \ldots, \Pi_{j+1}$, which solve $\elemxodd$ on progressively smaller input sizes, until $\Pi_{j+1}$ is a degenerate depth-0 protocol (in which no queries occur).

  Let $d_i$ be the cost of the root $\rho_i$ of $\Pi_i$, for $1 \le i \le j$. As \Cref{clm:z2-costs} gives protocol homomorphisms $\varphi_{\Pi_{i+1}} : \Pi_{i+1} \to \Pi_i$, we find the the roots of each $\Pi_i$ correspond to nodes $u_i = (\varphi_{\Pi_2} \circ \cdots \circ \varphi_{\Pi_i})(\rho_i)$ in $\Pi$. In fact, the vertices $u_1,u_1,\ldots,u_{j+1}$ form a path from the root $\rho = u_1$ of $\Pi$ to the leaf $u_{j+1}$. The inputs of $\Pi_{j+1}$ lift to inputs of $\Pi$ which reach $u_{j+1}$. Lower bounding the query cost of $\Pi$ using this branch gives
  \begin{align} \label{eq:z2-lincost-sum}
    \cost(\Pi) \ge \sum_{i=1}^{j} \cost(u_i) = \sum_{i=1}^{j} d_i  \,.
  \end{align}

  Using \cref{eq:z2-shrinkage} repeatedly, $\Pi_{j+1}$ must solve $\elemxodd_m$, for some integer \begin{align*}
    m \ge \left\lceil \cdots \left\lceil \left\lceil n \cdot \frac{1}{d_1 + 1} \right\rceil \frac{1}{d_2 + 1} \right\rceil \cdots \frac{1}{d_j + 1}  \right\rceil \,.
  \end{align*}
  However, as $\Pi_{j+1}$ solves $\elemxodd_m$ without performing any queries, there must be a fixed index which is a valid output for all inputs $Z \in 2^{[m]}$ of odd size. This is only possible when $m = 1$; for any larger $m$, the inputs $Z = \{1\}$ and $Z' = \{2\}$ must produce different outputs.
  
  Therefore, the integers $d_1, \ldots, d_j$ form a division sequence for $n$. Applying \Cref{lem:division-seq} to \cref{eq:z2-lincost-sum},
  \begin{align*}
    \cost(\Pi) \ge \sum_{i=1}^{j} d_i \ge j n^{1/j} - j \ge k(n^{1/k} - 1) \,,
  \end{align*}
  where the last inequality follows from the fact that $\frac{d}{dz} \left[z (n^{1/z} - 1)\right] \le 0$ for all $z\ge0$.
\end{proof}

\section{Linear Queries Modulo \textit{\textbf{q}}} \label{sec:zq-linear-measurements}

First, we use \Cref{thm:emdeboas} to show that $\elemx_n$ is hard for $\ZZ_q$-LQPs.

\begin{proposition}[Restatement of \Cref{thm:zq-hard}]
  For every $q \ge 3$, we have $\LQ_{\ZZ_q}(\elemx_n) \ge n/(2q \ln q)$.
\end{proposition}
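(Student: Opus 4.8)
The goal is to show $\LQ_{\ZZ_q}(\elemx_n) \ge n/(2q\ln q)$ for $q \ge 3$. The idea mirrors the $\ZZ_2$ argument (\Cref{thm:z2-hard}), but we need a zero-sum statement to replace the elementary linear-algebra fact. Suppose toward a contradiction that $\Pi$ is a $\ZZ_q$-LQP solving $\elemx_n$ with $\cost(\Pi) = d < n/(2q\ln q)$. Let $A \in \ZZ_q^{d\times n}$ collect all the queries made along the transcript path $\Pi(\bzero)$ on the all-zeros input, and let $\bb_1, \dots, \bb_n \in \ZZ_q^d$ be the columns of $A$.

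\textbf{Key step: extracting a zero-sum subset.} The plan is to apply \Cref{thm:emdeboas} to the group $G = \ZZ_q^d$, which has exponent $\exp(G) = q$ and order $|G| = q^d$. The theorem gives $s(G) \le q(1 + \ln(q^d/q)) = q(1 + (d-1)\ln q) \le q d \ln q$ (using $1 \le \ln q \cdot$ something, or just $1 + (d-1)\ln q \le d\ln q$ since $\ln q \ge \ln 3 > 1$). Now I claim $s(G) \le n$: indeed, by the assumed bound, $qd\ln q < q \cdot \frac{n}{2q\ln q} \cdot \ln q = n/2 < n$. (I should double-check the constants: we want $s(G) < n$ so that the length-$n$ sequence $\bb_1, \dots, \bb_n$ has a nonempty zero-sum subsequence; a factor of $2$ of slack handles the $+1$ in $s(G)$ being the \emph{minimal} such $t$, so that a sequence of length $\ge s(G)$ suffices.) Hence there is a nonempty $Q \subseteq [n]$ with $\sum_{i \in Q} \bb_i = \bzero$ in $\ZZ_q^d$, i.e., $A\bq = \bzero$ where $\bq$ is the characteristic vector of $Q$.

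\textbf{Deriving the contradiction.} Since $A\bq = \bzero = A\bzero$ and every query along $\Pi(\bzero)$ is a row of $A$, the nonzero input $\bq$ produces the same measurement at the root as $\bzero$ does, hence follows the same root-to-leaf path and reaches the leaf $\ell(\Pi(\bzero))$ --- wait, I need to be slightly careful: I only know the first-round measurement agrees. But the first-round queries of $\Pi$ on input $\bq$ equal those on $\bzero$ (the root label is fixed), and they return the same value $\bzero$, so $\Pi$ proceeds to the same second-round node, whose queries are again among the rows of $A$ and again return $\bzero$ on $\bq$, and so on inductively. Thus $\Pi(\bq) = \Pi(\bzero)$ and the output on $\bq$ is some fixed index $i$. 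But $i$ is also the output on, say, any nonzero input of the form $\be_j$ for $j \in Q$ --- no, that's not right either. The clean way: the output $i = o_{\ell(\Pi(\bzero))}$ is a single fixed index. Consider the input $\bq' = \bq$ restricted to $Q \setminus \{i\}$ if $i \in Q$, or $\bq$ itself if $i \notin Q$; in the latter case correctness already fails since $q_i = 0$ yet $Q \ne \emptyset$. For the former, I'd instead argue: pick \emph{any} nonempty zero-sum subset not containing $i$. Actually the slickest route: among all nonempty $Q \subseteq [n]$ with $A\bq = \bzero$, the correctness of $\Pi$ forces $i \in Q$ for \emph{every} such $Q$ (since $\Pi(\bq) = \Pi(\bzero)$ outputs $i$ and correctness demands $q_i = 1$). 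But if $Q$ is a zero-sum set with $|Q| \ge 2$, then removing any element gives a set that is generally \emph{not} zero-sum, so this doesn't immediately contradict. The right fix is to apply the zero-sum theorem to the columns indexed by $[n] \setminus \{i\}$: since $n - 1 \ge s(G)$ still holds (same slack), there is a nonempty zero-sum $Q \subseteq [n]\setminus\{i\}$; then $\Pi(\bq) = \Pi(\bzero)$ outputs $i \notin Q$, contradicting correctness on the nonempty input $\bq$.

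\textbf{Main obstacle.} The delicate point is \emph{purely bookkeeping of constants}: making sure the bound $s(\ZZ_q^d) \le q(1 + (d-1)\ln q)$ from \Cref{thm:emdeboas} is strictly below $n-1$ under the assumption $d < n/(2q\ln q)$, including handling the fact that $s(G)$ is defined as the minimal $t$ that \emph{works} (so a sequence of length exactly $s(G)$ suffices, and we need $n - 1 \ge s(G)$), and the edge behavior for small $q$ and small $n$. I expect the factor-$2$ in the statement is precisely the cushion absorbing the ``$+1$'' terms and the $\max(\cdot, 1)$-type rounding; no clever idea is needed there, just careful inequalities using $\ln q \ge \ln 3 > 1$.
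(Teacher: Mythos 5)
Your proposal is correct and follows the paper's approach: both proofs take the matrix $A$ of all queries along the all-zeros transcript $\Pi(\bzero)$ and apply the van Emde Boas--Kruyswijk bound (\Cref{thm:emdeboas}) to its columns in $\ZZ_q^d$, using $\ln q \ge 1$ to absorb the additive terms. The only difference is the endgame: the paper spends the factor of $2$ to extract \emph{two disjoint} nonempty zero-sum subsets $Z_1, Z_2$ (from disjoint index blocks of size $d q \ln q$), so that the common leaf's output cannot lie in both, whereas you fix the leaf's output $i$ first and find a single nonempty zero-sum subset of $[n]\setminus\{i\}$; both variants close the argument, and your constant bookkeeping ($s(\ZZ_q^d) \le q d \ln q < n/2 \le n-1$) checks out.
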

\begin{proof}
  This is proven with the same strategy as for \Cref{thm:z2-hard}. Assume for sake of contradiction that $\cost(\Pi) \le \frac{n}{2 q \ln q}$. Let $\nu$ be the leaf $\ell(\Pi(\bzero))$. Let $A \in \ZZ_q^{d \times n}$ be the matrix containing all queries along the path from the root of $\Pi$ to $\nu$.
  
  By \Cref{thm:emdeboas}, since the group $\ZZ_q^d$ has order $q^d$, and exponent $q$, any sequence of $D \le q \left(1 + \ln(\frac{q^d}{q}) \right)$ elements in $\ZZ_q^d$ has a nontrivial subsequence summing to $\bzero$. As $q \ge 3$, $d q \ln q \ge D$. Thus, since $n \ge 2 d q \ln q$, picking disjoint subsets $I$ and $J$ of sizes $d q \ln q$ each, and applying the theorem implies there exist disjoint nonempty subsets $Z_1$ and $Z_2$ of $[n]$ for which the corresponding columns of $A$ sum to $\bzero$. In other words, $\Pi$ reaches the same leaf given $\bz_1$ and $\bz_2$, but the leaf cannot be assigned an output consistent with both.
\end{proof}

A similar strategy proves a lemma analogous to \Cref{clm:z2-partition}:

\begin{lemma} \label{clm:zq-partition}
  Every matrix $A \in \ZZ_q^{d \times n}$, admits an $A$-uniform family $S_1, \ldots, S_m$ where
  \begin{thmparts}
    \item $\displaystyle m \ge \frac{n}{(d+1) q \ln q} - 1$, and \label{clm:zq-m-size}
    \item each cardinality $|S_i| \equiv -1 \pmod q$. \label{clm:zq-partitions-parity}
  \end{thmparts}
\end{lemma}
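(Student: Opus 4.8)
The plan is to imitate the proof of \Cref{clm:z2-partition}, translating the linear-algebra there into the language of zero-sum sequences and using \Cref{thm:emdeboas} in place of the bound ``a linearly independent set in $\ZZ_2^{d+1}$ has size $\le d+1$''. As in that proof, let $B \in \ZZ_q^{(d+1)\times n}$ be the matrix obtained by appending the all-ones row $\bone^\txp$ to $A$, and let $\bb_1,\ldots,\bb_n \in \ZZ_q^{d+1}$ be its columns; each $\bb_j$ is nonzero since its last coordinate is $1$. View these columns as a sequence of elements of the abelian group $G := \ZZ_q^{d+1}$, which has exponent $q$ and order $q^{d+1}$. By \Cref{thm:emdeboas}, $s := s(G) \le q\bigl(1 + \ln(q^{d+1}/q)\bigr) = q(1 + d\ln q)$, and hence $s - 1 \le (d+1)q\ln q$ (this reduces to $q - 1 \le q\ln q$, which holds for every integer $q\ge 2$, comfortably so for $q\ge 3$ since then $\ln q > 1$). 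Note also $s \ge \exp(G) = q \ge 2$, so $s-1\ge 1$.

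Next I would greedily partition $[n] = T_1 \sqcup \cdots \sqcup T_M$ into nonempty blocks: having chosen $T_1,\ldots,T_{i-1}$, let $T_i$ be an inclusion-maximal subset of the remaining indices $[n]\setm\bigcup_{l<i}T_l$ such that the subsequence $(\bb_j)_{j\in T_i}$ is \emph{zero-sum free}, meaning no nonempty sub-sum equals $\bzero$ in $G$. Such a maximal $T_i$ exists and is nonempty whenever indices remain (a single nonzero $\bb_j$ is zero-sum free). By the very definition of $s(G)$, a zero-sum-free sequence has length at most $s-1$, so $|T_i| \le s-1$ for all $i$; therefore $n = \sum_{i=1}^M |T_i| \le M(s-1)$, giving $M \ge n/(s-1) \ge n/\bigl((d+1)q\ln q\bigr)$.

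Now fix any $k \in T_M$ and set $\bt := -\bb_k \in G$, whose last coordinate is $-1 \equiv q-1 \pmod q$. For each $i \in \{1,\ldots,M-1\}$, the index $k$ was still available when $T_i$ was chosen, so by maximality of $T_i$ the sequence $(\bb_j)_{j\in T_i\cup\{k\}}$ is \emph{not} zero-sum free; since $(\bb_j)_{j\in T_i}$ alone is zero-sum free, every witnessing nonempty sub-sum must include the term $\bb_k$. Hence there is a set $S_i \subseteq T_i$ with $\bb_k + \sum_{j\in S_i}\bb_j = \bzero$, i.e.\ $B\bs_i = \bt$; moreover $S_i \ne \emptyset$, as otherwise $\bb_k = \bzero$. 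The sets $S_1,\ldots,S_{M-1}$ are pairwise disjoint because the $T_i$ are. Letting $\br \in \ZZ_q^d$ denote the first $d$ coordinates of $\bt$, we get $A\bs_i = \br$ for every $i$, so $\{S_i\}_{i=1}^{M-1}$ is $A$-uniform; and since the last row of $B$ is $\bone^\txp$ and the last coordinate of $\bt$ is $q-1$, we have $|S_i| \equiv -1 \pmod q$. This produces an $A$-uniform family of size $m = M-1 \ge \frac{n}{(d+1)q\ln q} - 1$, as required.

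The argument is largely routine once the dictionary is set up; the only substantive points are (i) recognizing that the zero-sum constant $s(G)$ plays the role that the ambient dimension played over $\ZZ_2$, with \Cref{thm:emdeboas} supplying the size bound on each block, and (ii) that the common target should be $-\bb_k$ rather than $\bb_k$, precisely so the block cardinalities land at $-1 \bmod q$ instead of $+1$ (over $\ZZ_2$ these coincide, so the original proof did not need to negate). Unlike the $\ZZ_2$ case, no ``nesting of spans'' is needed: applying maximality directly to the single fixed index $k\in T_M$ already yields a common target valid for all earlier blocks, and that is the step that most needs care. The remaining bookkeeping — in particular the estimate $s(G) - 1 \le (d+1)q\ln q$ — is elementary.
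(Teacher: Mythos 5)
Your proof is correct and follows essentially the same route as the paper's: greedily partition $[n]$ into inclusion-maximal zero-sum-free blocks of the columns of $B = [\bone \mid A^\txp]^\txp$, bound each block's size by $s(\ZZ_q^{d+1})-1$ via \Cref{thm:emdeboas}, and use an index from the last block to obtain, by maximality, subsets of the earlier blocks all summing to $-\bb_k$, which forces both $A$-uniformity and $|S_i| \equiv -1 \pmod q$. The only (harmless) presentational difference is that the paper handles $q=2$ separately by a linear-independence argument, whereas you check that the bound from \Cref{thm:emdeboas} already covers $q=2$.
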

\begin{proof}
To be able to enforce constraints on the values $|S_i|$, we define $B := \left[\bone \mid A^\txp\right]^\txp \in \ZZ_q^{(d+1) \times n}$, and let $\bb_1,\ldots,\bb_n$ be its column vectors. We partition the columns of the matrix $B$ into disjoint subsets $D_1,\ldots,D_k$ of $[n]$ by the following iterative procedure. In the procedure, let $P$ be the set of indices of $[n]$ not yet chosen. Each set $D_i$ starts out as $\varnothing$; then beginning with $i=1$, each set $D_i$ is expanded by picking an index $j$ from $P$ for which no subset $H \subseteq (D_i \cup \{j\})$ has the property that $\sum_{h \in H} \bb_h = \bzero$; adding $j$ to $D_i$ and removing $j$ from $P$; until no more such indices can be found. When $D_i$ is done, start filling $D_{i+1}$, etc.

When $q=2$, each $D_i$ corresponds to a basis of a subspace of $\ZZ_2^{d+1}$, so $|D_i| \le d+1 < (d+1) 2 \ln 2$. For $q \ge 3$, we apply \Cref{thm:emdeboas}, using the fact that the group $\ZZ_q^{d+1}$ has order $q^{d+1}$ and exponent $q$. The maximum possible size of each set $D_i$ is then $\le q \left(1 + \ln(\frac{q^{d+1}}{q}) \right) - 1$. The upper bound $(d+1) q \ln q$ also holds here. Consequently, the number $k$ of sets formed is $\ge \frac{n}{(d+1) q \ln q}$. Pick some $t \in D_k$; for any $i < k$, since $t$ was not picked when $D_i$ was constructed, it must be the case that there is a subset $S_i \subseteq D_i$ for which $\sum_{h \in S_i} \bb_h + \bb_t = \bzero$. This implies $B \bs_i = \sum_{h \in S_i} \bb_h = -\bb_t$. Since the first row of $B$ is $\bone$, we have $|S_i| \equiv \sum_{h \in S_i} 1 \equiv -1 \pmod q$, so all the sets $S_i$ have size $-1 \pmod q$. Let $\br$ be the last $d$ entries of $-\bb_t$; then for all $i$, $B \bs_i = \br$. There are $m = k - 1 \ge \frac{n}{(d+1) q \ln q} - 1$ sets in total.
\end{proof}

Compared to $\elemxodd$, there is a slight weakening of the main round elimination lemma, which is a direct consequence of the weakened \Cref{clm:zq-partition}. Instead of directly lower bounding the cost of $\elemxq_n$, we prove separate lower bounds for each $\elemxqh_n$, for all $h \in \{1,\ldots,q-1\}$, and take their maximum. The search problem $\elemxqh_n$ is $\elemx_n$ with the additional promise that the input set $Z$ has size $\equiv h \pmod q$.

\begin{lemma}[Round elimination lemma] \label{lem:zq-round-elim}
  Let $\Pi$ be a $k$-round $\ZZ_q$-LQP for $\elemxqh_n$, where $k \ge 1$ and $h \in \{1,\ldots,q-1\}$. Then there exists a $(k-1)$-round $\ZZ_q$-LQP $\Upsilon$ for $\elemx_m^{(q, -h)}$, such that
  \begin{thmparts}
    \item $\Upsilon$ shadows $\Pi$ through a protocol homomorphism $\varphi_\Upsilon \colon \Upsilon \to \Pi$; \label{clm:zq-costs}
    \item $\displaystyle  m \ge \frac{n}{(d+1) q \ln q} - 1$, where $d$ is the cost of the root of $\Pi$. \label{eq:zq-m-size-lem}
  \end{thmparts}
\end{lemma}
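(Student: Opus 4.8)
The plan is to mimic the proof of \Cref{lem:z2-round-elim}, using \Cref{clm:zq-partition} in place of \Cref{clm:z2-partition} and invoking the general machinery of \Cref{lem:round-el-gen}. First I would let $A \in \ZZ_q^{d \times n}$ be the label of the root $\rho$ of $\Pi$, and apply \Cref{clm:zq-partition} to obtain an $A$-uniform family $S_1,\ldots,S_m$ of size $m \ge \frac{n}{(d+1)q\ln q} - 1$ with each $|S_i| \equiv -1 \pmod q$. Let $L$ be the lifting matrix of \cref{eq:lift-mtx} and set $\br = A\bs_1$ (so that $A\bs_i = \br$ for all $i$ by $A$-uniformity). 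Since $\Pi$ correctly solves $\elemxqh_n$, it correctly solves $\elemx_n$ on $\cZ := \{Z \subseteq [n] : |Z| \equiv h \pmod q\}$. Invoking \Cref{lem:round-el-gen} with this $\cZ$ and this $\br$ yields a $(k-1)$-round $\ZZ_q$-LQP $\Upsilon$ that shadows $\Pi$ through a cost-preserving injective protocol homomorphism $\varphi_\Upsilon$ and correctly solves $\elemx_m$ on every input in $\cW_\br$. This immediately gives part~\ref{clm:zq-costs}, and part~\ref{eq:zq-m-size-lem} is exactly the size bound from \Cref{clm:zq-partition}.

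The remaining content is to verify that $\Upsilon$ solves $\elemx_m^{(q,-h)}$, i.e., that every $W \subseteq [m]$ with $|W| \equiv -h \pmod q$ lies in $\cW_\br$. This is the computation analogous to the two displayed implications at the end of the proof of \Cref{lem:z2-round-elim}. On the cardinality side, $|L\bw| = \sum_{i \in W} |S_i| \equiv |W|\cdot(-1) \equiv h \pmod q$ since each $|S_i| \equiv -1$; hence $L\bw \in \cZ$. On the measurement side, $A L\bw = A\sum_{i \in W}\bs_i = \sum_{i\in W} A\bs_i = |W|\cdot \br$; but we need this to equal $\br$, and $|W| \equiv -h \pmod q$ only gives $|W|\cdot\br = -h\br$, not $\br$, unless $-h \equiv 1 \pmod q$. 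So the correct target promise for the smaller instance is forced: we pass from $\elemxqh_n$ to $\elemx_m^{(q,h')}$ where $h'$ is the residue class with $h' \cdot \br = \br$...

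Here lies the main obstacle, and it is exactly the ``slight weakening'' the text warns about. The value $A\sum_{i\in W}\bs_i$ equals $|W| \cdot \br$ in $\ZZ_q^d$, and this equals $\br$ precisely when $|W| \equiv 1 \pmod{\mathrm{ord}(\br)}$ coordinatewise --- but to keep the argument clean (and to not depend on the unknown additive order of $\br$) we instead want $|W| \equiv 1 \pmod q$, which holds whenever $|W| \equiv -h \pmod q$ and $h \equiv -1 \pmod q$; for general $h$ we reindex. The clean fix, which I expect the authors use: in $\cW_\br$ we only required $A_\rho L\bw = \br$, and since $|S_i|\equiv -1$, a set $W$ with $|W| \equiv -h$ satisfies... let me recompute: we want $|W| \cdot \br = \br$. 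Observe $\sum_{i\in W}\bs_i$ has each coordinate-sum contributing $|W|$ copies; but actually $\sum_{i \in W} A\bs_i = \sum_{i\in W}\br = |W|\br$ only as a formal scalar multiple. We do \emph{not} need $|W|\br = \br$; rather, pick any fixed $W_0$ with $|W_0| \equiv -h$, note $A L \bw_0 = |W_0|\br =: \br_0$, and run \Cref{lem:round-el-gen} with $\br_0$ in place of $\br$: then \emph{every} $W$ with $|W| \equiv -h \pmod q$ has $AL\bw = |W|\br = |W_0|\br = \br_0$ (since $|W| \equiv |W_0| \pmod q$ and $\br$ has entries in $\ZZ_q$), and $|L\bw| \equiv h \pmod q$ so $L\bw \in \cZ$. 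Hence $W \in \cW_{\br_0}$, and $\Upsilon$ (now the subprotocol rooted at the child reached by measurement $\br_0$) solves $\elemx_m$ on all such $W$, i.e., solves $\elemx_m^{(q,-h)}$. So the obstacle is purely bookkeeping: one must feed \Cref{lem:round-el-gen} the measurement value $\br_0 = |W_0|\br$ determined by the target residue $-h$, rather than $\br = A\bs_1$ itself; everything else is the $\ZZ_2$ proof with $-1$ in place of the odd-cardinality bookkeeping.
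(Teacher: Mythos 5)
Your proposal is correct and, after your mid-proof self-correction, coincides with the paper's argument: the paper simply takes $\br = -h\bx$ (where $\bx = A\bs_1 = \cdots = A\bs_m$) from the outset, which is exactly your $\br_0 = |W_0|\br$, and then verifies $|L\bw| \equiv h \pmod q$ and $AL\bw = -h\bx$ for every $W$ with $|W| \equiv -h \pmod q$ just as you do. The only cosmetic difference is that defining $-h\bx$ directly avoids needing a representative set $W_0$ to exist.
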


\begin{proof}
  Let $A \in \ZZ_q^{d \times n}$ be the label of the root of $\Pi$. \Cref{clm:zq-partition} guarantees that there exists an $A$-uniform family of size $m$, where $m$ satisfies \cref{eq:zq-m-size-lem}, and $A \bs_1 = \ldots = A \bs_m  = \bx$, and $|S_1| \equiv \ldots \equiv |S_m|  \equiv -1 \pmod q$. Let $L$ be the lifting matrix from \cref{eq:lift-mtx}, and $\br = -h \bx$. Applying \Cref{lem:round-el-gen} to $\Pi$, $L$ and $\br$, we obtain a $(k-1)$-round $\ZZ_q$-LQP $\Upsilon$ that shadows $\Pi$, and solves $\elemx_m$ on all inputs $W \subseteq [m]$ for which $A L \bw = \br$ and $|L \bw| \equiv h \pmod q$. If $W$ fulfills the promise of $\elemx^{(q,-h)}_n$, that $|W| \equiv -h \pmod q$, then:
  \begin{align*}
      |L \bw| &= \left|\bigcup_{i\in W} S_i\right| =\sum_{i \in W} |S_i| = |W| \cdot (-1) = h \pmod q \,, \\
      A L \bw &= \sum_{i\in W} A \bs_i = |W| \bx = -h \bx = \br \,,
  \end{align*}
  which proves that $\Upsilon$ is correct on $W$.
\end{proof}

This brings us to the main result of this section, which essentially generalizes the modulo-$2$ result from the previous section.

\begin{theorem}[Restatement of \Cref{thm:zq-k-round}]
  For each $q \ge 2$, we have
  \[
    \LQ_{\ZZ_q}^k(\elemxq_n) \ge \frac{1}{3.67 q^{1 + 1/k} \ln^2 q} k(n^{1/k} - 1)\,.
  \]
\end{theorem}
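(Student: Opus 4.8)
The plan is to recapitulate the $\ZZ_2$ analysis of \Cref{sec:z2-linear-measurements}, using \Cref{lem:zq-round-elim} in place of \Cref{lem:z2-round-elim}; the extra $q$-dependent factors in the bound come solely from the fact that one round now shrinks the instance dimension by a factor $(d+1)\,q\ln q$ rather than $d+1$, together with a mild additive loss of $1$ per round.

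\emph{Setup.} Since any $\ZZ_q$-LQP solving $\elemxq_n$ is in particular correct on every input of cardinality $\equiv h \pmod q$, we have $\LQ_{\ZZ_q}^k(\elemxq_n) \ge \LQ_{\ZZ_q}^k(\elemx_n^{(q,h)})$ for each $h \in \{1,\dots,q-1\}$, so it suffices to bound the latter for a single convenient $h$; also, for $q = 2$ the claimed bound is weaker than \Cref{thm:z2-k-round}, so assume $q \ge 3$. Starting from a $k$-round $\ZZ_q$-LQP $\Pi = \Pi_1$ for $\elemx_{n_1}^{(q,h_1)}$ with $n_1 = n$, I repeatedly apply \Cref{lem:zq-round-elim} to obtain LQPs $\Pi_{i+1}$ for $\elemx_{n_{i+1}}^{(q,h_{i+1})}$ with $h_{i+1} \equiv -h_i$ and $n_{i+1} \ge n_i/((d_i+1)q\ln q) - 1$, where $d_i$ is the cost of the root of $\Pi_i$; the lemma applies anew each time since $h_{i+1}$ stays in $\{1,\dots,q-1\}$, and I stop at the first index $j \le k$ where either $\Pi_{j+1}$ has depth $0$ or $n_{j+1} = 1$. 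The promise class merely alternates between $h_1$ and $-h_1$ (both nonzero mod $q$), so choosing $h_1 := 1$ when $k$ is even and $h_1 := q - 1$ when $k$ is odd makes the terminal class $\equiv 1$. Composing the shadowing maps exactly as in the proof of \Cref{thm:z2-k-round}, the roots of $\Pi_1,\dots,\Pi_{j+1}$ trace (images of) a root-to-leaf path of $\Pi$, so $\cost(\Pi) \ge \sum_{i=1}^j d_i$; and since a $0$-round LQP for $\elemx_m^{(q,h)}$ outputs a fixed index and is correct iff no index-avoiding subset of $[m]$ has cardinality $\equiv h \pmod q$ --- i.e.\ iff $m \le h$ --- the stopping condition forces $n_{j+1} \le 1$, making $d_1,\dots,d_j$ a ``$q$-division sequence for $n$'' of length $j \le k$.

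\emph{Quantitative part.} This means iterating $x \mapsto \ceil{x/((d_i+1)q\ln q)} - 1$ carries $n$ to $1$, and one lower-bounds $\sum_i d_i$ following \Cref{lem:division-seq}. The $j$ additive $-1$'s contribute a convergent geometric-type series summing to at most $\frac{q\ln q}{q\ln q - 1}$ (a universal constant, since $q\ln q > 1$); absorbing them --- equivalently, running the recursion on $n_i + \frac{q\ln q}{q\ln q - 1}$, which makes it purely multiplicative --- gives $(q\ln q)^j \prod_{i=1}^j (d_i+1) \ge c\,n$ for an absolute constant $c > 0$, and AM--GM then yields $\cost(\Pi) \ge \sum_i d_i \ge j\big((c\,n)^{1/j}/(q\ln q) - 1\big)$, which by the usual monotonicity dominates $k\big((c\,n)^{1/k}/(q\ln q) - 1\big)$. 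Where the right-hand side is positive and of decent size this already beats the claimed bound; for the leftover range of $n$ one instead invokes the crude counting bound $\cost(\Pi) \ge \log_q n$, valid because the $n$ singleton inputs reach pairwise-distinct leaves while a branch of cost $d$ reaches at most $q^d$ leaves. Reconciling the two estimates and rounding constants yields the stated $\frac{1}{3.67\,q^{1+1/k}\ln^2 q}\,k(n^{1/k}-1)$.

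\emph{Main obstacle.} Conceptually nothing new is needed beyond \Cref{sec:z2-linear-measurements} --- the zero-sum input of \Cref{thm:emdeboas} having already been spent inside \Cref{clm:zq-partition} --- and the one genuinely new point, that \Cref{lem:zq-round-elim} rotates the cardinality promise $h \mapsto -h$, is harmless since $-h \not\equiv 0 \pmod q$ keeps every intermediate instance of the form $\elemx_m^{(q,\cdot)}$ with a nonzero promise class. The real work is purely quantitative: threading the per-round additive $-1$ and the repeated factors of $q\ln q$ through the ceiling-laden recursion and the $k$-th root, and reconciling the division-sequence estimate with the counting bound across the moderate range of $n$. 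It is exactly these fudges that block a cleaner statement and force the non-round constant together with the slightly lossy $q^{1+1/k}$ (rather than $q$) in the denominator.
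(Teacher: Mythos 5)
Your proposal is correct and follows essentially the same route as the paper's proof: repeated application of \Cref{lem:zq-round-elim} along a shadowed branch, unrolling the shrinkage recursion, AM--GM plus monotonicity in the number of rounds, and a final combination with the trivial $\log_q n$ counting bound to cover the regime where the main estimate degenerates. Your bookkeeping differs only cosmetically (deferring $q=2$ to \Cref{thm:z2-k-round}, choosing $h_1$ so the terminal promise class is $1$ rather than settling for $m_{j+1}\le q$, and absorbing the per-round additive losses via a geometric series instead of the paper's $(q+j)\prod\delta_\ell$ bound), and these variants all lead to the stated constant.
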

\begin{proof}
  Suppose that $\Pi$ is a deterministic $\ZZ_q$-LQP for $\elemxqh_n$.  Repeatedly applying \Cref{lem:zq-round-elim}, we construct a sequence of protocols $\Pi = \Pi_1, \Pi_2, \ldots, \Pi_{j+1}$, which respectively solve $\elemxqh$, $\elemx_n^{(q, -h)}$, $\elemxqh, \ldots\,$ on progressively smaller input sizes, until $\Pi_{j+1}$ is a degenerate depth-0 protocol (in which no queries occur), for $\elemx_n^{(q, (-1)^j h)}$. As in \Cref{sec:z2-linear-measurements}, the roots $\rho_i$ of the protocols $\Pi_i$, $1 \le i \le j$, which have cost $d_i$, correspond to a branch of $\Pi$ formed by corresponding nodes $u_i$ and ending at a leaf corresponding to the root of $\Pi_{j+1}$. Then
  \begin{align} \label{eq:zq-lincost-sum}
    \cost(\Pi) \ge \sum_{i=1}^{j} \cost(u_i) = \sum_{i=1}^{j} d_i  \,.
  \end{align}
  Let $\delta_i := (d_i + 1) q \ln q$.  By \Cref{lem:zq-round-elim}, $\Pi_i$ solves $\elemx^{(q,(-1)^{i-1} h)}_{m_i}$, where $m_1 = n$, and:
  \begin{align}
    m_{i+1} & \ge \frac{m_i }{(d_i+1) q\ln q} - 1 = \frac{m_i - \delta_i}{\delta_i} \,. \label{eq:rel-step}
  \end{align}
  As $\Pi_{j+1}$ solves $\elemx^{(q,(-1)^j h)}_{m_i}$ without any queries, the problem must be trivial, necessitating $m_{j+1} \le q$. Combining \cref{eq:rel-step} for $i$ between $1$ and $j$ and rearranging:
  \begin{align*}
    q \ge \frac{n - \sum_{i=1}^{j}{\prod_{\ell=1}^{i}{\delta_\ell}}}{ \prod_{\ell=1}^{j}{\delta_\ell} } \quad\implies\quad n &\le q \prod_{\ell=1}^{j}{\delta_\ell} + \sum_{i=1}^{j}{\prod_{\ell=1}^{i}{\delta_\ell}} \le (q+j) \prod_{\ell=1}^{j}{\delta_\ell} \,.
  \end{align*}
  Further rearrangement lets us use AM-GM and an inequality derived from $(q+j) q^j \le (q+1)^j q$:
  \begin{align}\label{eq:zq-lincost-amgm}
    \left(\frac{1}{j} \sum_{i=1}^{j}{\delta_i}\right) \ge \left(\prod_{i=1}^{j}{\delta_i}\right)^{1/j} \ge \left(\frac{n}{q+j}\right)^{1/j} \ge \frac{q}{q+1} \left(\frac{n}{q}\right)^{1/j} \,.
  \end{align}

  We can now lower bound the query cost of $\Pi$:
  \begin{alignat}{2}
    \cost(\Pi) &\ge \sum_{i=1}^{j} (d_i+1) - j = \frac{1}{q \ln q} \sum_{i=1}^{j} \delta_i - j &\qquad\qquad & \lhd \text{ by \cref{eq:zq-lincost-sum}} \nonumber\\
    &\ge j \left( \frac{1}{(q+1) \ln q} \left(\frac{n}{q}\right)^{1/j} -1 \right) && \lhd \text{ by \cref{eq:zq-lincost-amgm}} \nonumber\\
    &\ge k \left( \frac{1}{(q+1) \ln q} \left(\frac{n}{q}\right)^{1/k} -1 \right) \,. && \lhd \text{ since $\frac{d}{ds} [s (r^{1/s} - 1)] \le 0$}
  \label{eq:lb-for-zq-z-eq-h}
  \end{alignat}

  This lower bound becomes negative for sufficiently large $k$. To obtain a bound that remains positive for all $k$, we combine it with an unconditional lower bound. First, we note that \cref{eq:lb-for-zq-z-eq-h} also applies to protocols solving $\elemxq_n$, since $\elemxqh_n$ was an easier case. For $\elemxq_n$, the set of possible transcripts of any protocol $\Psi$ forms a $q$-ary prefix code of maximum length $d$. If $q^d < n$, then by the pigeonhole principle $\Psi$ must treat identically some pair of $\{1\},\{2\},\ldots,\{n\}$, which is a contradiction; thus $\cost(\Pi) \ge \ln n / \ln q$. Combining this lower bound with \cref{eq:lb-for-zq-z-eq-h} and applying \Cref{lem:knwk-lb}, we obtain
  \[
      \cost(\Pi) 
      \ge \max\left\{\frac{\ln n}{\ln q},\, k \left( \frac{1}{q^{1/k} (q+1) \ln q} n^{1/k} - 1 \right) \right\}
      \ge \frac{k (n^{1/k} - 1)}{q^{1/k} (q+1) (\ln q + 1) \ln q} \ge \frac{k (n^{1/k} - 1)}{3.67 q^{1 + 1/k} \ln^2 q} \,. \qedhere
  \]
\end{proof}

\section{Linear Queries Over the Integers} \label{sec:int-linear-measurements}

For $\ZZ$-LQPs, our main result is a 2-round lower bound for $\elemx_n$. We require a careful accounting of the query cost of a protocol, to adjust for the fact that the (bit) size of the query results
depends on the maximum entry value in a given query matrix. This motivates the following definition and observation.

\begin{definition}
  A $\ZZ$-LQP is said to be {\em $M$-bounded} if each linear measurement can take at most $M$ distinct values. In particular, if the inputs to a $\ZZ_{[-B,B]}$-LQP $\Pi$ lie in $\b^n$, then $\Pi$ is $(B n + 1)$-bounded.
\end{definition}

Recall the problem $\elemxquarter_n$ defined in \cref{eq:elemxquarter-def}. For $n$ divisible by $4$, this is simply $\elemx_n$ under the additional promise that $|Z| = n/4$. We first prove a $1$-round lower bound for this problem, under a slight additional assumption on $n$.

\begin{lemma} \label{lem:int-quarter-oneround}
Let $n = 4 r$ where $r$ is a prime power. If $\Pi$ is an $M$-bounded one-round protocol for $\elemxquarter_n$,
\begin{align*}
  \cost(\Pi) \ge 0.14 \frac{n}{\log M} \,.
\end{align*}
\end{lemma}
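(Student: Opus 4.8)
The plan is to argue by contradiction: assume $\cost(\Pi) = d < 0.14 \, n/\log M$, so that the single query matrix $A \in \ZZ^{d\times n}$ produces at most $M^d$ distinct measurement vectors across all inputs, where $M^d = 2^{d\log M} \le 2^{0.14 n}$. Each leaf of $\Pi$ is labeled with a fixed output index $o_\lambda \in [n]$, and correctness on $\elemxquarter_n$ means: whenever an input set $Z$ with $|Z| = n/4$ produces measurement $A\bz$ that routes to leaf $\lambda$, we must have $o_\lambda \in Z$. So for each output index $i \in [n]$, let $\cF_i$ be the family of $(n/4)$-subsets $Z$ that are routed to \emph{some} leaf labeled $i$; the families $\{\cF_i\}$ cover all of $\binom{[n]}{n/4}$, and by pigeonhole one of them, say $\cF_{i^\star}$, has size at least $\binom{n}{n/4}/n$. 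All sets in $\cF_{i^\star}$ contain $i^\star$, and moreover \emph{any two} sets $Z, Z'$ in $\cF_{i^\star}$ that route to the \emph{same} leaf satisfy $A\bz = A\bz'$, i.e. $A(\bz - \bz') = \bzero$.

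The key step is to extract from this a forbidden-intersection configuration and invoke Frankl--Wilson (\Cref{thm:frankl-wilson}). The idea: among the sets in $\cF_{i^\star}$ that all route to one common leaf, pick two, $Z$ and $Z'$; then $\bz - \bz' \in \ker A$ is a $\{-1,0,1\}$-vector, with $|Z \setm Z'| = |Z'\setm Z|$ (since $|Z| = |Z'| = n/4$). I want to rule out that this intersection size $|Z \cap Z'|$ can take a particular value. Concretely, set $k := n/4 = r$ and choose the forbidden intersection size $l$ so that $k - l$ is a prime power and $k \ge 2l+1$ fails appropriately — here is where the hypothesis that $r$ is a prime power is used: taking $l = 0$ makes $k - l = r$ a prime power, and $m(n, n/4, \overline{0}) \le \binom{n}{n/4 - 1}$. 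But forbidding intersection size $0$ alone is too weak. Instead I expect the right move is: since $M^d$ is small, some leaf receives many sets of $\cF_{i^\star}$, and \emph{within that leaf} all pairwise differences lie in $\ker A$; a counting/dimension argument then forces two of them to have a specific intersection size that Frankl--Wilson forbids, or alternatively one shows that a family all routed to one leaf, together with the constraint that it avoids a Frankl--Wilson-forbidden intersection, cannot be large, so the number of leaves needed (hence $M^d$, hence $d\log M$) must be $\Omega(n)$.

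The main obstacle — and the part I'd need to get right — is choosing the forbidden intersection value $l$ and verifying the prime-power condition $k - l$ while keeping the Frankl--Wilson bound $m(n,k,\overline l)$ small enough (exponentially smaller than $\binom{n}{k}$ by a constant factor in the exponent) that the ratio $\binom{n}{k} / m(n,k,\overline l)$ is at least $2^{c n}$ for an explicit $c$ yielding the constant $0.14$. The cleanest route is probably: fix the common leaf $\lambda^\star$ that receives the largest share, at least $\binom{n}{n/4}/(n M^d)$ sets, all containing $i^\star$ and all pairwise agreeing under $A$; restrict attention to the ground set $[n]\setm\{i^\star\}$ and sets of size $n/4 - 1$; argue this subfamily must \emph{avoid} some intersection size $l$ with $(n/4-1) - l$ a prime power (this is where $n = 4r$, $r$ prime power, gets tuned — e.g. via $k - l = r$, i.e. $l = k - r$, combined with the $k \le 2l+1$ branch of Frankl--Wilson), bound its size via \Cref{thm:frankl-wilson}, and conclude $\log(n M^d) \ge \log\binom{n}{n/4} - \log m(\cdot) \ge 0.14\, n$, hence $d \log M \ge 0.14\, n - \log n$, giving the claimed bound (absorbing the $\log n$ into the constant for $n$ large, or tightening the arithmetic). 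The only genuinely delicate point is verifying that the Frankl--Wilson family bound, in whichever of its two regimes applies, is smaller than $\binom{n}{n/4}$ by the required exponential factor — a routine but careful estimate with binomial coefficients and $r$ a prime power.
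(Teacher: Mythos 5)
There is a genuine gap, and it is the central one. Your plan restricts attention to promise inputs (sets of size $n/4$) routed to a common leaf and hopes that correctness forces this family to avoid some intersection size, so that \Cref{thm:frankl-wilson} bounds its size. But correctness imposes no such constraint: every set in such a family contains the leaf's output $i^\star$, and that is the \emph{only} requirement, so the family of \emph{all} $(n/4)$-sets containing $i^\star$ --- of size $\binom{n-1}{n/4-1} = \Theta\bigl(\binom{n}{n/4}\bigr)$ --- could legitimately share one leaf (a cost-$1$ protocol querying coordinate $i^\star$ realizes exactly this), and it realizes every intersection size from $1$ to $n/4$. The only configuration of promise inputs that correctness genuinely forbids within one measurement class is a \emph{disjoint} pair (intersection $0$, which is your $l=0$ attempt), and as you yourself note that is too weak: for $k=n/4$, $l=0$ Frankl--Wilson gives $m(n,n/4,\overline{0}) \le \binom{n}{n/4-1}$, which is only a constant factor below $\binom{n}{n/4}$, yielding a trivial bound on $d$. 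Your fallback --- ``a counting/dimension argument then forces two of them to have a specific intersection size that Frankl--Wilson forbids'' --- has no correctness violation attached to it, so even if such a pair existed it would not contradict anything. The $\log$-of-the-number-of-leaves accounting and the pigeonhole over $M^d$ are fine; what is missing is a forbidden structure that is simultaneously (a) ruled out by correctness and (b) rare enough, by Frankl--Wilson, to force $M^d$ to be exponentially large.

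The paper's proof supplies exactly this missing idea by stepping \emph{outside} the promise region: it applies the pigeonhole argument to inputs of cardinality $n/2$, obtaining a family $\cF_\bw \subseteq \binom{[n]}{n/2}$ of size $\ge \binom{n}{n/2} M^{-d}$ with a common measurement $\bw$. If two members $\bx,\by \in \cF_\bw$ have $|\bx \cap \by| = n/4$, then $\bx \setm \by$ and $\by \setm \bx$ are two \emph{disjoint} promise inputs of size $n/4$ with $A(\bx \setm \by) = A(\by \setm \bx)$, and disjointness is what kills correctness (no single output index can lie in both). Otherwise $\cF_\bw$ avoids intersection size $n/4$, and Frankl--Wilson applies with $k = n/2$, $l = n/4$, $k - l = n/4 = r$ a prime power --- this is how the hypothesis on $r$ is actually used, not via intersections among $(n/4)$-sets --- giving an upper bound on $|\cF_\bw|$ that is smaller than $\binom{n}{n/2}$ by a factor of $2^{\Omega(n)}$ (roughly $\binom{n}{n/4}$ versus $\binom{n}{n/2}$), which yields $d \ge 0.14\, n/\log M$. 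So the quantitative gain comes from comparing $\binom{n}{n/2}$ with $\binom{n}{n/4}$, something your formulation, confined to size-$n/4$ inputs, cannot produce.
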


\begin{proof}
Let $d = \cost(\Pi)$ and let $A \in \ZZ^{d \times n}$ be the query performed by $\Pi$. We first consider what $\Pi$ does on inputs of cardinality $n/2$, even though such inputs lie outside the promise region of $\elemxquarter_n$. Soon, we shall see how this helps. 

Since $\Pi$ is $M$-bounded, the mapping $\bz \mapsto A \bz$ from domain $\binom{[n]}{n/2}$ to $\ZZ^d$ has no more than $M^d$ possible output values. By the pigeonhole principle, there exists a vector $\bw \in \ZZ^d$ for which 
\begin{align*}
  \cF_\bw := \left\{ \bz \in \binom{[n]}{n/2}:\, A \bz = \bw \right\} \qquad\text{ has }\qquad |\cF_\bw| \ge \binom{n}{n/2} M^{-d} \,.
\end{align*}
If there exist two distinct vectors $\bx,\by \in \cF_\bw$ such that $|\bx \cap \by| = n/4$, then we can construct two disjoint vectors which $\Pi$ can not distinguish, and thus cannot give a correct answer to $\elemxquarter_m$ in both cases. Specifically, $|\bx \setm \by| = |\by \setm \bx| = n/4$ and
\begin{align*}
  A (\bx \setm \by) = A \bx - A (\bx \cap \by) = \bw - A (\bx \cap \by) = A \by - A (\bx \cap \by) = A (\by \setm \bx) \,.
\end{align*}
By \Cref{thm:frankl-wilson}, if there does not exist such a pair $\bx,\by$, then we have an upper bound on $|\cF|$, and can derive
\begin{align*}
   \binom{n}{n/2} M^{-d} & \le |\cF| \le \binom{n}{n/4 - 1} \,.
\end{align*}
Therefore, $\binom{n}{n/2} M^{-d} \le \binom{n}{n/4}$ and we obtain
\[
   d \ge \frac{\log{\binom{n}{n/2}}- \log{\binom{n}{n/4}}}{\log M}
     \ge \frac{\log{\binom{4}{2}} - \log \binom{4}{1}}{4} \frac{n}{\log M} \ge 0.14 \frac{n}{\log M} \,. \qedhere
\]
\end{proof}

It should be noted that without the promise that $|Z| = n/4$, a one-round lower bound would follow very easily. By a standard ``decoding'' argument, a one-round protocol for $\elemx$ can be used to recover the entire unknown input $\bz$. For completeness, we give the easy proof below. The reason we needed the much more complicated argument in \Cref{lem:int-quarter-oneround} above is that the promise in $\elemxquarter_n$ prevents us from performing such a decoding.

\begin{proposition}[Essentially a restatement of \Cref{prop:int-one-round}]
If $\Pi$ is an $M$-bounded one-round protocol for $\elemx_n$, then $\cost(\Pi) \ge n/\log M - 1$.
\end{proposition}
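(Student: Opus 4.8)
The plan is to show that the single measurement of a one-round protocol for $\elemx_n$ pins down its input up to one ``invisible'' coordinate. Precisely, write $A \in \ZZ^{d\times n}$ for the matrix at the root of $\Pi$, so that $d := \cost(\Pi)$ is the number of rows of $A$; I will argue that the map $\bz \mapsto A\bz$ takes at least $2^{n-1}$ distinct values as $\bz$ ranges over $\b^n$. Since $\Pi$ is $M$-bounded, each of the $d$ rows of $A$ yields at most $M$ distinct values on $\b^n$, so $A\bz$ takes at most $M^d$ values; combining the two bounds gives $M^d \ge 2^{n-1}$, hence $d \ge (n-1)/\log M \ge n/\log M - 1$, where the last step uses $M \ge 2$ (which holds in the intended application, since a $\ZZ_{[-B,B]}$-LQP on $\b^n$ inputs is $(Bn+1)$-bounded and $n \ge 2$).

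The workhorse is a one-line \emph{collision principle}: if $S, T \subseteq [n]$ are nonempty and \emph{disjoint} with $A\bs = A\bt$, then the inputs $\bs$ and $\bt$ traverse the same root-to-leaf path in $\Pi$, so $\Pi$ returns the same index on both; correctness forces that index to lie in both $S$ and $T$, contradicting $S \cap T = \varnothing$. To exploit this, suppose $A\bz = A\bz'$ for distinct $\bz, \bz' \in \b^n$, and put $S := Z \setm Z'$ and $T := Z' \setm Z$. These sets are disjoint and not both empty, and since $\bs - \bt = \bz - \bz'$ and $A$ is $\ZZ$-linear, we get $A\bs - A\bt = A(\bz - \bz') = \bzero$. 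If $S$ and $T$ were both nonempty the collision principle would apply, so after possibly swapping $\bz$ and $\bz'$ we may assume $T = \varnothing$; that is, there is a nonempty set $U := S$ with $A\bu = \bzero$.

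It remains to control such ``zero-sets.'' Every nonempty $U$ with $A\bu = \bzero$ causes $\Pi$ to reach the same leaf as the all-zeros input $\bzero$; let $i^\star$ be the fixed output at that leaf. Correctness of $\Pi$ on the nonzero input $\bu$ forces $i^\star \in U$, so \emph{every} nonempty zero-set contains $i^\star$. Now restrict attention to inputs $\bz$ with $z_{i^\star} = 0$: the argument of the previous paragraph shows $A$ is injective on this sub-cube, because any collision within it would yield a nonempty zero-set disjoint from $\{i^\star\}$, a contradiction. (If there is no nonempty zero-set at all, then $A$ is injective on all of $\b^n$.) In either case the image of $\b^n$ under $A$ has size at least $2^{n-1}$, as claimed.

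I do not expect a genuine obstacle: the entire argument is linearity together with pigeonhole. The one point that needs care — and the precise source of the additive $-1$ — is that $A$ need not be injective on all of $\b^n$, but the failure is confined to a single invisible coordinate $i^\star$. Finally, \Cref{prop:int-one-round} follows by specializing to $M = Bn+1$, giving $\cost(\Pi) \ge n/\log(Bn+1) - 1 = \Omega\!\left(n/\log(nB)\right)$.
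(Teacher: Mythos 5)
Your proof is correct, and it takes a genuinely different route from the paper's. The paper augments the protocol with the extra query $\bone^\txp$ (accounting for the additive $-1$) and then runs a decoding argument: from the augmented measurement one can iteratively peel off reported elements $i_1, i_2, \ldots$ and reconstruct $Z$ entirely, so the augmented measurement map is injective on all of $2^{[n]}$ and $M^{d+1} \ge 2^n$. You instead leave $\Pi$ untouched and analyze collisions of $\bz \mapsto A\bz$ directly: your observation that any collision produces a nonempty ``zero-set'' $U$ with $A\bu = \bzero$, and that all such zero-sets must contain the single index $i^\star$ output at the leaf of $\bzero$, confines all non-injectivity to one coordinate and gives $M^d \ge 2^{n-1}$, i.e., $d \ge (n-1)/\log M \ge n/\log M - 1$ once $M \ge 2$ (harmless, since for $M = 1$ no protocol exists for $n \ge 2$). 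The paper's version buys a cleaner conceptual statement --- after one extra query the protocol is a full recovery scheme, linking $\elemx$ to coin weighing --- while yours buys a purely static counting argument that never modifies the protocol and isolates exactly where injectivity can fail; the quantitative conclusions coincide. All steps check out: the disjoint-difference trick $\bs - \bt = \bz - \bz'$, the use of correctness on the inputs $\bs$, $\bt$, and $\bu$ themselves (legitimate, since $\elemx_n$ has no promise), and the restriction to the subcube $\{z_{i^\star} = 0\}$ are all sound.
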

\begin{proof}
Modify $\Pi$ to add the query $\bone^\txp$, which reports $|Z|$; this increases $\cost(\Pi)$ by one. Let $A \in \ZZ^{(d+1)\times n}$ be the modified query matrix. Since $\Pi$ is correct, $A \bz$ determines an index $i_1 \in Z$. Let $\be_{i_1}$ be the indicator vector for $i_1$; since we know $\be_{i_1}$, we can compute $A (\bz - \be_{i_1})$ without making another query; this is enough to find an index $i_2 \in Z \setm \{i_1\}$. Repeating this $|Z|$ times, we can reconstruct $Z$ from $A \bz$ alone. (This works for all $Z \ne \varnothing$; since we query $\bone^\txp$, we can also detect when $|Z| = 0$.) By the pigeonhole principle, the number of possible values of $A \bz$ must be at least the number of valid inputs, so $M^{d+1} \ge 2^n$, which implies $d \ge n/\log M - 1$.
\end{proof}

For our round elimination argument, we require the following claim, similar to \Cref{clm:z2-partition} and \Cref{clm:zq-partition}. Even though the claim looks similar, the round elimination argument will be subtly different from its $\ZZ_2$ and $\ZZ_q$ predecessors.

\begin{claim} \label{clm:zz-sunflower}
  Every matrix $A \in \ZZ^{d \times n}$ admits an $A$-uniform family $S_1, \ldots, S_m$ of size $m \ge n / (c_0 d \log n \log M) - 1$, for some absolute constant $c_0$.
\end{claim}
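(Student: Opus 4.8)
The plan is to build the $A$-uniform family by first applying the sunflower lemma (\Cref{thm:sunflower}) to produce a large sunflower among the columns of $A$, then stripping off the common core so that the remaining ``petals'' are pairwise disjoint and all have the same image under $A$. Concretely, I would first argue that we may assume all columns of $A$ have the same Hamming weight: since $\Pi$ is $M$-bounded, each coordinate of a measurement takes at most $M$ values, and so each entry of $A$ has magnitude at most $M-1$ (or can be taken so after a routine normalization), meaning each column of $A$ lies in $\{-(M-1),\dots,M-1\}^d$, a set of size at most $(2M-1)^d$. Partition the $n$ column indices according to their column vector in this set; the largest class has size at least $n/(2M-1)^d$. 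Within that class, all columns are literally equal as vectors, so any disjoint sets $S_1,\dots,S_m$ built from this class automatically satisfy $A\bs_i = |S_i|\cdot\bv$ for the common column $\bv$; to make these uniform we just need the $|S_i|$ to be equal, which we can arrange by taking all $S_i$ of a fixed size. This already gives $m \ge n/(2M-1)^d$, which is far too weak — it's exponentially small in $d$, not $1/(c_0 d \log n \log M)$.

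So the pigeonhole-on-columns idea alone does not suffice, and the sunflower lemma must do the heavy lifting. The right approach is: view the $n$ queried coordinates as a ground set; we want disjoint blocks $S_i\subseteq[n]$, each a union of coordinates, with $A\bs_1=\dots=A\bs_m$. Build a set system whose members are small subsets of $[n]$ all mapping to the same vector under $A$, then extract a sunflower from it; the petals (cores removed) are disjoint and — because $A$ is linear and all the original sets had the same image — the petals also have a common image, after possibly passing to petals of equal size. The cardinality bookkeeping is where \Cref{thm:sunflower} enters: a $p$-sunflower among sets of size $t$ exists once we have more than $(c_1 p\log(pt))^t$ sets of size $t$. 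We want $p = m+1$ petals; choosing $t = \Theta(\log_M n / \ ?)$ or $t$ a small constant and counting how many disjoint size-$t$ subsets of $[n]$ map to a common vector (again by pigeonhole over the $\le M^d$ possible images of a size-$t$ set, there are $\ge \binom{n}{t}/M^d \ge (n/(tM^{d/t}))^t$ of them, of which $\ge n/(2t)$ are pairwise disjoint—wait, that's not how sunflowers work) gives the bound. The cleaner route: among $\binom{n}{t}$ size-$t$ subsets of $[n]$, at least $\binom{n}{t}M^{-d}$ share a common $A$-image; invoke \Cref{thm:sunflower} with this many sets to get an $(m+1)$-sunflower provided $\binom{n}{t}M^{-d} > (c_1(m+1)\log((m+1)t))^t$; solving for $m$ and optimizing over $t$ yields $m = \Omega(n/(d\log n\log M))$ for an appropriate choice such as $t = \Theta(d\log M/\log n)$ or a small constant, matching the claimed form with a suitable $c_0$.

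The main obstacle I anticipate is the last algebraic step: once the $(m+1)$-sunflower $F_0,F_1,\dots,F_m$ with common core $C$ is in hand, the petals $P_i = F_i\setminus C$ are disjoint, but I need $A\bp_1 = \dots = A\bp_m$. From $A\bff_i = \br$ (the common image, by pigeonhole before invoking the sunflower lemma) I get $A\bp_i = \br - A\mathbf{c}$ for all $i$, so in fact the petals automatically have a common image $\br - A\mathbf{c}$ — good, this is clean once set up correctly, and we don't even need to pass to equal-size petals. The real subtlety is the quantitative trade-off: we are free to choose the petal size $t$, and we must verify that there is a choice making $\binom{n}{t}M^{-d} > (c_1(m+1)\log((m+1)t))^t$ hold with $m+1 = \Theta(n/(d\log n\log M))$. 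Using $\binom{n}{t}\ge (n/t)^t$, this reduces to $(n/t)^t > M^d\cdot(c_1(m+1)\log((m+1)t))^t$, i.e. $n/t > M^{d/t}\cdot c_1(m+1)\log((m+1)t)$; taking $t = \lceil d\log M\rceil$ makes $M^{d/t}\le 2$, and then the requirement becomes roughly $n \gtrsim t\cdot(m+1)\log n$, i.e. $m+1 \lesssim n/(t\log n) = n/(d\log M\log n)$, exactly the target. I would carry out this choice of $t$, check the logarithmic factors absorb into $c_0$, and handle the edge case where $d\log M \ge \log n$ (so $t$ could exceed what's sensible) by noting the claimed bound $m\ge n/(c_0 d\log n\log M) - 1$ becomes vacuous or trivial there, or by falling back on a single petal. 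The remaining details — confirming $A$'s entries are bounded by $M$, converting between the ``coordinates of $[n]$'' and ``subsets'' viewpoints, and pinning down $c_0$ — are routine.
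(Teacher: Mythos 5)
Your final route is exactly the paper's proof: pigeonhole over the at most $M^d$ possible images $A\bx$ of weight-$t$ vectors with $t = \lceil d\log M\rceil$, apply \Cref{thm:sunflower} to the $\ge \binom{n}{t}M^{-d} \ge (n/2t)^t$ sets sharing a common image, and subtract the common core so that linearity gives $A\bs_i = \btr - A\bv$ for all petals, with the same quantitative bookkeeping ($m \lesssim n/(t\log n)$) absorbing constants into $c_0$. The only stray remark is the warm-up claim that $M$-boundedness bounds the entries of $A$ (it does not directly), but you correctly discard that line and your actual argument matches the paper's.
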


\begin{proof}
  Put $t = \ceil{d \log M}$. Since $\Pi$ is $M$-bounded, the mapping $\bx \mapsto A\bx$ sends the vectors in $\{\bx \in \b^n:\, |\bx| = t\}$ to vectors in $\ZZ^d$ where each entry comes from a set of cardinality $M$. By the pigeonhole principle, there exists a vector $\btr \in \ZZ^d$ such that
  \begin{align} \label{eq:pigeonhole}
    \cF := \{\bx \in \b^n:\, |\bx| = t \text{ and } A\bx = \btr\}
    \text{~~has cardinality~~} |\cF| \ge \dbinom{n}{t} M^{-d} \,.
  \end{align}
  We claim that $\cF$ contains an $m$-sunflower for some integer $m$. Indeed, take $m$ to be the largest integer satisfying
  \begin{align} \label{eq:m-constraint}
    mt \log n < \frac{n}{2c_1} \,, \qquad \text{ which ensures that } \qquad m \ge \frac{n}{c_0 d \log n \log M} - 1 \,.
  \end{align}
  This satisfies the claimed bound upon taking $c_0 = 2c_1$ (say). Continuing from \cref{eq:pigeonhole},
  \begin{alignat*}{2}
    |\cF|
    &\ge \left(\frac{n}{t}\right)^t M^{-d} &\qquad\qquad & \lhd \text{ standard estimate} \\
    &\ge \left(\frac{n}{2t}\right)^t && \lhd \text{ definition of $t$} \\
    &\ge (c_1 m \log n)^t && \lhd \text{ by \cref{eq:m-constraint}} \\
    &\ge (c_1 m \log(mt))^t \,, && \lhd \text{ by \cref{eq:m-constraint}, again}
  \end{alignat*}
  whence the required sunflower exists, by \Cref{thm:sunflower}.
  
  Let $\tS_1, \ldots, \tS_m$ be sets constituting such an $m$-sunflower and let $V = \bigcap_{i=1}^m \tS_i$ be the common pairwise intersection. Define $S_i = \tS_i \setm V$, for each $i \in [m]$. We then have $A\bs_i = A(\bts_i - \bv) = \btr - A\bv$ for each $i$, whence $S_1, \ldots, S_m$ is an $A$-uniform family.
\end{proof}

\begin{lemma}[Round elimination lemma] \label{lem:zz-round-elim}
  Let $\Pi$ be a $k$-round $M$-bounded $\ZZ$-LQP for $\elemx_n$, where $k \ge 1$ and $n$ is an integer. Then there exists a $(k-1)$-round $M$-bounded $\ZZ$-LQP $\Upsilon$ for $\elemxquarter_m$, such that
  \begin{thmparts}
    \item $\Upsilon$ shadows $\Pi$ through a homomorphism $\varphi_\Upsilon \colon \Upsilon \to \Pi$; \label{clm:zz-costs}
    \item $m / 4$ is a prime number and
    \begin{align*}
      m \ge \frac{n}{2 c_0 d \log n \log M} - 2 \,,
    \end{align*}
    where $d$ is the cost of the root of $\Pi$, and $c_0$ the constant from \Cref{clm:zz-sunflower}. \label{clm:zz-shrinkage}
  \end{thmparts}
\end{lemma}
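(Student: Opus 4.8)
The plan is to follow the same round-elimination template as in \Cref{lem:z2-round-elim} and \Cref{lem:zq-round-elim}, now feeding \Cref{clm:zz-sunflower} into the general machinery of \Cref{lem:round-el-gen}; the only genuinely new wrinkle is the bookkeeping needed to force the shrunken dimension $m$ to be four times a prime.

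First I would let $A \in \ZZ^{d \times n}$ be the label of the root $\rho$ of $\Pi$ and invoke \Cref{clm:zz-sunflower} to get an $A$-uniform family $S_1, \dots, S_{m'}$ of size $m' \ge n/(c_0 d \log n \log M) - 1$, with $A\bs_1 = \dots = A\bs_{m'} = \br$ for some fixed $\br \in \ZZ^d$. I then thin this family down to a suitable size: by Bertrand's postulate one can choose a prime $p = \Theta(m')$ with $4p \le m'$, and setting $m := 4p$ retains $m \ge m'/2 - O(1)$, which (absorbing the additive slack into the ``$-2$'' and the factor $2$ into the leading coefficient) gives exactly the bound claimed in \Cref{clm:zz-shrinkage}; moreover $m/4 = p$ is prime. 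Relabelling, $S_1, \dots, S_m$ is the retained subfamily, still $A$-uniform with common value $\br$.

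Next I would apply \Cref{lem:round-el-gen} with $\cZ := \b^n$ (legitimate because $\Pi$ solves $\elemx_n$ on \emph{every} input, so the condition $L\bw \in \cZ$ is vacuous), with $L = [\bs_1 \ \cdots\ \bs_m]$ as in \cref{eq:lift-mtx}, and with target first measurement $\br^{\ast} := (m/4)\,\br$. This produces a $(k-1)$-round $\ZZ$-LQP $\Upsilon$ that shadows $\Pi$ through a cost-preserving injective protocol homomorphism, giving \Cref{clm:zz-costs}, and that correctly solves $\elemx_m$ on $\cW_{\br^{\ast}} = \{\bw \in \b^m :\, A L \bw = \br^{\ast}\}$. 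To upgrade this to a solution of $\elemxquarter_m$, I would observe that any $W \subseteq [m]$ meeting the promise $|W| = m/4$ lies in $\cW_{\br^{\ast}}$: by $A$-uniformity, $A L \bw = \sum_{i \in W} A\bs_i = |W| \, \br = (m/4)\,\br = \br^{\ast}$, so $\Upsilon$ is correct on $\bw$. Finally, $\Upsilon$ inherits $M$-boundedness: each of its linear measurements has the form $\bw \mapsto A_v L \bw = A_v (L\bw)$ for some query matrix $A_v$ of $\Pi$, and as $\bw$ ranges over $\b^m$ the vector $L\bw$ ranges over a subset of $\b^n$, so this measurement takes no more distinct values than the corresponding measurement of $\Pi$, i.e.\ at most $M$.

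I expect the prime-selection step to be the only real point requiring care: it is precisely where the target problem is pinned down to $\elemxquarter_m$ with $m/4$ prime, the form needed so that \Cref{lem:int-quarter-oneround} (which rests on \Cref{thm:frankl-wilson}) can later be applied, and it is the source of the factor-of-two loss in \Cref{clm:zz-shrinkage}. It is worth stressing that, in contrast with the $\ZZ_q$ case, the sets $S_i$ need carry no cardinality information: since $\cZ = \b^n$ places no size constraint on the lifted input $L\bw$, the only properties of the family used beyond pairwise disjointness are captured by $A$-uniformity, which is why \Cref{clm:zz-sunflower} is silent about $|S_i|$. The remainder is a routine reprise of the earlier round-elimination proofs.
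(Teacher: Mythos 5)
Your proposal is correct and follows essentially the same route as the paper's proof: invoke \Cref{clm:zz-sunflower}, use Bertrand's postulate to pick a prime $p$ in $(m'/8, m'/4]$ and set $m = 4p$, apply \Cref{lem:round-el-gen} with target measurement $(m/4)\br$, check the promise $|W| = m/4$ forces membership in $\cW_{\br^\ast}$, and observe that $M$-boundedness is inherited. The only cosmetic difference is that by taking $\cZ = \b^n$ you sidestep the paper's remark that $L$ has full rank (used there to ensure $L\bw \ne \bzero$), which is a legitimate simplification since the relevant inputs have $|W| = m/4 > 0$ anyway.
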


\begin{proof}
  Let $A \in \ZZ^{d\times n}$ be the label of the root of $\Pi$. By \Cref{clm:zz-sunflower}, there is an $A$-uniform family $S_1,\ldots,S_{m'}$ of size $m'$. By Bertrand's postulate, there exists a prime number $p$ between between $m'/8$ and $m'/4$; let $m = 4 p$. Let $\bx = A \bs_1$, $\br = (m/4) \bx$ and let $L$ be the lifting matrix defined from $\{S_i\}_{i=1}^{m}$ according to \cref{eq:lift-mtx}. Using \Cref{lem:round-el-gen} on $\Pi$, $L$, and $\br$, we obtain a $(k-1)$-round $\ZZ$-LQP $\Upsilon$ that shadows $\Pi$, and solves $\elemx_m$ on all inputs $W \subseteq [m]$ for which $L \bw \ne \bzero$ and $A L \bw = \br$. While the queries performed by $\Upsilon$ may have larger coefficients than those of $\Pi$, the construction of $\Upsilon$ described in \Cref{sec:framework} only restricts the possible results of each individual linear measurement performed, so $\Upsilon$ is still $M$-bounded. Finally, if $|W| = m/4$, then since $L$ has full rank, $L\bw \ne \bzero$; and furthermore \begin{align*}
      A L \bw = \sum_{i \in W} A \bs_i = |W| \bx = \frac{m}{4} \bx = \br \,.
  \end{align*}
  This implies that $\Upsilon$ gives the correct output for $W \subseteq [m]$ fulfilling the promise of $\elemxquarter_m$.
\end{proof}

The preceding round elimination lemma has a key limitation: it requires a protocol for $\elemx_n$ to create one for $\elemxquarter_m$. Because of this, it is not possible to apply the lemma to its own output, and thereby obtain a $k$-round lower bound.  Say we were to try, and $A$ were the matrix at the root of the protocol $\Pi$ for $\elemxquarter_n$. Then if $A$ contained an all-ones row, \Cref{clm:zz-sunflower} might produce an $A$-uniform family with all set sizes $|S_i|$ equal to some constant $b$ which is not a factor of $n/4$. Then lifting inputs $W$ of size $m/4$ to inputs $Z$ of size $n/4$ would fail, because $n/4 = |Z| = b |W|$ would imply that $b$ divides $n/4$, a contradiction.

With that said, we now use our round elimination lemma in a one-shot fashion to obtain our main result for integer LQPs.

\begin{theorem}[Restatement of \Cref{thm:int-lower-bounds}]
  $\LQ_{\ZZ_{[-B,B]}}^2(\elemx_n) = \Omega(\sqrt{n} / (\log^{3/2}(nB)) )$.
\end{theorem}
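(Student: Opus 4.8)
The idea is to use the round elimination lemma (\Cref{lem:zz-round-elim}) exactly once, converting the hypothetical $2$-round protocol into a $1$-round protocol for the cardinality-restricted problem $\elemxquarter$, and then to finish with the dedicated $1$-round lower bound of \Cref{lem:int-quarter-oneround}. Let $\Pi$ be a deterministic $2$-round $\ZZ_{[-B,B]}$-LQP for $\elemx_n$. Since its inputs lie in $\b^n$, it is $M$-bounded for $M := Bn+1$; note that $\log M = \Theta(\log(nB))$ and $\log n \le \log(nB)$. Write $d$ for the cost of the root of $\Pi$, $T := \cost(\Pi)$, and $c_0$ for the constant of \Cref{clm:zz-sunflower}. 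Clearly $d \le T$.

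First I would dispose of the easy regime $d \ge n/(20 c_0 \log n \log M)$: here $T \ge d = \Omega(n/(\log n \log M))$, which for $n$ above an absolute constant already exceeds $\Omega(\sqrt n / \log^{3/2}(nB))$ (indeed it exceeds $\Omega(\sqrt n)$). In the complementary regime, $n/(2 c_0 d \log n \log M) - 2$ is a genuine, comfortably large positive integer, so the sunflower and Bertrand steps inside \Cref{lem:zz-round-elim} go through and we obtain a $1$-round $M$-bounded $\ZZ$-LQP $\Upsilon$ for $\elemxquarter_m$, where $m/4$ is prime and $m \ge n/(2 c_0 d \log n \log M) - 2$. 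Because $m/4$ is prime, hence a prime power, \Cref{lem:int-quarter-oneround} applies to $\Upsilon$ and gives $\cost(\Upsilon) \ge 0.14\, m/\log M$.

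Next I would transfer $\cost(\Upsilon)$ back to $T$. Since $\Upsilon$ shadows $\Pi$ through a cost-preserving, injective protocol homomorphism $\varphi_\Upsilon$ sending the root of $\Upsilon$ to a child $u$ of the root $\rho$ of $\Pi$, a maximum-cost root-to-leaf path of $\Upsilon$ maps to a path of $\Pi$ from $u$ to a leaf of the same total cost; prepending $\rho$ exhibits a root-to-leaf path of $\Pi$ of cost $d + \cost(\Upsilon)$, so $T \ge d + \cost(\Upsilon)$. Combining the three bounds,
\[
  T \;\ge\; d + \frac{0.14}{\log M}\Bigl(\frac{n}{2 c_0 d \log n \log M} - 2\Bigr)
     \;=\; d + \frac{0.07\, n}{c_0\, d \log n \log^2 M} - \frac{0.28}{\log M}\,,
\]
and AM--GM applied to the first two terms yields $T \ge 2\sqrt{0.07\, n/(c_0 \log n \log^2 M)} - 0.28/\log M$. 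As $T$ is a positive integer the subtracted term is lower order (or simply absorbed once $n$ exceeds an absolute constant), so
\[
  T \;=\; \Omega\!\left(\sqrt{\frac{n}{\log n\,\log^2 M}}\right)
    \;=\; \Omega\!\left(\frac{\sqrt n}{\sqrt{\log n}\,\log M}\right)
    \;=\; \Omega\!\left(\frac{\sqrt n}{\log^{3/2}(nB)}\right),
\]
using $\sqrt{\log n}\cdot\log M = O\bigl(\log^{1/2}(nB)\cdot\log(nB)\bigr)$.

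The substantive content — extracting a $1$-round protocol for $\elemxquarter$ and proving that this restricted problem is itself hard, via the sunflower lemma and the Frankl--Wilson theorem — has already been done in the lemmas above. The only mildly delicate points in assembling them are (i) the cost accounting $T \ge d + \cost(\Upsilon)$ from the shadowing structure, and (ii) the case split on $d$, which is what guarantees that \Cref{lem:zz-round-elim} is actually applicable (its internal steps need $d$ small enough that the $A$-uniform family, and hence $m$, is a valid positive multiple of $4$). The remaining steps — the AM--GM balancing of $d$ against $n/(d\cdot\mathrm{polylog})$ and collapsing $\log n$, $\log M$, $\log(Bn)$ into one $\log^{3/2}(nB)$ factor — are routine.
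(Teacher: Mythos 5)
Your proposal is correct and follows essentially the same route as the paper: one application of \Cref{lem:zz-round-elim} to drop to a $1$-round $M$-bounded protocol for $\elemxquarter_m$, then \Cref{lem:int-quarter-oneround}, then balancing $d$ against $n/(d\,\mathrm{polylog})$ via AM--GM. Your extra case split on $d$ and the explicit cost accounting $T \ge d + \cost(\Upsilon)$ are just careful elaborations of details the paper leaves implicit, not a different argument.
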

\begin{proof}
Suppose that $\Pi$ is a deterministic 2-round $\ZZ_{[-B,B]}$-LQP for $\elemx_n$, whose root has cost $d_1$. By \Cref{lem:zz-round-elim}, there is a one round $O(nB)$-bounded protocol for $\elemxquarter_m$ with cost $d_2$. Combining the following three equations:
\begin{alignat*}{2}
    \cost(\Pi) &\ge d_1 + d_2 \\
    d_2 & \ge \frac{0.14 m}{\log M} &\qquad& \lhd \text{ from \Cref{lem:int-quarter-oneround}}\\
    m & \ge \frac{n}{2 c_0 d_1 \log n \log M} - 2 && \lhd \text{ from \Cref{lem:zz-round-elim}}
\end{alignat*}
gives
\[
  \cost(\Pi) \ge 0.19 \sqrt{\frac{n}{c_0 \log n \log^2 M}} - 2 = \Omega\left(\frac{\sqrt{n}}{\log^{3/2}(nB)}\right) \,.
  \qedhere
\]
\end{proof}

\section{Upper Bounds} \label{sec:upper-bounds}

For the sake of completeness, we provide details of the LQPs attaining various upper bounds referenced throughout the paper. For the most part, these upper bounds are simple observations or extensions of well-known existing results.

\subsection{Deterministic \emph{k}-round LQP for \elemx}

The following family of protocols works both when $D = \ZZ_{[0,1]}$ on the problem $\elemx_n$, and when $D = \ZZ_q$ on the problem $\elemxq_n$. The algorithm appears to be well known, and versions of it are described in Lemma 4.1 of \cite{AssadiCK20} and Section 2.2 of \cite{KarpUW88}.

Let $d_1, \ldots, d_{k}$ be a division sequence (see \Cref{lem:division-seq}) for $n$, which minimizes $\sum_{i=1}^{k} d_i$. \Cref{alg:det-protocol} makes no more than $d_r$ queries in each round $r$.

  \begin{algorithm*}[!ht]
  \caption{~Outline of deterministic query protocol on $\bz$}
  \label{alg:det-protocol}
    \begin{algorithmic}[1]
      \State $[u,v] \gets [1,n]$
      \For{$r = 1, \ldots, k $}
        \State Split the interval $[u,v]$ into $d_r+1$ intervals $J_1,\ldots,J_{d_r+1}$, each of size $\le \lceil \frac{v-u+1}{d_r + 1} \rceil$ \label{line:det-proto-split}
        \State Query with matrix $A \in D^{d_r \times n}$, where $A_{i,j}$ is $1$ if $j \in J_i$ and $0$ otherwise.
        \State If $A \bz$ is not all zero, let $i \in [d_r]$ be the index of any nonzero entry; otherwise, let $i = d_r + 1$.
        \State Update $[u,v] \gets J_i$.
      \EndFor
      \State Report $u$ as the index where $u \in Z$.
    \end{algorithmic}
  \end{algorithm*}

Since $d_1, \ldots, d_{k}$ is a division sequence for $n$, the final interval $[u,v]$ must have $u=v$. The total cost of the protocol is $\sum_{i=1}^{k} d_i$, which by \Cref{lem:division-seq} lies in the interval $[k (n^{1/k} - 1), k (\lceil n^{1/k} \rceil - 1)]$. Note that when $n = 2^k$, the algorithm cost is exactly $k$.

Write $\bone_{S}$ to denote the indicator vector in $D^n$ for a given set $S \subseteq [n]$. To prove that the algorithm is correct, it suffices to verify that $\bone^\txp_{[u,v]} \bz \ne 0$ in each round. Since $\bone^\txp \bz \ne 0$, this is true at the start. For any given round, the matrix $A$ queries $\bone_{J_1},\ldots,\bone_{J_{d_r}}$. Since $\bone_{[u,v]} = \sum_{i=1}^{d_r + 1} \bone_{J_i}$, and $\bone^\txp_{[u,v]} \bz \ne 0$, there must be some first index $i$ for which $\bone^\txp_{J_i} \bz \ne 0$. If $i < d_r + 1$, the index is shown in the query response; if $i = d_r + 1$, then no other intervals $J_h$ have $\bone^\txp_{J_h} \bz \ne 0$, so $A \bz$ is all zeros. In either case, the algorithm correctly identifies the interval $J_i$ for which $\bone^\txp_{J_i} \bz \ne 0$.

\subsection{Randomized 1-round LQP for \elemx}

The $\ell_0$-sampling algorithm from \cite{JowhariST11} relies on a standard result on the exact recovery of sparse vectors in $\RR^n$, which (paraphrasing) states that $O(s)$ $\RR$-linear queries suffice to exactly recover any $s$-sparse vector $\bv$ in $\RR^n$, or if $\bv$ is not sparse, say that the output is \textsc{dense} with high probability. The $\ell_0$-sampling algorithm then chooses subsets $\{T_i\}_{i=1}^{\lceil \log n \rceil}$ where each $T_i$ is uniformly randomly drawn from the set of all subsets of $[n]$ of size $2^i$. To obtain a constant final error probability, for each set $T_i$, the $\ell_0$-sampler runs the sparse recovery method on the coordinates given by $T_i$ with $s = O(1)$. The sampler then returns a random index from the first sparse recovery instance to successfully recover a nonzero vector. With high probability, at least one of the sets $T_i$ will contain fewer than $O(1)$ entries of $Z$, and the algorithm succeeds.

Recovering $s$-sparse vectors in $\{0,1\}^n$ is easier than recovering general $s$-sparse vectors in $\RR^{n}$ or $\ZZ^{n}$, so directly adapting \cite{JowhariST11}'s $\ell_0$-sampling algorithm to $\elemx$ means only $O(\log n)$ queries are needed for $\ZZ_{[-B,B]}$ with $B = O(\poly (n))$, and $O(\log ^2 n / \log q)$ for $\ZZ_q$. This follows from the costs of $s$-sparse recovery and detection with $D$-linear queries and $\b^n$, addressed in the following lemma. We spell out this result and its proof for the sake of completeness: though it may be folklore, it appears not to have been published in quite this form.

\begin{lemma}[Discrete $s$-sparse recovery]\label{lem:sparse-recovery}
  There exists a query matrix $H \in \ZZ_{[-B,B]}^{r \times n}$ for $r = O(s \log n / \log B)$ for which the query $H v$ returns a unique value for all $V \subseteq [n]$ with $|V| \le s$. The same holds true for $\ZZ_q$ with $r = O(s \log n / \log q)$.
\end{lemma}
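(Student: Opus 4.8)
The plan is to reduce exact recovery of sparse Boolean vectors to a well-known separation property: it suffices to exhibit a matrix $H$ such that $Hv \ne Hv'$ for any two distinct sets $V, V' \subseteq [n]$ with $|V|, |V'| \le s$. Equivalently, writing $u = v - v'$, we need $Hu \ne \bzero$ for every nonzero vector $u \in \{-1,0,1\}^n$ with at most $2s$ nonzero entries, i.e. $H$ has no $\{-1,0,1\}$-valued kernel vector of support size $\le 2s$. Once such an $H$ is in hand, exact recovery follows: given $Hv$ for an unknown $V$ with $|V| \le s$, the value $Hv$ uniquely determines $V$, so a (brute-force, or structured) decoder recovers it. (If one additionally wants a \textsc{dense}/\textsc{sparse} decision, append the single query $\bone^\txp$, which reveals $|V|$; if this exceeds $s$ we declare \textsc{dense}, otherwise the separation property guarantees unambiguous recovery. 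This adds only one row.)

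To build $H$ over $\ZZ_{[-B,B]}$, I would use the standard trick of packing many ``distinguishing'' scalar queries into one row, exploiting that coefficients up to $B$ let each row absorb roughly $\log_B(n^{O(s)})$ bits of information. Concretely, start from the classical fact that a $(2s)$-disjunct or, more simply, a Vandermonde-type construction over a prime field $\FF_p$ with $p = O(n)$ gives an $O(s) \times n$ matrix $M$ over $\{0,1,\ldots,p-1\}$ whose any $2s$ columns are linearly independent over $\FF_p$ (take $M_{i,j} = j^{i-1} \bmod p$ with $O(s)$ rows; any $2s$ columns form a nonsingular Vandermonde minor). Then reinterpret each entry of $M$ in base $B$: replace each row of $M$ by $\ceil{\log_B p} = O(\log n / \log B)$ rows over $\ZZ_{[0,B-1]}$ holding the base-$B$ digits, and let $H$ be the vertical stack, giving $r = O(s) \cdot O(\log n/\log B) = O(s\log n/\log B)$ rows. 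For inputs in $\b^n$ with $|V| \le s$, each linear form $\sum_{j\in V} H_{i,j}$ is a sum of at most $s$ digits, hence lies in $\{0,\ldots,s(B-1)\}$; knowing all these digit-sums and carrying out base-$B$ reconstruction recovers each coordinate of $Mv$ over $\ZZ$, hence over $\FF_p$ (as long as $s(B-1) < $ the modulus chunking is consistent — one chooses the digit grouping coarsely enough that no ambiguity arises, or equivalently pads $B$ by a constant factor so that $s$ digit-sums never overflow a block). Since any $2s$ columns of $M$ are independent over $\FF_p$, the map $v \mapsto Mv$ is injective on $\le s$-sparse Boolean vectors, so $v \mapsto Hv$ is too. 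For the $\ZZ_q$ case, the argument is even cleaner: take $p$ a prime power slightly larger than $n$, build the Vandermonde matrix over $\FF_p$, and then simulate each $\FF_p$-entry by $O(\log_q p) = O(\log n/\log q)$ rows over $\ZZ_q$ holding base-$q$ digits; the same injectivity argument applies, since the digit-sums over at most $s$ terms determine the $\FF_p$-coordinates provided we group digits in blocks large enough to avoid carry ambiguity (again a constant-factor adjustment).

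The main obstacle — and the only place that needs care — is the carry/overflow bookkeeping in the base-$B$ (resp. base-$q$) digit decomposition: a query $\sum_{j\in V} H_{i,j}$ returns the \emph{sum} of digits, not the concatenation, so one must ensure that these digit-sums still uniquely determine the original $\FF_p$-coordinates of $Mv$. The clean fix is to not split into single digits but into blocks of $\ceil{\log_B(s+1)} + 1$ digits at a time, so that $s$ copies of a block-value never overflow into the next block; this inflates $r$ only by a constant factor when $s \le \poly(n)$ and $B \ge 2$, preserving the stated bound. Everything else — the Vandermonde independence, the pigeonhole-free direct injectivity, and the counting $r = O(s\log n/\log B)$ — is routine.
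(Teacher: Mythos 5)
Your reduction to the separation property---$H$ must have no nonzero $\{-1,0,1\}$-valued kernel vector of support at most $2s$---is exactly the property the paper isolates (its ``full-$[-1,1]$-rank'' condition on every $2s$-column submatrix), but from there you diverge: you build $H$ explicitly from a Vandermonde matrix over $\FF_p$, $p=O(n)$, split into digit layers, whereas the paper just takes a uniformly random matrix over the domain $D$ and shows by a short counting/union-bound argument that with positive probability all $2s$-column submatrices have the property, already at $r=\lceil 2s\log(3n)/\log|D|\rceil$. Over $\ZZ_{[-B,B]}$ your route is correct, and in fact simpler than you make it: the digit-layer answers are exact integers, so $\sum_{j\in V}M_{i,j}$ is recovered as $\sum_t B^t$ times the $t$-th answer with no carry or overflow issue at all, and injectivity on $s$-sparse Boolean inputs follows from the Vandermonde minors mod $p$; the blocking fix you discuss is unnecessary there. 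What your route buys is an explicit matrix; what the paper's probabilistic route buys is a three-line proof that works uniformly for $\ZZ_{[-B,B]}$ and for $\ZZ_q$ with every $q\ge 2$.

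The genuine gap is in the $\ZZ_q$ half. There the query answers are reduced modulo $q$, so a digit-layer query returns only $\bigl(\sum_{j\in V}D_{i,j,t}\bigr)\bmod q$, not the integer digit-sum, and the Vandermonde injectivity mod $p$ says nothing about the mod-$q$ kernel of the stacked digit-layer matrix. Your proposed repair---group digits into blocks with enough headroom that $s$ terms never overflow---cannot be implemented in $\ZZ_q$: coefficients are capped at $q-1$ and every answer is returned mod $q$, so there is no headroom to reserve; already for $q\in\{2,3\}$ and $s\ge 2$ (and generally whenever $q\le s$) no digit base $\ge 2$ keeps $s$ digit-sums below $q$, yet the lemma is needed for arbitrary fixed $q$ and is stated for general $s$. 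An explicit construction can be salvaged for \emph{prime} $q$ by replacing integer digits with coordinates over an extension field $\FF_{q^c}$, $q^c>n$, where addition is coordinate-wise and carry-free---but that is a different argument from the one you wrote (your explicit worry about ``carry ambiguity'' shows you intend integer digits), it uses that $\ZZ_q$ is a field, and extending it to composite or prime-power $q$ while keeping $r=O(s\log n/\log q)$ is not routine. The paper's random-matrix argument avoids the issue entirely, since the bound $3^t/|D|^r$ on the failure probability holds verbatim for $D=\ZZ_q$ with arbitrary $q$.
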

\begin{proof}
  Call a matrix in $A \in D^{r \times t}$ full-$[-1,1]$-rank if there does not exist a nonzero vector $\bv \in \{-1,0,1\}^t$ for which $A \bv = \bzero$. If we choose a matrix $B \in D^{r \times t}$ uniformly at random, then it is full-$[-1,1]$-rank with probability $\ge 1 - 3^t/|D|^r$. One way to prove this is to consider columns the $\bb_1 \ldots \bb_t$ of $B$ one by one, and note that if each $\bb_i$ is not contained in the set $F_i := \{ \sum_{j=1}^{i-1} a_i \bb_i : a \in \{-1,0,1\}^{i-1}\} $, then $B$ has full-$[-1,1]$-rank. Since $B$ is chosen uniformly at random, each column is independent of the the earlier ones, so
  \begin{align*}
     \Pr[ \text{D doesn't have full $[-1,1]$-rank} ]
     & \le \sum_{i=1}^{t} \Pr[\bb_i \notin F_i ]
     \le \sum_{i=1}^{t} \frac{3^{i-1}}{|D|^r}
     \le \frac{3^t}{|D|^r} \,.
  \end{align*}

  Let $r$ be chosen later; if we pick $\hat{H} \in D^{ r \times n}$ uniformly at random, then the expected number of sets $T \subseteq [n]$ with $|T| = 2s$ for which $\hat{H}_T$ (the submatrix of $\hat{H}$ with columns in $T$) has full $[-1,1]$-rank is $\le \binom{n}{2s} 3^{2s} / |D|^n$. Letting $r = \lceil 2 s \log(3n) / \log(|D|) \rceil$ makes this less than $1$. Consequently, there must exist a specific matrix $H$ for which every such submatrix $H_T$ has full $[-1,1]$-rank. Then for any two distinct vectors $\bu,\bv \in \{0,1\}^n$ with $|U|,|V| \le s$, we cannot have $H \bu = H \bv$, because that would imply there exists $T \supseteq U \cup V$ with $|T|=2s$ for which $H_T (bu - bv) = 0$, contradicting the full $[-1,1]$-rank assumption.
\end{proof}

Detecting whether a $\{0,1\}^n$ vector is not $s$-sparse is also easier than in $\RR^n$. For $\ZZ_{[-B,B]}$-LQPs, querying with the vector $\bone \in \ZZ^{d}$ suffices. For $\ZZ_q$, because \Cref{lem:sparse-recovery} ensures that if a vector $\bz$ is s-sparse, it can be recovered exactly, it is enough to query $O(1)$ random vectors in $\ZZ_q^n$. Let $\br$ be such a random vector, and let $\bw$ be the $s$-sparse vector in $\{0,1\}$ recovered using $H$; if $\bz$ was s-sparse, then $\bz = \bw$ and $\br^\txp \bz = \br^\txp \bw$; otherwise, $\br^\txp \bz$ does not equal $\br^\txp \bw$ with probability $1 - 1/q$.

\section{Connections Between $\ZZ_2$-LQPs and Circuit Complexity} \label{sec:kw-lb}

A weaker version of \Cref{thm:z2-k-round} can be proven by combining existing results. As shown in the following lemma, a given $k$-round $\ZZ_2$-LQP $\Pi$ for $\elemxodd$ can be converted to a communication protocol $\Upsilon$ for the Karchmer-Wigderson game on $\textsc{parity}_n$, with the communication cost $C$ of $\Upsilon$ being $\le 2 \cost(\Pi)$. By a slight adaptation of the proof of Theorem 5 in~\cite{NisanW93}, we can convert $\Upsilon$ into an unbounded fan-in boolean formula with depth $k+1$ and no more than $2^C - 1$ AND/OR gates that computes $\textsc{parity}_n$. Relatively tight lower bounds on the size of such a formula date back to \cite{Hastad86}, but we use a result of \cite{Rossman15}, which says that a depth-$(k+1)$ unbounded fan-in formula computing $\textsc{parity}_n$ must have at least $2^{\Omega(k (n^{1/k}-1)}$ AND/OR gates. Thus $\cost(\Pi) \ge \frac{1}{2} C \ge \Omega(k (n^{1/k}-1))$.

\begin{lemma}
Consider the Karchmer-Wigderson game for $\textsc{parity}_n$, in which Alice has a set $X \in \b^n$ with $|X|$ even, and Bob has a set $Y \in \b^n$ with $|Y|$ odd, and they seek to identify an index $i\in[n]$ for which $\bx_i \ne \by_i$. Let $\Pi$ be a $k$-round $\ZZ_2$-LQP for $\elemxodd_n$; then there exists $\Upsilon$ a $k$-round communication protocol for this game, with cost $\le 2 \cost(\Pi)$.
\end{lemma}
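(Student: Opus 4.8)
The plan is to have Alice and Bob jointly simulate the protocol tree $\Pi$ on the ``difference'' input $\bz := \bx \oplus \by$, i.e.\ on the set $Z := X \triangle Y$. The first step is to check that $Z$ is a legal input for $\elemxodd_n$: since $|X|$ is even and $|Y|$ is odd, $|X \triangle Y| = |X| + |Y| - 2|X \cap Y| \equiv 1 \pmod 2$, so in particular $Z \neq \varnothing$ and $\bone^\txp \bz \ne 0$ over $\ZZ_2$. Moreover, an index $i$ lies in $Z$ exactly when $x_i \ne y_i$, so whatever index $\Pi$ outputs on $\bz$ is automatically a valid answer to the Karchmer--Wigderson game for $\textsc{parity}_n$.

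Next I would describe the simulation itself. Both players maintain a ``current node'' of $\Pi$, initialized to the root; this node will always be common knowledge. In round $r$, if the current node is $v$ with label $A_v \in \ZZ_2^{d_v \times n}$, Alice sends $A_v \bx$ and Bob sends $A_v \by$; these two messages may be exchanged simultaneously, since by the definition of a $k$-round LQP the matrix $A_v$ depends only on the measurements of rounds $1,\ldots,r-1$, which both players already know. By $\ZZ_2$-linearity, $A_v \bz = A_v(\bx \oplus \by) = A_v \bx \oplus A_v \by$, so each player can locally compute the round-$r$ measurement, and both advance along the unique edge labeled by it. After $k$ rounds the players reach a leaf $\ell$ of $\Pi$, and they both output its label $o_\ell$; correctness of $\Pi$ on $\bz$ (which satisfies the odd-cardinality promise) guarantees $o_\ell \in Z$, hence $x_{o_\ell} \ne y_{o_\ell}$, as required.

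For the cost, in the round at node $v$ the two players together send $d_v + d_v = 2\cost(v)$ bits, so the total communication on input $(X,Y)$ is $2 \sum_{v \text{ on } \Pi(\bz)} \cost(v) = 2\,\cost(\Pi;\bz) \le 2\,\cost(\Pi)$, and the protocol uses exactly $k$ rounds, matching the depth of $\Pi$.

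I do not expect a genuine obstacle here; the whole argument is linearity over $\ZZ_2$ plus a transcript-tracking simulation. The two points that require a little care are (i)~confirming the parity of $|X \triangle Y|$ so that the $\elemxodd$ promise is actually met, and (ii)~being precise about the notion of ``round'': each round of $\Pi$ becomes a single simultaneous-message exchange, so the result is a bona fide $k$-round communication protocol rather than a $2k$-message sequential one.
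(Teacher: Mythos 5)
Your simulation is the same core idea as the paper's: run $\Pi$ on $\bz = \bx \oplus \by$, note that the even/odd promise makes $|Z|$ odd so the \elemxodd promise is met, use $\ZZ_2$-linearity to split each measurement as $A_v\bx \oplus A_v\by$, and pay at most a factor of $2$ in cost. The one place where you diverge from the paper — and where there is a real issue — is exactly the point you flag as (ii) and then resolve by fiat: you declare that each round of $\Pi$ becomes a single \emph{simultaneous} exchange, with both players transmitting in the same round. In the standard two-party model relevant here (and in particular for the downstream use of this lemma, where a $k$-round protocol is converted via Nisan--Wigderson into a depth-$(k+1)$ unbounded fan-in formula, with each round corresponding to one player speaking and hence one layer of gates), a round consists of a single message from one player. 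Serializing your simultaneous exchange na\"ively gives $2k$ messages, which would only yield a depth-$(2k+1)$ formula and hence a weaker bound, so the claim ``$k$-round protocol'' is not justified in the model that matters.

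The paper avoids this with a pipelining trick that you should adopt: the players alternate, and each message carries \emph{two} measurements — the sender's share of the measurement at the node just resolved (so the other player can catch up and learn which child was reached) together with the sender's share of the measurement at the new current node. Concretely, Alice sends $A_\rho\bx$; Bob, who can compute $A_\rho\by$ himself, learns the child $\nu$ and replies with $A_\rho\by$ and $A_\nu\by$; Alice then learns $\nu$, computes the next measurement, and replies with $A_\nu\bx$ and $A_\mu\bx$; and so on. This keeps the protocol at $k$ alternating messages, and since each node's cost is charged at most twice, the total communication is still $\le 2\cost(\Pi)$. With this scheduling fix, the rest of your argument (correctness on the lifted input and the cost accounting) goes through unchanged.
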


\begin{proof}
Let $\rho$ be the root of $\Pi$, with label $A_\rho \in \ZZ_2^{d_\rho \times n}$. In the first round of $\Upsilon$, Alice sends $A_\rho \bx$ to Bob. Then Bob computes $A_\rho \by$, and uses Alice's message to determine $\br_1 = A_\rho (\bx + \by)$. The value $\br_1$ determines a child node $\nu$ of $\rho$. If this is a leaf, Bob outputs its label $o_\nu$. Otherwise, in the second round, Bob sends both $A_\rho \by$ and $A_\nu \by$ to Alice. Given $A_\rho \by$, Alice can determine $\nu$, and compute $A_\nu \bx$. With this, Alice can compute $\br_2 = A_\nu (\bx + \by)$, and identify the child node $\mu$ of $\nu$. If this is a leaf, Alice outputs $o_\mu$; otherwise, in the third round, Alice sends $A_\nu \bx$ and $A_\mu \bx$ to Bob; the players continue in this fashion until a leaf is reached and the protocol ends; since $\Pi$ has depth $k$, this takes at most $k$ rounds.

This protocol is correct, because it finds the leaf of $\Pi$ associated to the input $\bx + \by$. Since we are promised $\bx$ has even parity, and $\by$ odd, $\bx + \by$ has odd parity and thus fulfills the condition under which a protocol for $\elemxodd_n$ must be correct. The output value is an index $i$ where $\bx_i + \by_i = 1$, hence where $\bx_i \ne \by_i$, as required for the communication game.

Since $A_\rho \bx \in \ZZ_2^{d_\rho}$, the round first message uses exactly $d_\rho$ bits. The second, $d_\rho + d_\nu$, the third, $d_\nu + d_\mu$, and so on. The communication needed on inputs ($\bx,\by$) is thus at most twice $\cost(\Pi,\bx+\by)$, so the worst-case communication cost of $\Upsilon$ is at most $2 \cost(\Pi)$.
\end{proof}

\section{Appendix} \label{sec:appendix}

The following estimate was used in \Cref{sec:zq-linear-measurements} during calculations in the proof of our $\ZZ_q$-LQP lower bound.

\begin{lemma}\label{lem:knwk-lb}
Let $C,D$ be constants with $2 C \le D$ and $D \ge 1$. Then
\begin{align}\label{eq:knwk-lb}
    \max\left( \frac{\ln n}{C},k \left(\frac{1}{D} n^{1/k} -1 \right) \right) \ge \frac{1}{D (1 + C)} k \left( n^{1/k} -1 \right) \,.
\end{align}

\begin{proof}
Let $\gamma_n(k) = k(n^{1/k} - 1)$. We have $k \left(\frac{1}{D} n^{1/k} -1 \right) \ge \frac{1}{D} \gamma_n(k) - k$. Since $\gamma_n(k)$ is decreasing, let $k_\star$ be the unique solution to $\frac{1}{D} \gamma_n(k_\star) = \frac{1}{C}\ln n$. Since $\gamma_n(\ln n) = (e-1)\ln n \le 2\ln n \le \frac{D}{C} \ln n$, it follows $k_\star \le \ln n$. Let $k_\dagger$ be the unique solution to $\frac{1}{D} \gamma_n(k_\dagger) - k_\dagger = \frac{1}{C}\ln n$. Since $k_\dagger \le k_\star$, $k_\dagger \le \ln n$ as well. Evaluating the right hand side of \cref{eq:knwk-lb} at $k_\dagger$ gives:
\begin{align*}
    \frac{1}{D (1+C)} \gamma_n(k_\dagger) &= \frac{1}{D (1+C)} \left(\frac{D \ln n}{C} + D k_\dagger \right) \\
            &\le \frac{1}{D (1+C)} \left(\frac{D \ln n}{C} + D \ln n \right) \\
            & = \frac{\ln n}{C} = \frac{1}{D} \gamma_n(k_\dagger) - k_\dagger \,.
\end{align*}
Because the derivative of $\frac{1}{D (1+C)} \gamma_n(k)$ is less that of $\frac{\ln n}{C}$ when $k \ge k_\dagger$, and greater than that of $\frac{1}{D} \gamma_n(k) - k$ for $k \le k_\dagger$, we can extend this inequality to all $k \in (0,\infty)$, proving \cref{eq:knwk-lb}.
\end{proof}

\end{lemma}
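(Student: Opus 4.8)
The plan is to prove the inequality by a crossover argument, tracking three functions of $k$ (treated as a positive real variable, so that the intermediate value theorem applies): the constant $A_1 := \ln n / C$, the left-hand ``adaptive'' term, and the right-hand target $R(k) := \tfrac{1}{D(1+C)}\gamma_n(k)$, where $\gamma_n(k) = k(n^{1/k}-1)$. First I would record the two facts about $\gamma_n$ that drive everything: it is continuous and \emph{strictly decreasing} on $(0,\infty)$, with $\lim_{k\to 0^+}\gamma_n(k) = \infty$ and $\lim_{k\to\infty}\gamma_n(k) = \ln n$. (Monotonicity is exactly the derivative fact $\frac{d}{ds}[s(n^{1/s}-1)] \le 0$ already used elsewhere in the paper; via the substitution $t = 1/k$ it reduces to $(e^{t\ln n}-1)/t$ being increasing in $t$, a one-line convexity check, which also gives the limits.) Next I would simplify the left-hand side: since $D \ge 1$, $k\big(\tfrac1D n^{1/k} - 1\big) = \tfrac1D\gamma_n(k) - k\tfrac{D-1}{D} \ge \tfrac1D\gamma_n(k) - k$, and I set $A_2(k) := \tfrac1D\gamma_n(k) - k$. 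Thus it suffices to show $\max(A_1, A_2(k)) \ge R(k)$ for all $k>0$. Note $A_2$ is continuous and strictly decreasing from $+\infty$ to $-\infty$, while $R$ is strictly decreasing from $+\infty$ down to $\ln n/(D(1+C))$.

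The heart of the argument is to locate the crossover point $k_\dagger$, the unique solution of $A_2(k_\dagger) = A_1$, i.e.\ $\tfrac1D\gamma_n(k_\dagger) - k_\dagger = \ln n / C$ (existence and uniqueness from continuity and strict monotonicity of $A_2$), and to verify the desired inequality there. The key estimate is $k_\dagger \le \ln n$: using $\gamma_n(\ln n) = (e-1)\ln n$ together with $e - 1 < 2 \le D/C$, one gets $A_2(\ln n) = \ln n\big(\tfrac{e-1}{D} - 1\big) < \ln n / C = A_1 = A_2(k_\dagger)$, so strict monotonicity of $A_2$ forces $k_\dagger < \ln n$. (The paper's variant routes through an auxiliary point $k_\star$ solving $\tfrac1D\gamma_n(k_\star) = \ln n/C$ and uses $A_2 \le \tfrac1D\gamma_n$; either works.) Granting $k_\dagger \le \ln n$, I would read off $\gamma_n(k_\dagger) = \tfrac{D\ln n}{C} + Dk_\dagger$ from the defining equation and compute
\[
  R(k_\dagger) = \tfrac{1}{D(1+C)}\big(\tfrac{D\ln n}{C} + Dk_\dagger\big) = \tfrac{1}{1+C}\big(\tfrac{\ln n}{C} + k_\dagger\big) \le \tfrac{1}{1+C}\big(\tfrac{\ln n}{C} + \ln n\big) = \tfrac{\ln n}{C} = A_1 = A_2(k_\dagger),
\]
so the inequality holds at $k = k_\dagger$ (indeed with equality of $A_1$ and $A_2$ there, giving slack).

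Finally I would extend from $k_\dagger$ to all $k$ by monotonicity, splitting at $k_\dagger$. For $k \ge k_\dagger$: $R$ is decreasing, so $R(k) \le R(k_\dagger) \le A_1 \le \max(A_1, A_2(k))$. For $k \le k_\dagger$: set $\Delta(k) := A_2(k) - R(k) = \tfrac{C}{D(1+C)}\gamma_n(k) - k$; since $\gamma_n$ is decreasing, $\Delta$ is strictly decreasing, and $\Delta(k_\dagger) = A_1 - R(k_\dagger) \ge 0$, hence $\Delta(k) \ge 0$ for all $k \le k_\dagger$, i.e.\ $R(k) \le A_2(k) \le \max(A_1, A_2(k))$. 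Combining the two ranges proves $\max(A_1, A_2(k)) \ge R(k)$ everywhere, hence the lemma. The only edge case is $n = 1$ (more generally $\ln n \le 0$), where both sides vanish and the claim is trivial, so one assumes $\ln n > 0$ throughout.

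I expect the main obstacle to be pinning down the estimate $k_\dagger \le \ln n$ cleanly: this is the one place where the hypotheses $2C \le D$ and $D \ge 1$ are genuinely consumed, and it relies on the exact identity $\gamma_n(\ln n) = (e-1)\ln n$ and the numerical slack $e - 1 < 2$. Everything else is routine monotonicity bookkeeping, but one must be careful that $k$ is allowed to range over the reals (so $k_\dagger$ is well defined) and that each auxiliary function ($\gamma_n$, $A_2$, $R$, $\Delta$) is genuinely monotone in $k$, for which the $t = 1/k$ substitution and convexity of $e^{at}$ are the cleanest tools.
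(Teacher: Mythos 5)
Your proposal is correct and follows essentially the same route as the paper's proof: lower-bound the second term by $\tfrac1D\gamma_n(k)-k$, locate the crossover $k_\dagger$, bound $k_\dagger\le\ln n$ using $\gamma_n(\ln n)=(e-1)\ln n$ together with $2C\le D$, verify the inequality at $k_\dagger$, and extend to all $k$ by monotonicity. Your only deviations are cosmetic streamlinings (evaluating $A_2$ at $\ln n$ directly instead of introducing the auxiliary point $k_\star$, and replacing the paper's derivative comparison with monotonicity of the difference $\Delta$), both of which are valid.
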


\bibliographystyle{alpha}
\bibliography{refs}

\end{document}